\pgfplotsset{compat=1.14}
\newcommand{\cm}{\ding{51}}
\newcommand{\xm}{\ding{55}}
\DeclareMathOperator*{\E}{\mathbb{E}} 
\DeclareMathOperator{\chain}{\text{chain}} 
\DeclareMathOperator{\UMMB}{\text{U-MMB}} 
\DeclareMathOperator{\FMMB}{\text{F-MMB}} 
\DeclareMathOperator{\MMB}{\text{MMB}} 
\DeclareMathOperator{\UMMR}{\text{U-MMR}}
\DeclareMathOperator{\FMMR}{\text{F-MMR}}
\DeclareMathOperator{\MMR}{\text{MMR}} 
\newtheorem{theorem}{Theorem}
\newtheorem{corollary}[theorem]{Corollary}
\newtheorem{lemma}[theorem]{Lemma}
\title{The Merkle Mountain Belt} 
\author{
  Alfonso Cevallos \\
  Cryp GmbH, Switzerland \\
  \texttt{alfonsoc@gmail.com} \\
   \And 
 Robert Hambrock \\
 Cryp GmbH, Switzerland \\
 \texttt{robert@hambro.ch}
   \And
 Alistair Stewart \\
  Web 3.0 Technologies Foundation, Switzerland \\
  \texttt{alistair@web3.foundation} \\
}
\begin{document}
\maketitle

\begin{abstract}
\emph{Merkle structures} are widely used as commitment schemes: 
they allow a prover to publish a compact commitment to an ordered list $X$ of items, 
and then efficiently prove to a verifier that $x_i\in X$ is the $i$-th item in it. 
We compare different Merkle structures and their corresponding properties as commitment schemes 
in the context of blockchain applications. 
Our primary goal is to speed up \emph{light client protocols} so that, e.g., a user can verify a transaction efficiently from their smartphone. 

For instance, the \emph{Merkle Mountain Range} (MMR) yields a \textbf{succinct} scheme: a light client synchronizing for the first time can do so with a complexity sublinear in $|X|$. 
On the other hand, the \emph{Merkle chain}, traditionally used to commit to block headers, is not succinct, 
but it is \textbf{incremental} --~a light client resynchronizing frequently can do so with constant complexity~-- and \textbf{optimally additive} --~the structure can be updated in constant time when a new item is appended to list $X$. 

We introduce new Merkle structures, most notably the \emph{Merkle Mountain Belt} (MMB),  
the first to be simultaneously succinct, incremental and optimally additive. 
A variant called $\UMMB$ is also \textbf{asynchronous}: a light client may continue to interact with the network even when out of sync with the public commitment. 
Our Merkle structures are slightly unbalanced, so that items recently appended to $X$ receive shorter membership proofs than older items. 
This feature reduces a light client's expected costs, in  applications where queries are biased towards recently generated data. 


\end{abstract}

\section{Introduction}

Blockchain technology has gained considerable interest in both academia and industry. 
It allows for the first time for end-to-end transfers of value on a global scale without the need for trusted intermediaries. 
This technology has also sparked renewed interest in cryptographic primitives such as zero-knowledge proofs, signature schemes, and commitment schemes. 

However, in its basic architecture, a public blockchain is trustless only for \emph{full nodes}, i.e., participants who are permanently online, store the blockchain state in full and follow the consensus protocol. 
The trustless property is typically lost for \emph{light clients}, i.e., participants who are seldom online
and have limited capabilities in terms of bandwidth, computation and memory. 
A light client could represent a user's personal device, an Internet of Things (IoT) device, or a smart contract connected via a cross-chain bridge to a different network. 

For users, this presents a dilemma between safety and convenience. 
For example, remote procedure call (RPC) services, which are centralized intermediaries that relay specific data from a blockchain network under request, often do not provide their users with any means to authenticate the data delivered. 
Consequently, if this service is compromised, the user of a self-custody wallet might be shown inaccurate balances and other deceptive details, potentially leading the user to make transactions they would otherwise avoid, or avoid transactions they would otherwise make~\cite{Sorgente}. 

Similarly, many cross-chain communication bridges are maintained by centralized entities that, again, need to be trusted by their users. 
Unsurprisingly, in recent years cross-chain bridges have been targets of numerous attacks; see surveys \cite{lee2023sok, belenkov2025sok}. 
As per a 2022 report by Chainalysis~\cite{Chainalysis}, these bridge attacks account for over 60\% of all funds stolen in cryptocurrency hacks, leading to losses that surpass 2 billion USD so far.

To address these issues, several light-client protocols (LCPs) have recently been proposed \cite{kiayias2020non, bunz2020flyclient, agrawal2023proofs, bhatt2025trustless}, both for proof-of-work (PoW) and proof-of-stake (PoS) blockchains. 
An LCP should provide light client Alice with efficient ways to 
a)~synchronize with the network and, in particular, identify a public commitment (or digest) of its current state, and then 
b)~retrieve some specific data of interest from an untrusted full node Bob, and verify its authenticity against said commitment. 
These protocols require Bob to maintain \emph{commitment schemes}~\cite{benaloh1993one}, such as \emph{Merkle structures} (defined below) over different data sets of the network. 

In this paper we explore the design space of Merkle structures as commitment schemes for blockchain applications, and identify several desirable properties that help speed up LCPs. 
We then propose new structures that achieve, for the first time, all the identified properties simultaneously. 
In particular, our $\MMB$ structure combines the best features of an $\MMR$ and a Merkle chain, which are commonly used in blockchain (and defined below); see Table~\ref{tab:abstract}. 

\begin{table}[htb]
    \centering
    \begin{tabular}{|l|c|c|c|}
    \hline
    \textbf{} & \textbf{chain} & \textbf{MMR} & \textbf{MMB} \\
    \hline 
    Update time per item append & $O(1)$ & $O(\log n)$ & $O(1)$ \\
    Proof size of $k$-th most recent item & $O(k)$ & $O(\log n)$ & $O(\log k)$ \\
    \hline
    \end{tabular}
    \caption{Comparison of MMB to the Merkle chain and to MMR~\cite{MMR}, assuming $1\leq k\leq n=|X|$.}
    \label{tab:abstract}
\end{table}

\vspace{5mm}

While a \emph{Merkle tree} \cite{merkle1989certified, camacho2012strong} is assumed to be balanced, a \emph {Merkle structure} is a generalization composed of one or more possibly unbalanced trees. 
Formally, given a list $X=(x_1, \cdots, x_n)$ and public hash function $H(\cdot)$, a Merkle structure is a collection of one or more rooted binary trees where each leaf $h_i=H(x_i)$ is identified with the hash (evaluation) of an item $x_i$ in~$X$, each non-leaf node is identified with the hash of the concatenation of its two children, and finally the list of root hashes acts as a public commitment $\langle X\rangle$ to list $X$. 
As a commitment scheme, it allows a prover to build a \emph{membership proof} $\pi_{x_i\in X}$ for any item $x_i$, which a verifier can compare against commitment $\langle X\rangle$ to attest that $x_i$ is the $i$-th item in $X$; see Figure~\ref{fig:membership}. 

\begin{figure}[htp]
\centering
\includegraphics[width=0.85\textwidth]{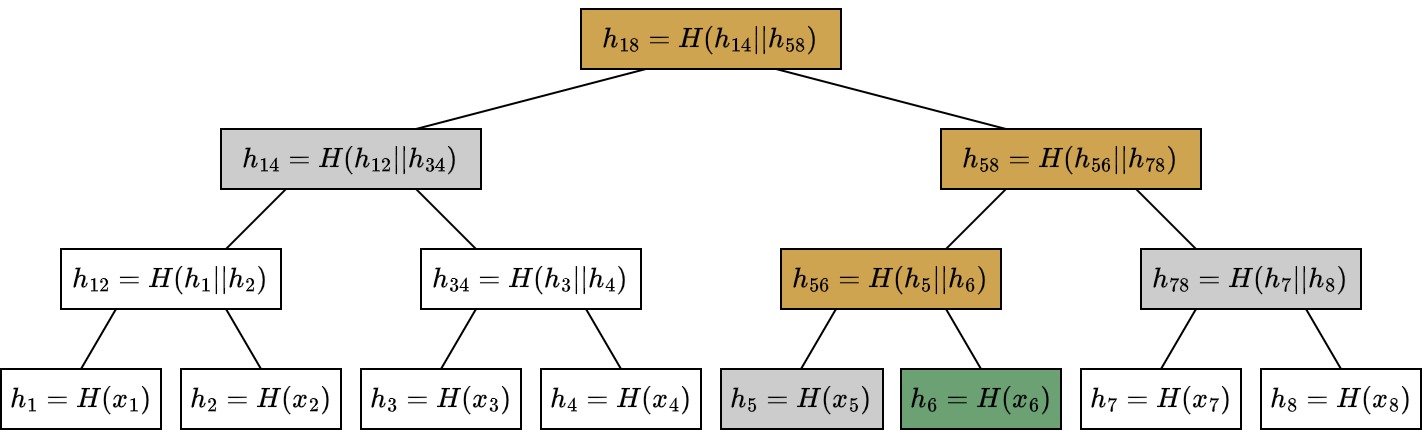}
\caption{A Merkle tree for list $X=(x_1, \cdots, x_8)$ and hash function $H(\cdot)$. 
For item $x_6$, its leaf $h_6$ is in green, and its membership proof is 
$\pi_{x_6 \in X}=((h_5, \text{left}), (h_{78}, \text{right}), (h_{14}, \text{left}))$, 
formed by the leaf's ancestors' siblings, in gray. 
From such an item and list, a verifier can reconstruct the actual ancestors, in brown, and ultimately the root, which should match public commitment $\langle X\rangle$.}
\label{fig:membership}
\end{figure}

The size of proof $\pi_{x_i\in X}$ is proportional to the distance from leaf $h_i$ to its root, so it is $O(\log n)$ in the case of a balanced Merkle tree, where $n=|X|$ is the list size. 
For instance, such a tree is used to commit to the \emph{static} set of transactions contained in a block, and its root hash is included in the block header. 
This way, Alice can retrieve from Bob a single transaction, along with a logarithmic-sized proof, and authenticate it against the header. 

However, in this work we focus on \emph{dynamic} applications, with an append-only list $X$ 
that is continually growing. 
A prime example is committing to the list of all block headers, from genesis to the latest authored block,  
while other blockchain applications include committing to the list of transaction outputs (TXOs) in a UTXO-based blockchain~\cite{chen2020minichain}, 
and the list of previous validator committees in a PoS blockchain~\cite{agrawal2023proofs}.
For such applications, we study \emph{additive} commitment schemes \cite{baldimtsi2017accumulators, icissp21}: 
those that allow for efficient item appends.

The \emph{Merkle Mountain Range} (MMR)~\cite{MMR} has gained popularity as an additive scheme with $O(\log n)$ time appends and $O(\log n)$ sized proofs. 
It is at the core of many recent \emph{succinct} (or superlight) LCPs \cite{bunz2020flyclient, moshrefi2021lightsync, lan2021horizon, agrawal2023proofs, tas2024light, bhatt2025trustless}: those where light clients' operations have a sublinear complexity relative to the size of the queried data set. 
In turn, we say that a commitment scheme is \textbf{succinct} if any verifier's query has a complexity sublinear in $n$. 
We formalize this definition later in this section, but note here that all LCPs mentioned above continue to be succinct if they switch from MMR to any other succinct Merkle structure.

However, the \emph{Merkle chain} remains the most widely used scheme for committing to the list of block headers in a blockchain. 
\emph{Indeed, it is at the origin of the term ``blockchain''.} 
The Merkle chain corresponds to a maximally unbalanced tree, and is therefore far from succinct; see Figure~\ref{fig:chain}. 
It is used by most networks, including Bitcoin~\cite{nakamoto2019bitcoin} and Ethereum~\cite{wood2014ethereum}, and is at the core of Simple Payment Verification (SPV)~\cite{nakamoto2019bitcoin}, an LCP proposed in the original Bitcoin paper that requires a light client to download the headers of all existing blocks. 
As of 2025, an Ethereum SPV client would need to download more than 10~GB of data in headers, which makes it unfit for a mobile application.%
\footnote{As of late 2025, there are over 23 million blocks produced on Ethereum. 
Their headers are of variable size, but we estimate their average size to be at least $0.5$ KB.} 

\begin{figure}[htp]
\centering
\includegraphics[width=0.65\textwidth]{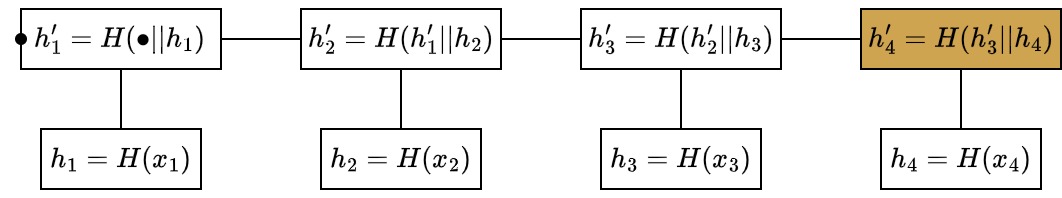}
\caption{In this Merkle chain, the root is in brown, and symbol $\bullet$ visually represents a non-existent child, and a default hash value for it. 
A new item $x_5$ can be appended to $X$ in constant time: add a new leaf $h_5=H(x_5)$, and create a new root $h'_{5}=H(h'_{4} || h_5)$ as the parent of $h'_{4}$ and $h_5$.}
\label{fig:chain}
\end{figure}

Why does the Merkle chain continue to be popular, despite its shortcomings? 
A key advantage is that it yields an \textbf{optimally additive} scheme: 
it takes constant time to update after an item append.  
This is important for full nodes, who update the scheme continually. 
In particular, some LCPs~\cite{bunz2020flyclient, moshrefi2021lightsync} require each new block header to contain the latest commitment to all previous headers, so this update must be executed during block production. 

Let $X_n$ be the state of list $X$ when it contains $n$ items, 
and let a \emph{$k$-increment} 
be the event that $k$ items are appended to $X$, so its state changes from $X_m$ to $X_n$ with $k=n-m$. 
The~Merkle chain is also \textbf{incremental}: if light client Alice goes offline during a $k$-increment, the cost she incurs when synchronizing again depends \emph{only on~$k$ and not on~$n$}. 
This property ensures that if Alice resyncs often enough, her cost per sync remains bounded by a constant. 

An incremental scheme is also \emph{checkpoint friendly}: 
a checkpoint is a commitment $\langle X_m \rangle$ to a recent state $X_m$ that Alice can retrieve from a trusted source when she first joins the network. 
This property allows her to synchronize from this checkpoint to the latest state $X_n$ much faster than if she did it from genesis. 
For example, it has been proposed~\cite{tas2023bitcoin} that PoS networks may speed up LCPs by periodically time-stamping such checkpoints onto Bitcoin.


Finally, we remark that some LCPs~\cite{bunz2020flyclient, bhatt2025trustless} require schemes with \textbf{increment proofs}:%
\footnote{Also called incremental proofs~\cite{crosby2009efficient}, consistency proofs~\cite{meiklejohn2020think}, subtree proofs~\cite{bunz2020flyclient}, prefix proofs~\cite{tyagi2022versa} and ancestry proofs~\cite{bhatt2025trustless}. 
This notion seems to have a universal appeal within additive commitment schemes.} 
if Alice goes offline during a $k$-increment, so that the list state has changed from $X_m$ to $X_n$, she can request an increment proof $\pi_{X_m\prec X_n}$ from full node Bob, and verify it against both commitments $\langle X_m \rangle$ and $\langle X_n \rangle$ to attest that $X_n$ indeed resulted from a $k$-increment over~$X_m$, or in other words, that $X_m$ is a \emph{list prefix} of~$X_n$ ($X_m\prec X_n$). 
Such a proof allows her to audit Bob, and make sure he has not made changes to old items while she was offline. 

\subsection*{Our contributions}

We consider two key roles related to any commitment scheme: \emph{participants} who resynchronize periodically after offline periods, request and verify proofs, and possibly store and update proofs as well, and a \emph{manager} who performs appends, keeps the scheme updated, and provides proofs under request. 
We have argued that, for the purpose of committing to an append-only list $X$ in a blockchain, and enabling efficient LCPs, a commitment scheme should ideally be able to issue membership and increment proofs, and observe the following properties:

\begin{itemize} 
    \item \textbf{Succinctness:} For any participant, each of the following operations has a communication and computational complexity (CCC) polylogarithmic in $n=|X|$: 
    \begin{itemize}
        \item retrieving the current commitment $\langle X_n \rangle$,
        \item retrieving and verifying any membership proof $\pi_{x_i\in X_n}$, $1\leq i\leq n$ and
        \item retrieving and verifying any increment proof $\pi_{X_m\prec X_n}$, $1\leq m< n$.
    \end{itemize}
    \item \textbf{Incrementality:} For a participant resynchronizing from $X_m$ to $X_n$ with $k=n-m$, each of the following operations has a CCC that depends \emph{only} on $k$ and not on $n$: 
    \begin{itemize} 
        \item updating their copy of the commitment from $\langle X_m \rangle$ to $\langle X_n \rangle$, 
        \item retrieving and verifying the increment proof $\pi_{X_m\prec X_n}$,
        \item retrieving and verifying a new membership proof $\pi_{x_i\in X_n}$, $m<i\leq n$, and
        \item updating and verifying an old membership proof from $\pi_{x_i\in X_m}$ to $\pi_{x_i\in X_n}$.
    \end{itemize}
    \item \textbf{Optimal additivity:} Appending an item, and performing the corresponding update to the scheme and its commitment $\langle X \rangle$, takes the scheme manager $O(1)$ time.
\end{itemize}

Our contributions include formalizing increment proofs for any append-only Merkle structure --~which we do in Section~\ref{s:increment}~-- as well as introducing the incrementality property above where, in particular, we establish four basic \emph{resync operations} widely used in LCPs. 
While the Merkle chain observes the last two properties above, it fails the first one, and vice-versa for MMR. 
This poses a dilemma for protocol designers in terms of choice of Merkle structure. 
We solve this dilemma with the Merkle Mountain Belt (MMB): 

\begin{theorem}\label{thm:MMB}
MMB is an incremental, succinct and optimally additive commitment scheme. 
For $n=|X|$, it produces a constant-sized commitment, the manager needs $O(\log n)$ memory and $O(1)$ time per append, 
and a participant needs $O(\log n)$ memory and $O(\log k)$ CCC to perform any of the four basic resync operations above after a $k$-increment. 
\end{theorem}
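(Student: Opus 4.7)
The plan is to decompose Theorem~\ref{thm:MMB} into its four quantitative claims -- optimal additivity, constant-sized commitment, manager memory, and the participant's $O(\log k)$ CCC for each of the four resync operations -- and verify each directly against the MMB construction given earlier. The natural order is to first pin down the static structural facts about MMB at each state $X_n$ (how many active internal nodes there are, what the commitment looks like, what the path from each leaf to the commitment looks like), and only then analyze the dynamic operations (append, and the four resync operations) by reading off the pieces of the structure that are touched.

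First I would verify the manager's side: the constant-sized commitment follows from MMB being crowned by a single top hash (the tip of its overlaid chain over mountains), and the $O(\log n)$ memory follows from counting the active mountain roots together with the chain nodes linking them, which grow logarithmically in $n$. For optimal additivity, I would exhibit the explicit append procedure and verify that it performs only a constant number of hash evaluations and pointer updates per call, rather than the $O(\log n)$ that a naive MMR append would cost.

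Next I would handle the participant's four resync operations under a $k$-increment from $X_m$ to $X_n$. The common strategy is to show that every resync task only touches the part of the structure built from the $k$ most recent items plus the $O(\log k)$ chain or peak nodes that connect this region to the top commitment. Concretely, (i) updating $\langle X_m\rangle$ to $\langle X_n\rangle$ requires only the $O(\log k)$ new/changed top-level nodes; (ii) the increment proof $\pi_{X_m\prec X_n}$ consists of the siblings along those $O(\log k)$ changed paths; (iii) a fresh membership proof $\pi_{x_i\in X_n}$ with $m<i\leq n$ concerns an item whose rank from the right is at most $k$, so by the unbalanced-mountain layout its leaf sits at depth $O(\log k)$ from the commitment; and (iv) refreshing an old proof $\pi_{x_i\in X_m}$ to $\pi_{x_i\in X_n}$ only requires replacing the $O(\log k)$ topmost sibling hashes on its path, since everything inside its own mountain is untouched. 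Succinctness is then the special case $k = n$, giving polylogarithmic CCC in $n$ for any single query.

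The main obstacle I foresee is the \emph{worst-case} $O(1)$ append bound, since a standard MMR requires $\Theta(\log n)$ hash evaluations whenever the new index $n$ triggers cascading peak merges. I expect MMB to sidestep this in one of two ways: either (a) the chain overlay is structured so that an append is always one new leaf plus one new chain link, deferring any peak merging to a virtual/lazy layer used only at query time, or (b) the eager merging work of an MMR is deamortized across future appends via an auxiliary queue, so that at most a constant amount of merging is performed per call while preserving the invariants needed for membership and increment proofs. Either way, the careful point is to argue that the commitment produced after a constant-work append still binds to $X_n$ and is consistent with the proof shapes used above; once that is checked, the four itemized bounds of the previous paragraph combine to give the stated complexities.
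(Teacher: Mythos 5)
Your decomposition of the theorem into its four quantitative claims, and your observation that succinctness is essentially the case $k=n$ of incrementality, both match the paper's overall plan. But there is a genuine gap at the heart of the argument: you never identify the mechanism that actually makes the two hardest claims true, namely the worst-case $O(1)$ append and the $O(\log k)$ cost of updating an old membership proof. You model MMB as mountains plus a single ``chain overlay,'' which is the paper's F-MMB, and for that structure both claims \emph{fail}: a lazy merge can occur arbitrarily far from the right end of the mountain list, and with a single forward-bagged chain on top, that one merge forces recomputation of every bagging node to its right --- $\Theta(\log n)$ hash evaluations in the worst case, and $\Theta(\log n)$ changed hashes in the proofs of old leaves. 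Neither of your two conjectured escapes is what MMB does: the merging is not deferred to query time (exactly one merge is executed per append, as in U-MMB), and there is no deamortization queue. The actual fix is structural: the mountains are partitioned into \emph{ranges} whose boundaries are placed precisely at mergeable pairs (and at height gaps of two), each range is forward-bagged separately, and the ranges are forward-bagged again into a belt. The paper's Lemma~\ref{lem:close} then shows that the merge peak always sits at the right end of its range and in one of the two rightmost ranges, so only a constant number of bagging nodes (one range node and one belt node above the merge peak, and the same above the new leaf) ever need updating --- this is the whole content of the $O(1)$ append bound (Lemma~\ref{lem:hash-d}).

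The same omission undermines your resync analysis. Your claims (i)--(iv) are the right statements, but asserting that a proof update ``only requires replacing the $O(\log k)$ topmost sibling hashes'' presupposes exactly what must be proved: that the bagging layers do not propagate an old, high merge into $\omega(\log k)$ changed nodes on an arbitrary leaf's path. The paper establishes this via Lemma~\ref{lem:merge-leaf} (the merge range and the leaf range stay equal or adjacent throughout the increment, with only $O(\log k)$ new peaks to the right of the unchanged prefix) together with a three-case analysis anchored at the unchanged bagging nodes $\mathbf{a}$, $\mathbf{b}$, $\mathbf{c}$. Without the range-split construction and these two lemmas, the $O(\log k)$ bounds in your items (i), (ii) and (iv) are assertions rather than consequences, so the proposal as written does not yet constitute a proof.
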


Chatzigiannis et al.~remarked in a recent survey~\cite[Gap~3]{chatzigiannis2022sok} that so far no LCP has specifically considered how a light client can resync, after an offline phase, faster than synchronizing for the first time. 
MMB addresses this gap with its incrementality property. 
In fact, we contend that $\MMB$ is \textit{the most fitting Merkle structure yet}, for a variety of dynamic applications that currently use either a Merkle chain or $\MMR$ to commit to an append-only list; see Table~\ref{tab:apps}.

\begin{table}[htp]
\centering
\begin{tabular}{|l|l|l|}
\hline
\textbf{Protocol} & \textbf{List $X$} & \textbf{Main improvement} \\ 
\hline 
\begin{tabular}{@{}l@{}} FlyClient~\cite{bunz2020flyclient}, \\ e.g., in Zcash~\cite{hopwoodzcash} \end{tabular}
& Block headers & 
\begin{tabular}{@{}l@{}} CCC to resync after a $k$-increment \\ reduces from $O(\log n \cdot \log k)$ to $O(\log^2 k)$ \end{tabular}  \\
\hline 
\begin{tabular}{@{}l@{}} Proofs of  \\ Proof-of-Stake~\cite{agrawal2023proofs} \end{tabular}
& Validator committees & 
\begin{tabular}{@{}l@{}} CCC to resync after a $k$-increment \\ reduces from $O(\log n)$ to $O(\log k)$ \end{tabular}  \\
\hline 
\begin{tabular}{@{}l@{}} Cross-chain bridge,  \\ 
e.g., in Polkadot~\cite{bhatt2025trustless} \end{tabular} &
Outgoing messages  & 
\begin{tabular}{@{}l@{}} Expected MP size (\& target-network fees) \\ 
probably reduce from $O(\log n)$ to $O(1)$ \end{tabular} \\  
\hline 
MiniChain~\cite{chen2020minichain} & 
\begin{tabular}{@{}l@{}} Transaction outputs \\ in UTXO blockchain \end{tabular} & 
\begin{tabular}{@{}l@{}} Expected MP size (\& transaction fees) \\ probably reduce from $O(\log n)$ to $O(1)$ \end{tabular}  \\
\hline
\begin{tabular}{@{}l@{}} Registration-based  \\ 
encryption~\cite{garg2018registration} \end{tabular} &
\begin{tabular}{@{}l@{}} (Identity, public key) \\ 
pairs, one per user  \end{tabular}  & 
\begin{tabular}{@{}l@{}} CCC of registration \& decryption key \\ update reduces from $O(\log n)$ to $O(1)$ \end{tabular} \\ \hline 
\end{tabular}
\caption{
Improvements on protocols that use $\MMR$~\cite{MMR} or $\UMMR$~\cite{reyzin2016efficient}, if they switch to $\MMB$ (resp.~$\UMMB$), assuming $1\leq k \leq n=|X|$. 
In all of them, the update time per append is also reduced from $O(\log n)$ to $O(1)$. 
MP stands for membership proof. 
See Appendix~\ref{s:apps} for details. 
}
\label{tab:apps}
\end{table}

Next, we consider applications where light clients may need to interact with the network while holding outdated versions of the commitment and/or membership proofs. 
Examples are applications with a \emph{stateless architecture}, 
as well as \emph{registration-based encryption}; 
see Appendix~\ref{s:apps} for details.  
In this context, Reyzin and Yakoubov~\cite{reyzin2016efficient} introduced a commitment scheme with the following \textbf{asynchrony} properties: 

\begin{itemize}
    \item \textbf{Old-commitment compatibility:} The membership of an item $x_i\in X$ can be verified with proof $\pi_{x_i\in X_n}$ against an older commitment $\langle X_m \rangle$, as long as $i\leq m\leq n$, and
        \item \textbf{Low update frequency:} During a $k$-increment, any membership proof needs to be updated a number of times sublinear in $k$.
\end{itemize}

These properties enable Alice to participate even while out of sync: the former property ensures she can use her outdated copy of the commitment to verify the membership of any items, 
while the latter means she might be able to prove the membership of an item, as her proof might still be valid. We introduce a strengthening of the latter property:

\begin{itemize}
    \item \textbf{Recent-proof compatibility:} For an absolute constant $c>0$, the membership of an item $x_i\in X$ can be verified with proof $\pi_{x_i\in X_n}$ against a newer commitment $\langle X_m \rangle$, as long as $i\leq n\leq m\leq n+c(n-i+1)$.
\end{itemize}

In other words: if Alice updates her proof $\pi_{x_i\in X_n}$ at state $X_n$, it is guaranteed to remain valid throughout a future $k$-increment, where $k= \lfloor c\cdot r \rfloor$ is proportional to the recency $r=(n-i+1)$ of item $x_i$. 
For instance, a parameter $c=1/5$ means that if Alice updates the proof of her item five hours after its creation, this proof should remain valid for at least one hour, assuming a constant append rate; see Figure~\ref{fig:async}.   
While this property implies low update frequency (see Lemma~\ref{lem:recent}), the converse is false, as a low update frequency might still make Alice's proof invalid after just one append, i.e., against $\langle X_{n+1} \rangle$. 

The asynchronous scheme presented in~\cite{reyzin2016efficient} is a variant of MMR that we shall refer to as U-MMR, following our naming convention described in Section~\ref{s:bagging}. 
It produces $O(\log n)$ separate trees and therefore has a logarithmic-sized commitment $\langle X \rangle$ composed of all tree roots.
In turn, we present U-MMB, an asynchronous variant of MMB with a logarithmic-sized commitment, that achieves \emph{all properties discussed so far}: 

\begin{theorem}\label{thm:UMMB}
U-MMB is an incremental, succinct and optimally additive commitment scheme that also achieves old-commitment and recent-proof compatibility with $c=1/5$. 
It has the specifications stated in Theorem~\ref{thm:MMB}, except that it produces a commitment of size $O(\log n)$. 
\end{theorem}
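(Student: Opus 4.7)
The plan is to construct U-MMB as an ``unbagged'' variant of MMB, mirroring how U-MMR relates to MMR, and then transfer the quantitative claims of Theorem~\ref{thm:MMB} to U-MMB while establishing the two asynchrony properties. Concretely, I would define U-MMB by retaining the internal mountain/belt structure of MMB but publishing as $\langle X_n\rangle$ the sequence of all current roots (rather than a single bagged hash). The commitment then has size equal to the number of trees in the structure, which the analysis of MMB already bounds by $O(\log n)$. Because the trees themselves, their maintenance, and their membership-proof shape are unchanged, the bounds on manager memory ($O(\log n)$), append time ($O(1)$), participant memory ($O(\log n)$), and the $O(\log k)$ CCC for each of the four resync operations carry over directly from Theorem~\ref{thm:MMB}; the only difference is that updating/retrieving the commitment now moves $O(\log n)$ hashes instead of one, which still fits within the same asymptotic budgets.

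Next I would verify \textbf{old-commitment compatibility}. The key observation is that in U-MMB each root corresponds to a subtree over a contiguous range of items, and once an item $x_i$ is covered by some root in $\langle X_m\rangle$, this root either persists unchanged in $\langle X_n\rangle$ for $n\geq m$, or it was later merged into a larger root whose existence can be read off from any intermediate commitment. So to verify $\pi_{x_i\in X_n}$ against $\langle X_m\rangle$ with $i\leq m\leq n$, I identify the (unique) root in the current proof that the verifier would reconstruct, locate its matching entry in $\langle X_m\rangle$ (which exists because item $i$ was already present at state $X_m$), and check equality. This reduces to an invariant on the evolution of the tree list, which I would prove by induction on the number of appends since~$X_m$.

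The main obstacle is \textbf{recent-proof compatibility with $c=1/5$}. Here I must analyze how quickly the subtree covering item $x_i$ can be absorbed into a larger tree as new items arrive. Let $r=n-i+1$ be the recency of $x_i$ at state $X_n$, and let $T$ be the U-MMB tree whose root appears in $\pi_{x_i\in X_n}$. The proof remains valid at state $X_m$ as long as the root of $T$ still appears verbatim in $\langle X_m\rangle$. I would bound from below the number of appends needed to trigger a merge that destroys this root, by tracking the sizes of the trees to $T$'s right and exploiting the belt's merging rule. A careful case analysis on the sizes of $T$ and its right neighbors in the belt should yield a lower bound of the form $m-n \geq \lfloor r/5 \rfloor$, which is exactly $c=1/5$. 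The constant is tight only if the right-neighbor sizes align adversarially, so the counting argument must be matched to the specific merging schedule of U-MMB; this is where the bulk of the technical work lies. Finally, I would invoke Lemma~\ref{lem:recent} to conclude that recent-proof compatibility implies low update frequency, completing the list of asynchrony properties.
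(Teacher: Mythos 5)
Your proposal contains a genuine gap in the part you yourself identify as the main obstacle: recent-proof compatibility. You plan to argue that the proof $\pi_{x_i\in X_n}$ stays valid as long as the root of the tree $T$ containing $h_i$ survives verbatim in later commitments, and to lower-bound the number of appends needed to destroy that root. This fails whenever $T$ is part of a mergeable pair at state $X_n$: in U-MMB the rightmost mergeable pair is merged on the very next append, so the root of $T$ can disappear after $k=1$ appends while the recency $r=n-i+1$ of $x_i$ is as large as $\Theta(2^s)$ for a height-$s$ mountain. No constant $c>0$ survives this case, so the ``careful case analysis'' you defer cannot produce $c=1/5$ for the unmodified scheme. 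The paper closes exactly this hole by modifying the scheme: whenever a membership proof is generated or updated and the leaf's mountain sits in a mergeable pair, the hash of the sibling mountain's peak is appended to the proof, pre-emptively emulating the post-merge state (this is sound by old-commitment compatibility). With that extension, invalidation requires \emph{two} successive merges at height $\geq s$, and Lemma~\ref{lem:merge-steps} gives the lower bound $k\geq 2^{s}+1$ (resp.\ $k\geq 2^{s-1}+1$ in the non-mergeable case), while Lemma~\ref{lem:Sn} pins the rightmost height sequence to $(s,s,s-2,\dots,1,0)$ and hence $r\leq 5\cdot 2^{s-1}-2$, yielding the ratio $1/5$. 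Without introducing this extension rule, your argument cannot be repaired.

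Two smaller points. First, your old-commitment-compatibility argument has the direction reversed: the root reconstructed from the full proof $\pi_{x_i\in X_n}$ generally does \emph{not} appear in the older commitment $\langle X_m\rangle$; rather, the paper observes that the proof evolves as an append-only list, so $\pi_{x_i\in X_m}\prec\pi_{x_i\in X_n}$ and the verifier truncates the proof, stopping at the intermediate ancestor whose hash is a peak of $\langle X_m\rangle$. Second, deriving U-MMB's bounds ``from Theorem~\ref{thm:MMB}'' is circular in the paper's development, where U-MMB is the primitive object (Lemmas \ref{lem:U-append}, \ref{lem:merge-steps}, \ref{lem:kth}, \ref{lem:differ}) and MMB is built on top of it by bagging; this is presentational rather than mathematical, but the quantitative resync bounds do need to be proved directly on the mountain list rather than inherited.
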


\begin{figure}[htp]
\centering
\includegraphics[width=0.6\textwidth]{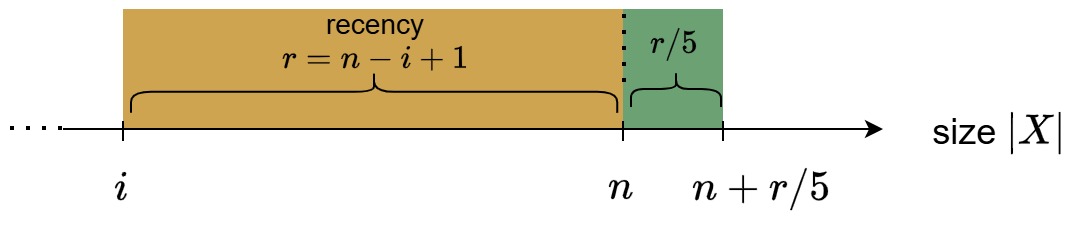}
\caption{In $\UMMB$, membership proof $\pi_{x_i \in X_n}$ is valid against a commitment $\langle X_m \rangle$ for any value of $m$ in the shaded areas: the brown and green areas are guaranteed by old-commitment and recent-proof compatibilities, respectively.}
\label{fig:async}
\end{figure}

Our results are summarized in Tables \ref{tab:abstract} to \ref{tab:results}. 
We note that $\UMMB$ is obtained simply by removing some top-level nodes from $\MMB$. 
Hence, one can easily offer the asynchrony properties of $\UMMB$ to light clients, while full nodes run $\MMB$ and thus store $O(1)$ sized commitments on-chain; 
we discuss this hybrid solution in Appendix~\ref{s:apps}.

\begin{table}[htp]
\centering
\begin{tabular}{|l|c|c c|c c|}
\hline
\textbf{} & \textbf{chain} & \textbf{U-MMR} & \textbf{U-MMB} & \textbf{MMR} & \textbf{MMB} \\ 
\hline
Old-commitment compatibility           & \cm   & \cm       & \cm        & \xm       & \xm \\
Recent-proof compatibility           & \xm   & \xm       & \cm        & \xm       & \xm \\
Low update frequency                   & \xm    & \cm       & \cm        & \xm       & \xm \\  
MP updates during $k$-increment    & $O(k)$   & $O(\log k)$  & $O(\log k)$       & $O(k)$  & $O(k)$ \\ 
\hline
Succinctness                           & \xm    & \cm       & \cm       & \cm       & \cm \\ 
Incrementality                         & \cm   & \xm        & \cm        & \xm       & \cm \\ 
CCC update $\langle X \rangle$ after $k$-increment  & $O(1)$   & $O(\log k)$  & $O(\log k)$       & $O(1)$  & $O(1)$ \\ 
CCC retrieve IP for $k$-increment      & $O(k)$   & $O(\log n)$  & $O(\log k)$  & $O(\log n)$  & $O(\log k)$ \\
CCC retrieve MP $k$-th newest item       & $O(k)$   & $O(\log n)$  & $O(\log k)$  & $O(\log n)$  & $O(\log k)$ \\ 
CCC update MP after $k$-increment      & $O(k)$   & $O(\log n)$  & $O(\log k)$  & $O(\log n)$  & $O(\log k)$ \\
\hline
Optimal additivity                     & \cm   & \xm        & \cm        & \xm       & \cm \\
Manager's append time                   & $O(1)$   & $O(\log n)$  & $O(1)$  & $O(\log n)$       & $O(1)$ \\
Commitment size                         & $O(1)$   & $O(\log n)$  & $O(\log n)$       & $O(1)$  & $O(1)$ \\ 
\hline
\end{tabular}
\caption{Comparison of our schemes to the Merkle chain, U-MMR~\cite{reyzin2016efficient}, and MMR~\cite{MMR}, assuming $1\leq k\leq n=|X|$. 
MP and IP stand for membership proof and increment proof, respectively. 
}
\label{tab:results}
\end{table}

A prominent characteristic of our proposed structures is their imbalance: the leaf of the $k$-th most recently appended item is at distance $\Theta(\log k)$ from its root. 
While a balanced tree may suit static data sets, where all items are equally likely to be queried, our structures are ideal for dynamic applications that continually generate new data, as users' queries will be naturally biased towards recently generated items. 
Hence, providing these items with shorter proofs may lead to savings in expectation.
We turn this observation into a concrete statement:

\begin{theorem}\label{thm:Zipf} 
Consider an application with an append-only list $X$ committed with $\MMB$. 
For a fixed value of $k$ and a variable $n=|X|\geq k$, the expected membership proof size of the $k$-th newest item, measured in hashes, is at most 
$$\frac{11}{8}\log_2\left(\frac{k+1}{3}\right)+\frac{9}{2}-\frac{9}{4(k+1)}.$$ 
\end{theorem}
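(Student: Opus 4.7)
My plan is to compute the expected proof size explicitly by analyzing the MMB structure. For a fixed $k$, as $n \geq k$ varies (naturally modeled as uniform over a large window $\{k,\ldots,k+N\}$ with $N\to\infty$, so that the low-order bits of $n$ become independent fair coins), the shape of the $\MMB$ depends on the binary representation of $n$, and the expected proof size becomes a finite sum that depends only on $k$. The first step is to decompose the membership proof of the $k$-th newest item into a \emph{mountain-path} portion, consisting of the hashes along the path from the leaf up to the root of its containing perfect binary mountain, and a \emph{belt} portion consisting of the remaining hashes needed to reach the global commitment.

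Next I would identify in which mountain the $k$-th newest item lies. Writing $n = \sum_i 2^{a_i}$ with $a_1 > a_2 > \cdots > a_t$, the mountains from right to left have sizes $2^{a_t},2^{a_{t-1}},\ldots$; the $k$-th newest item belongs to the $j$-th rightmost mountain, where $j$ is the smallest index for which $\sum_{i=1}^{j} 2^{a_{t-i+1}} \geq k$, and the mountain-path portion contributes $a_{t-j+1}$ hashes. Averaging over the random bits of $n$, this cumulative-sum structure turns into a standard calculation with a geometric-type distribution: the probability that the containing mountain has a given log-size is explicit in $k$, and summing log-size against this distribution yields an expected mountain-path length of the form $\log_2((k+1)/3) + O(1)$, with the precise constants emerging from splitting cases according to the binary expansion of $k+1$.

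Finally, I would add the belt contribution. By the $\MMB$ construction, the belt is arranged so that the hashes required to witness a mountain's root against the constant-sized global commitment are determined, up to a bounded number of neighbouring roots, by the index $j$ alone, and therefore its expected contribution is again a closed-form expression in $k$. Combining the mountain-path and belt contributions and simplifying the resulting geometric sums should produce the stated bound. The hardest step will be the tight bookkeeping needed to obtain the exact leading coefficient $\frac{11}{8}$ together with the additive term $\frac{9}{2}-\frac{9}{4(k+1)}$ rather than a looser $O(1)$ constant; this will require a careful case analysis of the belt's behaviour in a neighbourhood of the item's mountain, and a precise treatment of the boundary cases where $k$ is close to a power of two, both of which control the lower-order corrections in the final expression.
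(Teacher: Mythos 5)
Your high-level strategy matches the paper's: amortize over $n$ by treating the low-order bits as fair coins, split the proof into a within-mountain part and a bagging part, and track the exact constants through a case analysis on the binary expansion. However, there is a concrete error at the heart of your second step. You write $n=\sum_i 2^{a_i}$ and assert that the mountains, right to left, have sizes $2^{a_t}, 2^{a_{t-1}},\dots$, i.e.\ one mountain per ``one'' bit of $n$. That is the $\UMMR$ decomposition, not the $\UMMB$ one. Because $\MMB$ uses lazy merges, its height sequence is governed by Lemma~\ref{lem:Sn}: there are $t=\lfloor\log_2(n+1)\rfloor$ mountains and the mountain at position $i$ (counting from zero on the right) has height $s_i=i+b_i$, where $b_i$ is the $i$-th bit of $n+1$ --- so every position carries a mountain of height $i$ or $i+1$, regardless of whether the bit is one. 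This is not a cosmetic difference: the expected peak height seen by the $k$-th newest leaf differs between the two structures by at least $\frac{5}{4}-\frac{3}{2(k+1)}$ (Lemma~\ref{lem:diffU}), so carrying out your calculation with the $\UMMR$ decomposition cannot reproduce the stated constants; you would land on $\lceil\log_2 k\rceil+\frac{2k}{2^{\lceil\log_2 k\rceil}}-1$ for the mountain-path part (Lemma~\ref{lem:aUMMR}) instead of the $\UMMB$ value of Lemma~\ref{lem:aUMMB}.

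The second gap is that you dismiss the bagging contribution as ``determined, up to a bounded number of neighbouring roots, by the index $j$ alone.'' In $\MMB$ the bagging has two layers (range nodes and belt nodes), the range splits are dictated by the bit patterns $(1|0)$ and $(01|)$ in $n+1$, and the number of range-node ancestors of the leaf depends on how many mountains precede it within its own range --- a quantity that is \emph{not} a function of $j$ alone and that the paper controls only through a chain of technical results (periodicity of the relevant counting functions, a recursion for the size $r(n,n)$ of the leftmost range, and an exact characterization of when the leaf is leftmost in its mountain or range; see Lemmas~\ref{lem:MM'}, \ref{lem:MMB-period}, \ref{lem:rnn}, \ref{lem:leftmost} and \ref{lem:r}). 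You correctly anticipate that this bookkeeping is the hardest step, but as written your plan both starts from the wrong mountain decomposition and underestimates what is needed to pin down the coefficient $\frac{11}{8}$ and the term $\frac{9}{2}-\frac{9}{4(k+1)}$.
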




\vspace{5mm}

The paper is organized as follows. 
We present necessary definitions in Section~\ref{s:prel}. 
Section~\ref{s:unbagged} describes the $\UMMB$ scheme, which readily achieves all our listed properties, but builds $O(\log n)$ separate trees. 
In Section~\ref{s:forward} we discuss common ways to merge or ``bag'' these trees into one, to reach a constant sized commitment, leading to trade-offs in terms of the properties achieved.  
$\MMB$ is presented in Section~\ref{s:double}, with a novel ``bagging'' procedure that retains all listed properties, except for asynchrony. 
We formalize increment proofs in Section~\ref{s:increment}, and we conclude in Section~\ref{s:conc}.  

In Appendix~\ref{s:apps} we present related work, as well as highlight potential applications of our schemes, and we provide algorithmic details in Appendix \ref{s:alg-UMMB}. 
Then, in Appendix \ref{s:amortized} we prove Theorem~\ref{thm:Zipf}, and find exact formulas for the amortized membership proof size of several variants of $\MMB$ and $\MMR$. 
We finish with delayed proofs in Appendix~\ref{s:proofs}.


\section{Preliminaries}\label{s:prel}

Throughout the paper we consider a finite ordered list $X=(x_1, \cdots, x_n)$ onto which new items are continually being appended, but not deleted nor modified. 
All items are assumed to be distinct, appended one at a time, and sorted by increasing append time. 
We denote by $n:=|X|$ the current list size, and by $X_m=(x_1, \cdots, x_m)$ the state of $X$ when it contains $m$ items. 
The relation ``list $Y$ is a prefix of list $Z$'' is denoted as $Y\prec Z$; 
hence, $X_m\prec X_n$ whenever $0\leq m\leq n$. 
We assume items in $X$ come from $\{0,1\}^*$, i.e., they are bit strings of arbitrary finite length, and we represent the concatenation of strings $x$ and $y$ by $x||y$.

We fix a collision-resistant, publicly known hash function $H:\{0,1\}^*\rightarrow \{0,1\}^*$ with images of bounded size. 
A binary Merkle structure $\mathcal{M}$ for $X$ is a collection of one or more rooted binary trees, with a clear bijection between $X$ and the set of all leaves. 
All Merkle structures in this paper can be assumed to be binary, unless specifically noted otherwise. 
We assign to each leaf $h_i=H(x_i)$ the hash (evaluation) of an item $x_i\in X$, and to each non-leaf node the hash of the concatenation of its two children; see Figure~\ref{fig:membership}. 
We also allow for a non-leaf node to have a single child, in which case we represent with symbol $\bullet$ the missing child node (visually) and a default hash value for it; see Figure~\ref{fig:chain}.

This Merkle structure can be used as a commitment scheme~\cite{benaloh1993one} for list $X$, whose public commitment $\langle X \rangle$ corresponds to the list of hashes of the tree roots. 
The membership proof $\pi_{x_i\in X}$ of item $x_i$ is the list of siblings of all ancestors of its leaf $h_i$, along with their \emph{handedness}: whether they are left or right siblings. 
From this information and hash function $H$, a verifier can sequentially compute the hash of each ancestor of $h_i$, and ultimately the hash of its root, and verify whether the latter appears in the public commitment $\langle X \rangle$; see Figure~\ref{fig:membership}. 

In fact, a Merkle structure is a \emph{vector commitment scheme}~\cite{catalano2013vector}, because from the revealed handedness of the ancestors of leaf $h_i$, and the position of its root within $\langle X \rangle$, the verifier can also check the leaf's overall position within the structure, and thus verify the index $i$ of item $x_i$. 
It is computationally infeasible to find a valid membership proof with index $i$ for any item other than $x_i$, as it would imply a hash collision in one of the ancestors of leaf $h_i$. 


\section{U-MMB: an asynchronous scheme}\label{s:unbagged}


Commitment schemes based on Merkle structures are \emph{strong}, meaning that their construction does not require trusted parties or backdoors~\cite{camacho2012strong}. 
As is the case with all known strong schemes, a drawback of Merkle structures is that, generally, an item's membership proof needs to be updated whenever list $X$ changes, as its verification may fail if not in sync with the commitment $\langle X\rangle$. 
In this section, we consider applications where participants cannot easily retrieve the current commitment and/or membership proofs from a central authority. 

We consider a distributed application where several participants collaborate in maintaining the commitment scheme of a common list~$X$, and each locally stores and updates any proofs of interest. 
For simplicity, we assume the presence of a \emph{scheme manager} who updates the scheme after each append and registers all changes in a public \emph{bulletin board}, including the new commitment and the hashes of any new nodes.  
However, this manager is \emph{transparent}: they do not need to be trusted, as any participant can audit their operations with the help of the bulletin board and increment proofs. 
More generally, the manager's duties can be fully replaced with a consensus protocol run by the participants.

As mentioned in the introduction, the $\UMMR$ scheme~\cite{reyzin2016efficient} achieves the \emph{asynchrony} properties of old-commitment compatibility and low update frequency. 
In turn, we strengthen the latter property by introducing recent-proof compatibility.
These properties facilitate the interaction of participants who might be out-of-sync with one another and with the bulletin board. 
$\UMMR$ is succinct but neither incremental nor optimally additive.%
\footnote{Authors in~\cite{reyzin2016efficient} do not describe increment proofs, yet a close inspection reveals that their scheme admits increment proofs of size $O(\log n)$. 
Thus, their scheme is succinct.}
Hence, if $n$ is large and Alice resyncs frequently, she might prefer the use of the Merkle chain over $\UMMR$.  

In this section, we solve this dilemma with \emph{Unbagged MMB} ($\UMMB$), the first succinct, incremental, and optimally additive scheme, that also achieves old-commitment and recent-proof compatibilities. 
Notice in Table~\ref{tab:results} that it matches or outperforms previous Merkle structures in every category, except for the commitment size. 
In later sections, we present new schemes derived from $\UMMB$ that achieve commitments of constant size.

\subsection{Append operation with lazy merges}\label{s:unbagged-description}

Like $\UMMR$~\cite{reyzin2016efficient}, $\UMMB$ is composed of an ordered list of $O(\log n)$ trees. 
Following notation from~\cite{MMR}, we call each of these trees a \emph{mountain}: it has an associated height $s\geq 0$ and is a perfect binary hash tree with $2^s$ leaves, each at distance~$s$ from the mountain's root, which we call its \emph{peak}. 
For instance, Figure~\ref{fig:membership} showcases a height-3 mountain. 
Commitment $\langle X \rangle$ is simply the list of $O(\log n)$ peak hashes. 
Leaves within each mountain correspond to hashes of consecutive items in $X$ sorted by increasing append time, and mountains are likewise sorted so that the rightmost mountain contains the newest leaves; see Figure~\ref{fig:s9}. 
The structure is hence characterized by its sequence $S_n$ of mountain heights, which is determined by~$n$ as we explain below. 
E.g., for $n=9$ the height sequence is $S_9=(2,2,0)$.

\begin{figure}[htp]
\centering
\includegraphics[width=0.85\textwidth]{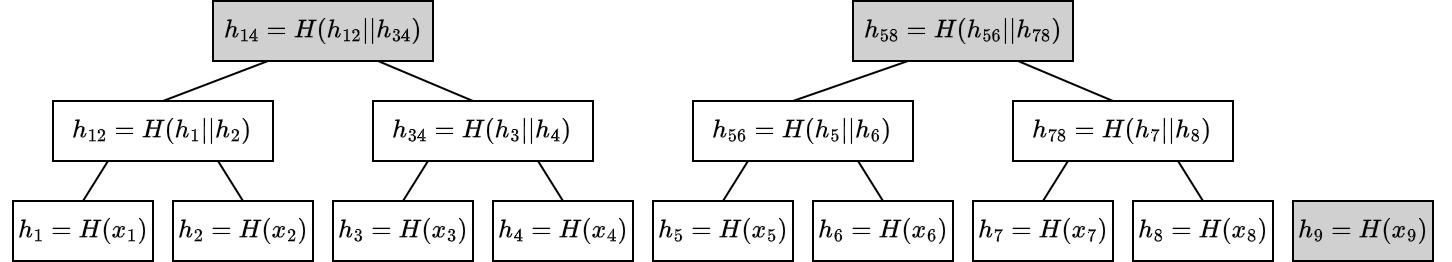}
\caption{$\UMMB$ structure for list $X=(x_1, \cdots, x_9)$, with height sequence $S_9=(2,2,0)$. }
\label{fig:s9}
\end{figure}

Two mountains are \emph{mergeable} if they are consecutive and have the same height~$s$: 
they can be merged into a mountain of height $s+1$ with a single hash computation by creating a \emph{merge peak} 
with the two previous peaks as children. 
In $\UMMR$~\cite{reyzin2016efficient}, mergeable mountains are merged immediately; as a consequence, appending a new item --~by adding its hash as a height-0 mountain~-- can lead to a ``domino effect'' of up to $O(\log n)$ consecutive merge steps. 
Yet, it can be proved that the \emph{average} number of merge steps per append in that scheme is constant: exactly one. 
This inspires us to take a \emph{lazy} approach in $\UMMB$ and execute \emph{at most one} merge step per append, possibly leaving mergeable mountains unmerged. 
That way, we match the amortized and worst-case complexities of the append operation. 

Hence, in $\UMMB$ the \textbf{append operation}, performed by the scheme manager, is:
\begin{enumerate}
    \item \emph{Add step:} add the new leaf as a height-0 mountain to the right end of the list; and
    \item \emph{Merge step:} if there are mergeable mountains, merge the rightmost pair, so that the new mountain is one unit higher and replaces the pair in-place in the mountain list.
\end{enumerate}

\begin{figure}[htp]
\centering
\includegraphics[width=0.85\textwidth]{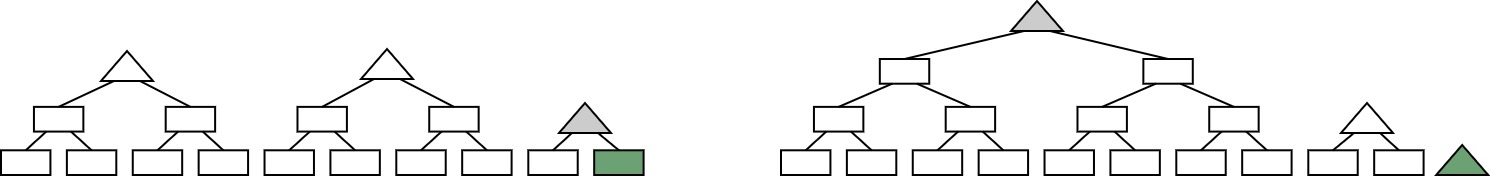}
\caption{Evolution of the U-MMB structure from Figure~\ref{fig:s9} through two item appends, 
where height sequence $S_9=(2,2,0)$ evolves to $S_{10}=(2,2,1)$ and to $S_{11}=(3,1,0)$. 
Peaks are represented with triangles. 
In each append, the new leaf is in green and the merge peak in gray.}
\label{fig:s10s11}
\end{figure}

\begin{lemma}\label{lem:hash-U}
    The $\UMMB$ append operation requires at most one hash computation.
\end{lemma}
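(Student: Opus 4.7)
The plan is to read the claim directly off the specification of the append operation given just above the lemma. By construction, an append consists of exactly one add step followed by at most one merge step, so I only need to count the hash evaluations contributed by each.

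First, I would analyze the add step. It consists of placing the new leaf $h_{n+1}=H(x_{n+1})$ as a height-$0$ mountain at the right end of the mountain list. Following the paper's convention (the items of $X$ are bit strings and their leaf hashes are treated as the atoms of the Merkle structure), the hash $H(x_{n+1})$ is not counted as part of the structural update; the add step is a pure list-insertion and performs no hash evaluation on the structure.

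Next, I would analyze the merge step. It fires only when the two rightmost mountains have equal height $s$; in that case it creates a single merge peak whose hash is $H(p_{\text{left}}\,\|\,p_{\text{right}})$, where $p_{\text{left}}$ and $p_{\text{right}}$ are the two current peaks. This is exactly one hash evaluation. If the two rightmost mountains have different heights, or if there are fewer than two mountains, the merge step does nothing and no hash is computed.

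Combining the two contributions, the total cost of the append is $0 + c$ hash evaluations with $c\in\{0,1\}$, hence at most one. The key ingredient is simply the lazy rule that \emph{at most one} merge step is executed per append, even when further mergeable pairs are present; there is no real obstacle, and the lemma follows at once. (I note that this is in sharp contrast to $\UMMR$, whose eager merging may cascade up to $\Theta(\log n)$ hashes in the worst case, even though its amortized cost is also one; the amortized analysis for $\UMMB$ is therefore trivial and will be needed later when comparing heights of mountains across schemes.)
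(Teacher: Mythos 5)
Your proof is correct and matches the paper's reasoning exactly: the paper skips the formal proof but notes the same two ingredients you use, namely that leaf hashes are not counted (they are inputs to the scheme) and that the lazy rule executes at most one merge step, hence at most one non-leaf hash computation. One minor slip that does not affect the count: the merge step fires whenever \emph{any} mergeable pair exists and merges the right\emph{most such pair}, which need not be the two rightmost mountains of the list (e.g., going from $S_{10}=(2,2,1)$ to $S_{11}=(3,1,0)$ the merged pair is the two height-$2$ mountains on the left), but either way exactly one merge peak is hashed.
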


See Figure~\ref{fig:s10s11} for an example. 
The proof of the lemma above is skipped, but we highlight that, throughout the paper, we ignore the cost of computing any leaf hash, as we assume that these are given as input to the scheme, and consider only the hash computation of non-leaf nodes. 
Notice that $\UMMB$ matches the Merkle chain in terms of hash computations per append; see Figure~\ref{fig:chain}. 
Finally, we remark that the pair being merged can be found in $O(1)$ time with the help of a \emph{stack} of mergeable mountains.
The manager can then update the commitment $\langle X \rangle$ in the bulletin board, and append to it the hashes of the (one or two) new nodes, all in constant time, and keeping only the $O(\log n)$ peaks in memory. 
In Appendix~\ref{s:alg-UMMB} we present the $\UMMB$ append operation in detail, and prove the following result.

\begin{lemma}\label{lem:U-append}
The $\UMMB$ manager needs $O(\log n)$ memory and $O(1)$ time per append. 
\end{lemma}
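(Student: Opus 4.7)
The lemma bundles two claims --- memory $O(\log n)$ and time $O(1)$ per append --- and both reduce to a single structural bound: the number of mountains $|S_n|$ is $O(\log n)$. The plan is to first establish this bound, then describe the manager's data structures, and finally verify that each append runs in constant time.

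I would prove $|S_n| = O(\log n)$ by induction on $n$, using the invariant that the heights in $S_n$ are non-increasing from left to right and that no three consecutive heights are equal. Together these force each distinct height to appear at most twice, so if $j$ distinct heights occur in $S_n$, then $n = \sum_i 2^{s_i} \geq 2^0 + 2^1 + \cdots + 2^{j-1} = 2^j - 1$, giving $j \leq \log_2(n+1)$ and $|S_n| \leq 2j = O(\log n)$. The inductive step splits by cases on the append. If $S_{n-1}$ ends in a height-$0$ mountain, the add step creates a trailing $(0,0)$ which is the unique rightmost mergeable pair; merging it yields a trailing $1$, and both invariants are verified using the fact that $S_{n-1}$ cannot itself end in $(0,0)$ (as the previous append would have merged such a trailing pair). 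Otherwise, the merge --- if any --- happens strictly inside $S_{n-1}$ at the rightmost equal pair, and here the no-three-equal inductive hypothesis combined with the rightmost-merge rule prevents both a violation of the non-increasing order and the creation of a new triple of consecutive equal heights.

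Given $|S_n| = O(\log n)$, the manager's state fits in $O(\log n)$ memory: a doubly-linked list of the peaks, each storing a hash and a height, plus a stack pointing at the rightmost mergeable pair. Each append then performs (i) appending a new height-$0$ peak to the tail of the list, (ii) locating the rightmost mergeable pair in $O(1)$ via the stack, (iii) at most one hash computation by Lemma~\ref{lem:hash-U}, and (iv) an $O(1)$ update of the linked list and stack. Writing the one or two new hashes to the bulletin board is also $O(1)$, so each append runs in $O(1)$ time in total.

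The main technical obstacle is the structural invariant, specifically showing that the interior-merge case cannot produce three consecutive equal heights. The rightmost-merge rule is genuinely needed here: the no-three-equal property of $S_{n-1}$ alone is not preserved under an arbitrary interior merge, but it is preserved precisely because the \emph{rightmost} equal pair is chosen, which constrains the relative heights of peaks near the merge site and rules out the creation of patterns like $(a, a, a-1, a-1)$ that would lead to a triple on a subsequent step.
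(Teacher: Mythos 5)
Your overall decomposition matches the paper's: the paper proves this lemma in Appendix~\ref{s:alg-UMMB} by exhibiting the explicit data structures (a linked list of peaks, a stack of mergeable pairs, an append-only hash array) and observing that the append algorithm does $O(1)$ work, with the $O(\log n)$ memory bound resting on the fact that there are only $O(\log n)$ mountains. The paper gets that count from Lemma~\ref{lem:Sn} (proved separately, via the exact characterization $s_i = b_i + i$ in terms of the binary representation of $n+1$), whereas you re-derive a weaker $2\log_2(n+1)$ bound from scratch. Your counting step itself is fine: non-increasing heights plus ``no height occurs three times'' does give $n \geq 2^j - 1$ for $j$ distinct heights.

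The gap is that your invariant is not inductive. The state $(3,1,1,0)$ satisfies everything you assume --- heights weakly decreasing, no three consecutive equal --- yet one append turns it into $(3,1,1,0,0)$ and the rightmost-pair merge yields $(3,1,1,1)$, three consecutive equal heights. This failure occurs in your ``trailing height-$0$'' branch, where the only fact you invoke is that $S_{n-1}$ cannot end in $(0,0)$; that rules out producing a bad pattern at height $0$ but says nothing about a pre-existing $(1,1)$ immediately to the left of the trailing $0$. Of course $(3,1,1,0)$ is not actually reachable --- by Lemma~\ref{lem:Sn} the height at position $i$ from the right is always $i$ or $i+1$, so position $2$ can never carry height $1$ --- but that is precisely the stronger statement your induction would need to carry along, and appealing to unreachability without proving it is circular. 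The clean fixes are either to cite Lemma~\ref{lem:Sn} directly for the mountain count, or to strengthen your inductive hypothesis to the positional form $s_i \in \{i, i+1\}$, which is self-propagating under the add-then-merge-rightmost rule. The data-structure and $O(1)$-time portions of your argument are correct and essentially identical to the paper's.
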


\begin{table}[htp]
\centering
\begin{tabular}{|c|c|c|c|c|c|c|c|c|c|c|}
\hline
\textbf{$n$} & $1$ & $2$  & $3$ & $4$ & $5$ & $6$ & $7$ & $8$ & $9$ & $10$ \\ \hline 
$n+1$ in binary & $10$         & $11$ & $100$        & $101$  & $110$  & $111$  & $1000$       & $1001$    & $1010$    & $1011$  \\ 
\hline
\textbf{$S_n$} & $0$ & $1$  & $10$ & $11$ & $20$ & $21$ & $210$ & $211$ & $220$ & $221$ \\ 
\hline 
\end{tabular}
\caption{Height sequence $S_n$ for the first few values of $n$, with commas removed for brevity.}
\label{tab:Sn}
\end{table}

While the add step increases the mountain count by one, the merge step, if executed, decreases the count by one. 
Thus, the net mountain count never decreases. 
In Table~\ref{tab:Sn} we show the sequence $S_n$ of mountain heights for the first few values of~$n$, and compare it against the binary representation of $n+1$. 
By staring long enough at this table, the astute reader may convince themselves of the following properties; we include proofs in Appendix~\ref{s:proofs}.

\begin{lemma}\label{lem:Sn}
$\UMMB$ and its mountain height sequence $S_n$ have the following properties:
\begin{enumerate}
    \item If there are $n$ leaves, the number of mountains is $t:=\lfloor \log_2 (n+1) \rfloor$. 
    \item If $n+1$ in binary is $(b_t\cdots b_1 b_0)$, and we enumerate the mountain heights from right to left starting from zero, i.e., $S_n=(s_{t-1}, \cdots, s_1, s_0)$, then $s_i=b_i+i$ for each $0\leq i<t$.  
    \item During the $n$-th item append, the merge step is skipped if and only if $n+1$ is a power of two. 
    Else, if $j$ is the index of the lowest ``one'' in $(b_t\cdots b_1 b_0)$, then we merge two mountains of height $j$, with the new mountain at position $j$ and of height $s_{j}=j+1$. 
    \item Heights are in weakly decreasing order in $S_n$, so mountains of the same height are always consecutive and thus mergeable. There are 0, 1 or 2 mountains of any height in the mountain list, and the height difference between consecutive mountains is 0, 1 or 2. 
\end{enumerate}
\end{lemma}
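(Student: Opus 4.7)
The plan is to establish Part 2 by induction on $n$; Part 1 then follows immediately, since the formula $s_i = b_i + i$ gives $S_n$ exactly $t$ entries. Part 3 will fall out as a byproduct of the inductive step, and Part 4 reduces to short algebra once Part 2 is in hand. For the induction, I fix the binary expansions $n = (c_{t'} \cdots c_0)_2$ and $n+1 = (b_t \cdots b_0)_2$, assume by hypothesis that $S_{n-1} = (c_{t'-1} + (t'-1), \ldots, c_0)$, and verify that applying the append operation of Section~\ref{s:unbagged-description} yields the claimed formula for $S_n$ in terms of the $b_i$.

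The natural split is whether $n+1$ is a power of two. If $n+1 = 2^t$, then $n$'s binary is all ones, so by the inductive hypothesis $S_{n-1} = (t-1, t-2, \ldots, 1)$, which is strictly decreasing with no two consecutive heights equal. The add step appends a height-0 mountain to produce $(t-1, \ldots, 1, 0)$, still strictly decreasing; no merge is executed, and $S_n$ matches the claimed $s_i = i$. Otherwise, let $j < t$ be the lowest 1-bit of $n+1$; binary subtraction shows that $n$ agrees with $n+1$ above bit $j$, but has a $0$ at position $j$ and ones at positions $0, \ldots, j-1$. Evaluating the inductive formula at these bits gives $s_j^{(n-1)} = j$ and $s_{j-1}^{(n-1)} = j$ when $j \geq 1$ (or, when $j = 0$, $s_0^{(n-1)} = 0$, matching the newly added leaf). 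In either sub-case the add step produces a list whose rightmost mergeable pair consists precisely of two height-$j$ mountains, and the merge replaces them with a single height-$(j+1)$ mountain at position $j$; comparing to $s_i = b_i + i$ at the updated bits shows exact agreement. This case split also proves Part 3 directly.

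Part 4 follows from Part 2 by elementary algebra: $s_i - s_{i-1} = (b_i - b_{i-1}) + 1 \in \{0, 1, 2\}$ since each $b_k \in \{0, 1\}$, which yields both weak monotonicity and the gap bound. Furthermore, three consecutive equal heights $s_{i-1} = s_i = s_{i+1}$ would force both $b_{i-1} = 1, b_i = 0$ and $b_i = 1, b_{i+1} = 0$, a contradiction. The main obstacle is the bookkeeping in the non-power-of-two case: one must check that the ``rightmost mergeable pair'' picked by the append operation is exactly the two height-$j$ mountains identified from the binary expansion, including the corner case $j = 0$ where the newly added leaf itself participates in the merge.
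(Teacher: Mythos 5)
Your proposal is correct and follows essentially the same route as the paper: induction on $n$ with a case split on whether $n+1$ is a power of two, identification of the lowest one-bit $j$ of $n+1$ (equivalently the lowest zero-bit of $n$) as the merge position, and deduction of Part 4 from the formula $s_i = b_i + i$. The only cosmetic difference is that the paper runs the induction on Parts 1--3 simultaneously and dismisses Part 4 as an easy consequence, whereas you center the induction on Part 2 and spell out the Part 4 algebra explicitly.
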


Index $j$ in point 3 can be expressed as $j=\nu_2(n+1)$, where $\nu_2(m)$ is the \emph{2-adic valuation} of~$m$: the highest integer $\nu$ such that $2^{\nu}$ divides~$m$. 
E.g., for $n=11$ items, there will be $t=\lfloor \log_2 (12) \rfloor=3$ mountains, and since $n+1$ in binary is $(1100)$, we ignore the leading bit and obtain the height sequence $S_{11}=(1,0,0)+(2,1,0)=(3,1,0)$, and the two mountains most recently merged were of height $j=\nu_2(12)=2$, as we see in Figure~\ref{fig:s10s11}. 
For comparison, in $\UMMR$~\cite{reyzin2016efficient} there are zero or one mountains of any height, and there is a height-$i$ mountain if and only if there is a ``one'' at the $i$-th position in the binary representation of $n$. 

\subsection{Resync operations and analysis}\label{s:UMMB-analysis}


The proofs of correctness, soundness and strength for U-MMB are identical to those in~\cite{reyzin2016efficient} for $\UMMR$, hence we skip them. 
Like the Merkle chain and $\UMMR$, $\UMMB$ is immutable: subtrees are never altered, as the structure only evolves by merging trees. 
Thus, all three Merkle structures are old-commitment compatible, because a membership proof $\pi_{x_i\in X}$, which contains the $i$-th leaf's ancestors' siblings, evolves as an append-only hash list, so it is a \emph{list prefix} of any future membership proof: $\pi_{x_i\in X_m} \prec \pi_{x_i\in X_n}$ for any $i\leq m\leq n$. 
For example, if the mountain in Figure~\ref{fig:membership} participates in a merge, the other mountain's peak hash would be appended to $\pi_{x_6\in X}$. 
Hence, if Bob sends proof $\pi_{x_i\in X_n}$ to Alice, she can easily extract from it the prefix $\pi_{x_i\in X_m}$ that is in sync with the commitment $\langle X_m \rangle$ she holds. 

An item's membership proof thus grows by one hash every time (the mountain containing) its leaf participates in a merge step, which, unlike in $\UMMR$, happens at most once per append. 
In fact, in any $k$-increment, any leaf participates in only $O(\log k)$ merge steps:

\begin{lemma}\label{lem:merge-steps} 
For any $k\geq 1$, within any $k$-increment there is at most one merge step in U-MMB where the height of the merging mountains is at least $\lceil\log_2 k\rceil$. 
Consequently, any leaf participates in at most $\lceil \log_2 k \rceil+1$ merge steps, and each of the $k$ new leaves ends in a mountain of height at most $\lceil \log_2 k \rceil+1$.
\end{lemma}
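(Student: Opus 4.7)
My plan is to use the characterization from Lemma~\ref{lem:Sn}(3) to translate the statement about merge heights into an arithmetic condition on $n+1$, and then use a counting argument over $k$ consecutive integers.

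First, set $h:=\lceil\log_2 k\rceil$, so that $k\leq 2^h$. By Lemma~\ref{lem:Sn}(3), the $n$-th append either skips the merge step (when $n+1$ is a power of two) or performs a merge of two mountains of height exactly $j=\nu_2(n+1)$. Thus a merge step has height at least $h$ if and only if $2^h\mid n+1$. A $k$-increment consists of exactly $k$ consecutive appends, say at times $n=m+1,\dots,m+k$, so the corresponding values of $n+1$ form the $k$ consecutive integers $m+2,\dots,m+k+1$. Among any $k\leq 2^h$ consecutive integers, at most one is divisible by $2^h$, which proves the first claim.

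For the consequence, fix a leaf and track its containing mountain across the increment. Every merge step that involves this leaf takes its current height-$s$ mountain and produces a height-$(s+1)$ mountain containing the leaf. Hence the heights at which this particular leaf participates in merges form a strictly increasing sequence $s_0<s_1<\cdots$. At most $h$ of these terms lie in $\{0,1,\dots,h-1\}$, and by the first part at most one is $\geq h$; so the leaf participates in at most $h+1=\lceil\log_2 k\rceil+1$ merge steps during the increment.

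For the last assertion, a new leaf appended during the increment begins as a height-$0$ mountain (an add step) and its containing mountain's height goes up by exactly $1$ with each subsequent merge step it participates in. Since by the previous bound this happens at most $\lceil\log_2 k\rceil+1$ times, the new leaf ends inside a mountain of height at most $\lceil\log_2 k\rceil+1$. I do not anticipate any real obstacle here: the whole argument reduces to the elementary fact that $k$ consecutive integers contain at most one multiple of $2^{\lceil\log_2 k\rceil}$, together with the ``successive heights strictly increase'' observation that follows directly from the mechanics of the merge step.
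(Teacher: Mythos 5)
Your proof is correct and follows essentially the same route as the paper's: both derive the first claim from Lemma~\ref{lem:Sn} via the observation that $2^{\lceil\log_2 k\rceil}$ divides $n+1$ for at most one of $k$ consecutive values of $n$, and both deduce the remaining claims from the fact that the heights of a leaf's successive merges strictly increase. Your bookkeeping for the second claim (at most $\lceil\log_2 k\rceil$ distinct merge heights below the threshold, plus at most one at or above it) is only a slight repackaging of the paper's version, which instead bounds the final mountain height $s$ by noting that the last two merges both have height at least $s-2$.
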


\begin{proof}
Fix a value $k\geq 1$. By Lemma~\ref{lem:Sn}, if we merge two mountains of height at least $\lceil\log_2 k \rceil$ during the merge step then $\nu_2(n+1)\geq \lceil \log_2 k \rceil$, or equivalently $2^{\lceil \log_2 k \rceil}$ divides $n+1$. 
The first claim follows from the fact that, clearly, this can happen at most once within any $k$ consecutive values of $n$. 
Next, consider a leaf that ends in a mountain of height $s$: 
if it participates in 0 or 1 merge steps there is nothing to show. 
Else, it participates in at least 2 merge steps where the height of the merging mountains is at least $s-2$. 
By the first claim, $s-2<\lceil \log_2 k \rceil$, or equivalently, $s\leq \lceil \log_2 k \rceil + 1$; this proves the second claim. And the third claim follows since each new leaf starts in a mountain of height zero, when it is appended.
\end{proof}

This result proves low-update frequency and incrementality of membership proof updates, and shows as well that recent items always have short proofs.  
We prove in Appendix~\ref{s:proofs} that the bound on the membership proof size of recent items can be slightly improved:

\begin{lemma}\label{lem:kth}
For any $1\leq k\leq n$, if $d:=\lfloor \log_2 k \rfloor+1$, then the $k$-th most recent leaf in U-MMB sits in one of the $d$ rightmost mountains, and its mountain is of height at most $d$.
\end{lemma}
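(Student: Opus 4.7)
The plan is to exploit the explicit mountain-height formula from Lemma~\ref{lem:Sn}. Writing $n+1=(b_t\cdots b_1 b_0)_2$ with $t=\lfloor\log_2(n+1)\rfloor$, part~2 of that lemma says that, for $0\le i<t$, the $i$-th rightmost mountain has height $s_i=b_i+i$ and hence contains $2^{s_i}$ leaves. Since $b_i\in\{0,1\}$, I get the double bound $i\le s_i\le i+1$, which will drive both parts of the claim: the lower bound controls cumulative leaf counts, and the upper bound controls mountain heights.

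For the containment statement, I would count leaves from the right. Using $s_i\ge i$, the rightmost $\min(d,t)$ mountains together contain at least $\sum_{i=0}^{\min(d,t)-1}2^i$ leaves. When $d\le t$ this sum equals $2^d-1$, which is at least $k$ since $d=\lfloor\log_2 k\rfloor+1$ implies $k\le 2^d-1$; hence the $k$-th most recent leaf must sit among the rightmost $d$ mountains. When $d>t$ the structure has fewer than $d$ mountains in total, so the claim ``it lies in one of the rightmost $d$'' is trivially satisfied.

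For the height bound, suppose the leaf sits in the $j$-th rightmost mountain. In the case $d\le t$, the preceding count forces $j\le d-1$, so $s_j\le j+1\le d$. In the case $d>t$, we have $j\le t-1<d$, giving $s_j\le j+1\le t\le d$, where the last inequality uses $k\le n$ to deduce $t=\lfloor\log_2(n+1)\rfloor\le\lfloor\log_2 n\rfloor+1$ (plus the case assumption $t<d$). I anticipate no real obstacle; the proof is essentially a direct reading of Lemma~\ref{lem:Sn}~part~2, with only a small case split on $d$ versus $t$ to handle the degenerate situation when there are not $d$ mountains yet.
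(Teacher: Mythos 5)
Your proof is correct and follows essentially the same route as the paper's: both read off the bound $i\le s_i\le i+1$ from Lemma~\ref{lem:Sn}, lower-bound the leaf count of the $d$ rightmost mountains by $\sum_{i=0}^{d-1}2^i=2^d-1\ge k$, and conclude the height bound from $s_j\le j+1\le d$. Your explicit handling of the degenerate case $d>t$ is a minor extra care the paper's proof glosses over, but it changes nothing substantive.
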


Next, we focus on Alice's resync operation of updating her copy of the commitment from $\langle X_m \rangle$ to $\langle X_n \rangle$ after a $k$-increment. 
This commitment is a list of $O(\log n)$ peak hashes, but only $O(\log k)$ of these hashes need to be updated: 
First, we know from Lemma~\ref{lem:merge-steps} that there is at most one merge step where the merging peaks are of height at least $\lceil\log_2 k\rceil+1$, and we claim that these peaks must already appear in the old commitment $\langle X_m \rangle$, so Alice can recreate this merge locally, without any communication overhead. To prove the claim, we note that otherwise there must have been at least two merge steps with peaks of height at least $\lceil\log_2 k\rceil$ during the $k$-increment, which contradicts Lemma~\ref{lem:merge-steps}. 
Then, Alice simply retrieves from the bulletin board all peaks of height at most $\lceil\log_2 k\rceil+1$, and replaces these peaks in her local copy. By Lemma~\ref{lem:Sn}, there are at most $\lceil\log_2 k\rceil +2$ such peaks, and they are placed consecutively at the rightmost end of the peak list. 

\begin{lemma}\label{lem:differ}
    After an increment from $X_m$ to $X_n$ with $k=n-m$, in U-MMB the peak lists $\langle X_m \rangle$ and $\langle X_n \rangle$ differ in only $O(\log k)$ hashes.
\end{lemma}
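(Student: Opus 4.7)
The plan is to partition the peaks by a height threshold $h^\star := \lceil \log_2 k \rceil + 1$, count the possible differences above and below it, and combine the two bounds. The paragraph immediately preceding the statement already sketches this split; I will just tighten it into a proof.

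First I would handle the peaks of height at most $h^\star$. By point 4 of Lemma~\ref{lem:Sn}, both $S_m$ and $S_n$ are weakly decreasing and contain at most two mountains of any fixed height, so each of $\langle X_m \rangle$ and $\langle X_n \rangle$ has at most $2(h^\star + 1) = O(\log k)$ peaks in this low range. Even if every such peak position disagreed between the two commitments, this contributes only $O(\log k)$ differing hashes.

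Next I would show that the high range, $\text{height} > h^\star$, contributes only $O(1)$ differences. Any peak of height $> h^\star$ in $\langle X_n \rangle$ that is not in $\langle X_m \rangle$ must have been produced by a merge step during the $k$-increment whose two inputs both had height $\geq h^\star > \lceil \log_2 k \rceil$; by Lemma~\ref{lem:merge-steps}, at most one such merge step takes place. Symmetrically, a peak of height $> h^\star$ that appears in $\langle X_m \rangle$ but not in $\langle X_n \rangle$ must have been consumed by a merge with another peak of the same or greater height, and the same lemma caps this at one event. Hence at most a constant number of hashes above the threshold can differ.

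Summing the two regimes gives $O(\log k)$ differing hashes, proving the lemma. I do not anticipate any real obstacle; the only subtlety is choosing the threshold correctly, so that Lemma~\ref{lem:merge-steps} applies cleanly to the high range while leaving only $O(\log k)$ slots in the low range. An off-by-one in $h^\star$ would merely change the hidden constant and not affect the asymptotic bound.
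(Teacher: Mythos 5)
Your proof is correct and follows essentially the same route as the paper: the paragraph preceding the lemma splits the peaks at the same height threshold $\lceil\log_2 k\rceil+1$, invokes Lemma~\ref{lem:merge-steps} to bound the changes among high peaks by a constant, and uses Lemma~\ref{lem:Sn} to count the $O(\log k)$ low peaks. Your counting of low peaks via ``at most two per height'' is marginally looser than the paper's positional count, but this only affects the hidden constant.
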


\subsection{Recent-proof compatibility}\label{s:recent}

\begin{lemma}\label{lem:recent}
    Recent-proof compatibility implies low update frequency.
\end{lemma}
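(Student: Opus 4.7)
The plan is to show that recent-proof compatibility forces the recency of an item to grow geometrically between consecutive required updates of its membership proof, so that only $O(\log k)$ updates can occur inside any $k$-increment.

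First, I would fix an item $x_i$ and consider a $k$-increment taking the list state from $X_m$ to $X_{m+k}$ (with $i \le m$). Enumerate the states at which $\pi_{x_i\in X_{\cdot}}$ must be updated within this increment as $m \le n_0 < n_1 < \cdots < n_J \le m+k$, and write $r_j := n_j - i + 1$ for the recency of $x_i$ at the $j$-th update. The goal is to bound $J+1$ by $O(\log k)$.

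Next, I would exploit recent-proof compatibility at each update point. After updating the proof at state $X_{n_j}$, the hypothesis guarantees that $\pi_{x_i\in X_{n_j}}$ remains valid against every commitment $\langle X_{n'}\rangle$ with $n_j \le n' \le n_j + c\,r_j$. Since a further update is required at $n_{j+1}$, we must have $n_{j+1} > n_j + c\,r_j$, and subtracting $i-1$ from both sides yields
\[
r_{j+1} \;>\; (1+c)\,r_j.
\]
Iterating gives $r_J > (1+c)^{J}\,r_0 \ge (1+c)^{J}$, since $r_0 \ge 1$.

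Finally, since $r_J = n_J - i + 1 \le r_0 + k \le 1 + k \cdot r_0/r_0$, we also have $r_J \le r_0 + k$, so $(1+c)^J < 1 + k/r_0 \le 1 + k$. Taking $\log_{1+c}$ gives $J < \log_{1+c}(1+k)$, so the total number of updates during the $k$-increment is at most $\log_{1+c}(1+k) + 1 = O(\log k)$, which is certainly sublinear in $k$; this is exactly the low update frequency property.

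The argument is essentially a geometric series bound, so there is no real obstacle — the only subtle point is making sure the inequality $n_{j+1} > n_j + c\,r_j$ is strict (equality would correspond to the last state at which the old proof is still valid, by the $\le$ in the definition of recent-proof compatibility), and handling the edge case $r_0 = 1$ cleanly, which I would do by the simple bound $r_0 \ge 1$ used above.
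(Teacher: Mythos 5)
Your proof is correct and follows essentially the same route as the paper's: both derive the strict recurrence $r_{j+1}>(1+c)\,r_j$ on the item's recency at consecutive update points and conclude via the geometric bound that the number of updates is at most $\log_{1+c}(k+1)=O(\log k)$. (The garbled intermediate expression ``$r_0+k\le 1+k\cdot r_0/r_0$'' should just read $(1+c)^J<r_J/r_0\le 1+k/r_0\le 1+k$, which is what you in fact use.)
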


The proof of the result above is delayed to Appendix~\ref{s:proofs}. 
Finally, we prove that $\UMMB$ achieves the property of recent-proof compatibility, thus completing the proof of Theorem~\ref{thm:UMMB}, except for the claim on increment proofs, which is delayed to Section~\ref{s:increment}.  
We achieve this property by exploiting the fact that $\MMB$ evolves through lazy merges that can be predicted many appends in advance. 
We add the following rule to our scheme, which can be performed efficiently: 
\begin{itemize}
    \item When a membership proof $\pi_{x_i\in X}$ is generated or updated, if the corresponding leaf is in a mergeable pair of mountains, include the hash of the peak of the sibling mountain. 
\end{itemize}

In other words, we make the proof more ``future proof'' by emulating its state after these mountains are merged. 
This extended proof remains valid by the fact that U-MMB is old-commitment compatible, and the extension does not affect our previous analyses of the scheme, other than by increasing the bound on the proof size by one hash. 

\begin{lemma}\label{lem:rpc}
    U-MMB achieves recent-proof compatibility with parameter $c=1/5$.
\end{lemma}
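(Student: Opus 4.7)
The plan is to reduce recent-proof compatibility to a height bound. By immutability of U-MMB, each ancestor hash of $h_i$ is fixed for all time, so a proof of length $\ell$~-- equal to the current height $s$ of the mountain containing $h_i$, plus one if this mountain is in a mergeable pair at $X_n$~-- is valid against $\langle X_m \rangle$ if and only if the mountain containing $h_i$ at $X_m$ has height at most $\ell$. So I need to bound how many appends can occur before $h_i$'s mountain outgrows $\ell$.

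First I would prove by induction on $h$, using Lemma~\ref{lem:Sn}, the closed-form $n(i,h) = q \cdot 2^h + 3 \cdot 2^{h-1} - 1$ for $h \geq 1$ and $q := \lfloor (i-1)/2^h \rfloor$, where $n(i,h)$ denotes the smallest state at which $h_i$ lies in a height-$h$ mountain. The induction is direct: by the inductive hypothesis, both height-$(h-1)$ subblocks of the $2^h$-block containing $i$ are height-$(h-1)$ mountains by state $(2q+1) \cdot 2^{h-1} + 3 \cdot 2^{h-2} - 1$, after which the earliest subsequent append $n$ with $\nu_2(n+1) = h-1$ consolidates them.

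Given this formula, the argument splits on the parity of $q = \lfloor (i-1)/2^s \rfloor$. When $q$ is odd, $h_i$'s height-$s$ block is the right sibling within its $2^{s+1}$-block, and the earlier-formed left sibling coexists with $h_i$'s mountain throughout its entire height-$s$ lifetime, so the proof is always extended; a direct computation then shows the proof remains valid for at least $2^s$ further appends against $r \leq 3 \cdot 2^s - 2$, comfortably beating $c = 1/5$. When $q$ is even, the weakly-decreasing height pattern forces any left neighbor of $h_i$'s mountain to be strictly taller than $s$, so mergeability instead requires the right sibling to have formed, at state $(2q+5) \cdot 2^{s-1} - 1$. The worst case places $n$ one step earlier, with $i$ at the very start of its $2^{s+1}$-block, yielding $r = 5 \cdot 2^{s-1} - 2$ while the proof remains valid for exactly $2^{s-1}$ more appends; the inequality $5 \cdot 2^{s-1} \geq 5 \cdot 2^{s-1} - 2 = r$ then confirms that $c = 1/5$ suffices.

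The step I expect to be most delicate is the unmergeable case ($q$ even): one must rule out any other height-$s$ mountain near $h_i$'s serving as a merge partner before the right sibling appears. The two global constraints from Lemma~\ref{lem:Sn}~-- at most two height-$s$ mountains coexist at any time, and they are always consecutive in the list~-- do the work: the only candidates for $h_i$'s partner are its two immediate neighbors, and the left one is strictly taller whenever $q$ is even, so the right sibling is the unique option, with its formation time pinned down by the closed-form.
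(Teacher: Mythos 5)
Your proof is correct in substance but follows a genuinely different route from the paper's. The paper never computes formation times exactly: it lower-bounds the number $k$ of appends needed to invalidate the proof via Lemma~\ref{lem:merge-steps} (at most one merge of height at least $\lceil \log_2 k\rceil$ within any $k$-increment), and upper-bounds the recency $r$ by reading off from Lemma~\ref{lem:Sn} the exact height sequence $(s,s,s-2,\dots,1,0)$ or $(s+1,s-1,\dots,1,0)$ of the rightmost mountains at the critical future state; its case split is on whether $M$ is in a mergeable pair at $X_n$ (i.e., whether the proof is extended), not on the parity of $q$. You instead derive the closed form $n(i,h)=q\cdot 2^h+3\cdot 2^{h-1}-1$ for the exact state at which leaf $i$ first reaches a height-$h$ mountain, which pins down both $k$ and $r$ exactly and, as a by-product, shows that mountains are always aligned dyadic blocks, so the merge partner is forced; this closed form is a reusable fact the paper only possesses implicitly, and your worst-case configuration ($r=5\cdot 2^{s-1}-2$ with $2^{s-1}$ further appends of guaranteed validity) coincides with the paper's. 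Two loose ends to tie up: (i) for even $q$ the proof may also be updated \emph{after} the right sibling forms, in which case it is extended and stays valid until the mountain reaches height $s+2$ --- this sub-case is exactly what your odd-$q$ computation covers and yields the more comfortable ratio of at least $1/3$, but it should be stated explicitly rather than left implicit in ``the worst case places $n$ one step earlier''; and (ii) your closed form is stated for $h\geq 1$, so the case $s=0$ (where $i=n$ and $r=1$) needs the same one-line separate treatment the paper gives it.
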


\begin{proof} 
Consider an item $x_i$ whose proof $\pi_{x_i\in X_n}$ is updated at state $X_n$, $1\leq i\leq n$, when the $i$-th leaf is in a mountain $M$ of height $s$. 
Our strategy will be to find a lower bound $k$ on an increment that makes the proof invalid, and an upper bound $r$ on the recency ($n-i+1$) of the item, both as functions of $s$, and show their ratio observes $k/r > 1/5$. 
We assume that $s\geq 1$, as otherwise $i=n$, and the claim follows trivially by setting $k=r=1$.

Assume first that $M$ is not part of a mergeable pair. 
The proof will become invalid only after a)~another height-$s$ mountain is created, and then b)~it merges with $M$. 
As each event requires merging mountains of height at least $s-1$, by Lemma~\ref{lem:merge-steps} they cannot both happen within an increment of $2^{s-1}$ appends, so we get the bound $k=2^{s-1}+1$. 
On the other hand, at the future state where another height-$s$ mountain is created, we have by Lemma~\ref{lem:Sn} that the $s+1$ rightmost mountains must be of heights precisely $(s, s, s-2, s-3, \cdots, 1, 0)$, so they contain $5\cdot 2^{s-1} - 1$ leaves, including the $i$-th leaf. 
This proves that the recency of item $x_i$ at state $X_n$ is at most $r=5\cdot 2^{s-1} - 2$. 
We obtain the ratio $\frac{k}{r}\geq \frac{2^{s-1}+1}{5\cdot 2^{s-1} - 2}> \frac{1}{5}$.

Now assume that $M$ is in a mergeable pair, so proof $\pi_{x_i\in X_n}$ is extended. 
It will become invalid only after a)~$M$ merges with its sibling, and then b)~the resulting mountain merges again. 
As each event requires merging mountains of height at least $s$, by Lemma~\ref{lem:merge-steps} they cannot both happen within an increment of $2^{s}$ appends, so we get the bound $k=2^{s}+1$. 
Next, at the future state where $M$ merges with its sibling, we have by Lemma~\ref{lem:Sn} that the $s+1$ rightmost mountains must be of heights precisely $(s+1, s-1, s-2, \cdots, 1, 0)$, so they contain $3\cdot 2^s-1$ leaves, including the $i$-th leaf. 
Hence the recency of item $x_i$ at state $X_n$ is at most $r=3\cdot 2^s-2$. 
We obtain the ratio $\frac{k}{r}\geq \frac{2^{s}+1}{3\cdot 2^{s} - 2}> \frac{1}{3}> \frac{1}{5}$. 
This completes the proof.
\end{proof}


\section{F-MMB: a scheme with a constant-size commitment}\label{s:forward}

In the previous section we introduced a scheme that is succinct, incremental and optimally additive, as well as asynchronous, but requires keeping a commitment of logarithmic size. 
We propose further schemes that achieve a constant-size commitment and varying degrees of the first three properties, but lose the asynchrony properties. 
Instead, we consider a model where a scheme manager provides up-to-date proofs on demand to any participant. 


\begin{table}[htp]
\centering
\begin{tabular}{|l|c|c|c|c|}
\hline
                    & \textbf{MMR} & \textbf{F-MMR} & \textbf{F-MMB} & \textbf{MMB} \\
\hline
Succinctness                               & \cm       & \cm       & \cm       & \cm \\
Incrementality                             & \xm       & \xm       & partly    & \cm \\
CCC retrieve IP for $k$-increment           & $O(\log n)$  & $O(\log n)$  & $O(\log n)$  & $O(\log k)$ \\
CCC update MP after $k$-increment       & $O(\log n)$  & $O(\log n)$  & $O(\log n)$  & $O(\log k)$ \\
Worst-case MP size of $k$-th newest item$^*$    & $O(\log n)$  & $O(\log n)$  & $2\log_2 k$  & $2\log_2 k$ \\ 
Amortized MP size of $k$-th newest item$^*$ &$O(\log n)$ &$\frac{3}{2}\log_2 k$ &$2\log_2 k$ &$\frac{11}{8}\log_2 k$ \\
\hline
Optimal additivity                         & \xm       &on average &on average & \cm  \\
Worst-case append cost$^\dagger$ & $O(\log n)$  & $O(\log n)$  & $O(\log n)$  & $5$ \\
Amortized append cost$^\dagger$  & $O(\log n)$  & $2$       & $3$       & $4$ \\
\hline
\end{tabular}
\caption{Comparison of variants of $\MMR$~\cite{MMR} and $\MMB$, assuming $1\leq k\leq n=|X|$. 
MP and IP stand for membership and increment proof, respectively. 
$^\dagger$Cost measured in hash computations per append, excluding the hash of the new leaf. 
$^*$Number of hashes in proof $\pi_{x_{n-k+1}\in X_n}$, with a hidden additive $O(1)$ term, as $k$ is fixed and $n$ is a variable; see Appendix~\ref{s:amortized} for details. 
} 
\label{tab:blockchain}
\end{table}

For this model we propose two scheme variants, $\FMMB$ and $\MMB$, presented in this and the next section, respectively. 
For the sake of completeness, in Table~\ref{tab:blockchain} we compare them against two possible variants of $\MMR$~\cite{MMR} that we describe in Section~\ref{s:bagging}. 
We include amortized complexities as well as worst-case complexities. 
All four schemes are succinct and maintain a constant-size commitment, but they reach other properties to varying degrees. 
In particular, $\FMMB$ provides short membership proofs for recent items, and optimal additivity on average, while $\MMB$ achieves full incrementality and optimal additivity.

We present $\FMMB$ mostly as a pedagogical stepping stone in our presentation, as its structure is considerably simpler than that of $\MMB$. 
However, $\FMMB$ may be ideal for applications where participants neither store membership proofs nor require increment proofs.  

\subsection{Bagging the peaks of a mountain range}\label{s:bagging}

We follow Peter Todd's terminology used in $\MMR$~\cite{MMR}. 
That structure consists of two layers:  
the bottom layer is the list of mountains in $\UMMR$~\cite{reyzin2016efficient}, 
while in the top layer peaks are linked together with a Merkle chain. 
Todd refers to this linking as \emph{peak bagging}, and to the resulting structure as a \emph{mountain range}.%
\footnote{Among hikers, \emph{peak bagging} is an activity that consists of attempting to reach all summits of a list of mountains, which may be of high popularity or significance in a geographical area.} 
This is why we refer to the scheme in~\cite{reyzin2016efficient} as ``unbagged'' $\MMR$ ($\UMMR$). 
In $\MMR$, peaks are bagged in a \emph{backward} fashion, from right to left, which helps keep the structure well balanced; see Figure~\ref{fig:bagging}. 
Let the ``forward-bagged'' $\MMR$ ($\FMMR$) be the variant where peaks are bagged in the opposite way.

\begin{figure}[htp]
\centering
\includegraphics[width=0.8\textwidth]{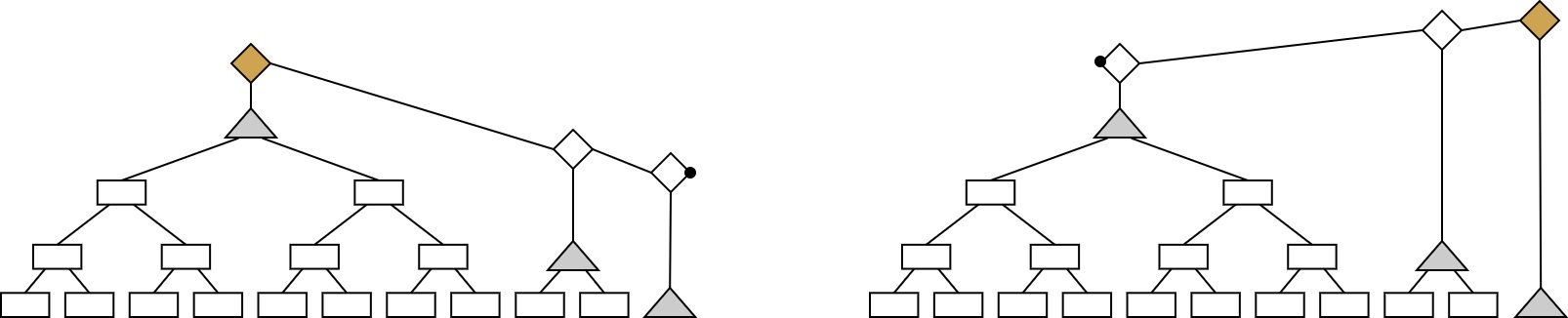}
\caption{For $n=11$, $\UMMB$ and $\UMMR$ coincidentally produce the same height sequence $S_{11}=(3,1,0)$. 
We backward-bag these mountains on the left and forward-bag them on the right. 
Hence, the structure on the left is an $\MMR$, while that on the right is an $\FMMR$ and an $\FMMB$. 
}
\label{fig:bagging}
\end{figure}

The \emph{forward-bagged} $\MMB$ ($\FMMB$) has a similar two-layer structure: 
the bottom layer consists of the list of mountains in $\UMMB$, described in the previous section, while in the top layer we forward-bag the peaks. 
We refer to the top-layer bagging nodes as \emph{range nodes}, and represent them visually with rhombi.  
To avoid visual clutter, moving forward we hide internal (i.e., non-peak) mountain nodes, so that each mountain is represented only by its peak, which is a triangle labeled with its height; see Figure~\ref{fig:FMMB}. 
Also, we place all peaks on one horizontal level and all range nodes on another horizontal level, where a forward bagging is always assumed, so the root is the rightmost range node.

\begin{figure}[htp]
\centering
\includegraphics[width=0.45\textwidth]{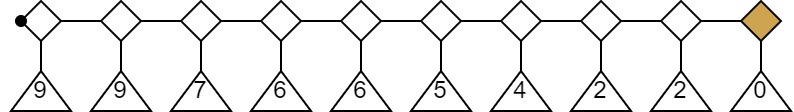}
\caption{Simplified representation of F-MMB for $n=1337$. 
Internal mountain nodes are hidden, peaks are triangles labeled with their height, range nodes are rhombi, and the root is in brown.}
\label{fig:FMMB}
\end{figure}

\subsection{Append operation with forward bagging}\label{s:app-f}

As there are $O(\log n)$ mountains in $\MMR$~\cite{MMR}, its backward bagging takes logarithmic time and needs to be fully recomputed after each append. 
The $\FMMR$ variant is just as expensive if performed from scratch, but has the advantage that most range nodes do not need to be recomputed after an append, as one only needs to update the nodes above and to the right of the mountains affected by merges. 
In fact, it can be checked that the $\FMMR$ append time remains worst-case logarithmic, but reduces to constant on average.%
\footnote{The amortized number of hash computations per append in $\FMMR$ is $2$: the amortized number of merge steps is $1$, and then a single bagging node is (re-)computed above the resulting rightmost peak.}

Similarly, in $\FMMB$ we perform the same append operation as in $\UMMB$, with the addition of a partial forward rebagging that updates a minimum number of range nodes. 
Hence, in $\FMMB$ the \textbf{append operation}, performed by the scheme manager, is: 
\begin{enumerate}
    \item \emph{Add step:} add the new leaf as a height-0 mountain to the right end of the mountain list; 
    \item \emph{Merge step:} if there are mergeable mountains, merge the rightmost pair, so that the new mountain is one unit higher and replaces the pair in-place in the mountain list; and 
    \item \emph{Bag step:} if there is a merge step, update all range nodes above and to the right of the merge peak; otherwise, create a new range node above the new leaf.
\end{enumerate}

\begin{figure}[htp]
\centering
\includegraphics[width=0.75\textwidth]{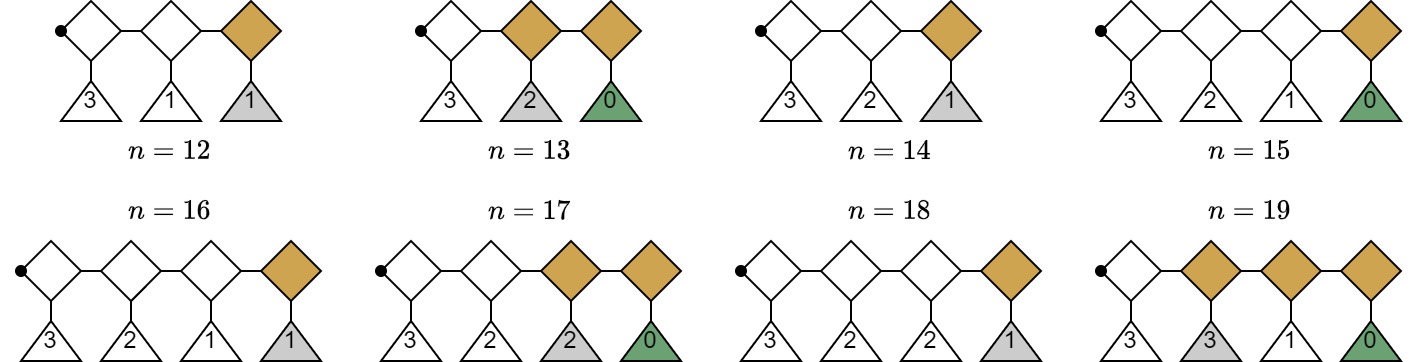}
\caption{Evolution of the $\FMMB$ structure during a sequence of item appends, starting from that on the right-hand side of Figure~\ref{fig:bagging}. 
We color all nodes that are created or updated during each append, including the new leaf (in green), the merge peak (in gray) and new range nodes (in brown).}
\label{fig:FMMBs}
\end{figure}

\begin{lemma}\label{lem:hash-F} 
The number of hash computations needed during the $\FMMB$ append operation is at most $\lfloor \log_2(n+1) \rfloor+1$, with an amortized value of $3$.
\end{lemma}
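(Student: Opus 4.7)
The plan is to decompose each $\FMMB$ append into its three steps (add, merge, bag), bound the hash computations per step, and then aggregate over $n$ appends to obtain the amortized cost.

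For the worst case, the add step contributes zero hashes (leaf hashes are inputs), and the merge step at most one (the merge peak, when mergeable mountains exist). For the bag step, Lemma~\ref{lem:Sn} tells me the post-append structure has $t := \lfloor \log_2(n+1) \rfloor$ mountains, so the forward-bagged chain has at most $t-1$ internal range nodes. The forward recurrence $b_i = H(b_{i-1} || P_i)$ makes $b_i$ depend only on peaks $P_1, \dots, P_i$, so only range nodes from the modified peak rightward to the root need recomputation: at most $t-1$ existing ones, plus the possibly new $b_t$ in the non-merge case, giving at most $t$ bag hashes. Summing across steps yields the worst-case bound of $\lfloor \log_2(n+1) \rfloor + 1$.

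For the amortized claim I would sum the per-step hashes across $n$ appends. The total number of merges equals $n - t$, since the mountain count starts at $0$, grows by $1$ per add step, shrinks by $1$ per merge step, and ends at $t$; so the merge step contributes amortized $1$. For the bag step, Lemma~\ref{lem:Sn} places the merge peak at position $t - j$ from the left, where $j = \nu_2(n+1)$ is the $2$-adic valuation, so at most $j + 1$ range nodes are refreshed (and at most $1$ in the non-merge case). Both cases are dominated by $\nu_2(n+1) + 1$, so the total bag hashes over $n$ appends are at most $\sum_{m=1}^{n}(\nu_2(m+1) + 1) = n + \sum_{k=2}^{n+1} \nu_2(k)$. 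Applying the classical identity $\sum_{k=1}^{N} \nu_2(k) = N - s_2(N)$, where $s_2$ denotes the binary digit sum, bounds this by $2n + 1$, yielding amortized bag cost tending to $2$. Adding the two contributions gives the claimed amortized cost of $3$.

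The main obstacle is not conceptual but careful bookkeeping at the boundary: the leftmost-merge case, where $b_1 = P_1$ incurs no hash and saves one computation, and the small-$n$ regime, where no range nodes yet exist. These only affect lower-order terms, but must be tracked to justify the uniform bounds without off-by-one errors. A secondary subtlety is verifying that the "update all range nodes rightward of the modified peak" description in Section~\ref{s:app-f} correctly captures every case of the bag step under the forward recurrence used here.
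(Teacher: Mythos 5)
Your proposal is correct and follows essentially the same route as the paper's proof: one hash for the merge peak plus $\nu_2(n+1)+1$ range-node updates in the worst case, with the amortized value of $3$ obtained from the fact that $\nu_2$ averages to $1$ (the paper cites its Lemma on the amortized $2$-adic valuation, proved via Legendre's formula, which is the same digit-sum identity you invoke). The minor bookkeeping differences you flag (whether the leftmost peak carries its own range node, and the rare non-merge case) only affect lower-order terms, exactly as you say.
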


We prove Lemma~\ref{lem:hash-F} in Appendix~\ref{s:proofs}, and only mention here that the amortized value corresponds to one hash computation in the bottom layer (see Lemma~\ref{lem:hash-U}) plus an average of only two updates in the top layer with partial rebagging. 
The manager keeps all peaks and range nodes in cache in order to perform the append, hence the logarithmic memory requirement. 
We thus obtain the following result, whose formal proof is skipped.

\begin{lemma}\label{lem:append-F}
    To perform the F-MMB append operation, the scheme manager requires $O(\log n)$ memory, and time that is $O(\log n)$ in the worst case and $O(1)$ on average. 
\end{lemma}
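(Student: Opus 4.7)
The plan is to reduce the statement to bookkeeping around the structural facts already established: Lemma~\ref{lem:Sn} bounds the number of peaks, Lemma~\ref{lem:U-append} handles the bottom layer, and Lemma~\ref{lem:hash-F} bounds the number of hash computations per append in both the worst and amortized case. What remains is to check that (a)~the manager's stored state has size $O(\log n)$, (b)~the non-hashing bookkeeping per append matches the hashing cost, and (c)~the three steps can be executed in the claimed order without any hidden logarithmic factor.

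For the memory bound, I would argue that the manager need only keep the $t=\lfloor\log_2(n+1)\rfloor$ mountain peaks (by Lemma~\ref{lem:Sn}), together with the $O(\log n)$ range nodes on the top-layer forward chain (one range node sitting above each peak), plus a stack of height at most $2$ recording which peaks are currently part of a mergeable pair. Everything below the peaks can be discarded, since mountains are immutable once formed and their internal hashes appear in membership proofs on demand (those proofs will be handled separately). This gives $O(\log n)$ memory.

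For the time bound, I would walk through the three steps of the append. The add step adjoins the new leaf as a height-$0$ peak, which is $O(1)$ work plus a stack update. Locating the rightmost mergeable pair, if any, is $O(1)$ using the same stack trick noted for $\UMMB$ in Section~\ref{s:unbagged-description}. The merge step then performs at most one hash to form the new peak and updates the stack in $O(1)$. The bag step either (i)~appends a single new range node above the fresh leaf, or (ii)~recomputes the range nodes sitting above and to the right of the merge peak; in either case, the number of nodes touched equals the number of top-layer hash computations. Summing across all three steps, the total running time is bounded by a constant times the total number of hash computations made during the append, plus a constant overhead for stack maintenance.

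Given this reduction, the time bounds fall out directly from Lemma~\ref{lem:hash-F}: the worst-case number of hashes is $\lfloor\log_2(n+1)\rfloor+1 = O(\log n)$, and the amortized number of hashes is the constant $3$, yielding $O(\log n)$ worst-case and $O(1)$ amortized time. The main obstacle --~if there is one~-- is making the linearity claim in step~(b) watertight: one must check that the partial forward rebagging can be carried out while walking through the top-layer chain from left to right in a single pass, so that bookkeeping truly adds only $O(1)$ per touched node and no search or rebalancing is required. This follows from the fact that, after a merge at mountain position $j$, exactly the range nodes at positions $j, j+1, \ldots, t-1$ of the forward-bagged chain change, and they can be recomputed in order from their stored children in constant time each.
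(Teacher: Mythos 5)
Your proof is correct and takes essentially the same route as the paper, which in fact skips the formal argument and merely notes that the manager caches all $O(\log n)$ peaks and range nodes and that the time bounds follow from Lemma~\ref{lem:hash-F} --- precisely the reduction you spell out. One small inaccuracy: the stack of mergeable pairs can hold $\Theta(\log n)$ entries (one per height occurring twice, e.g., when $n+1=(1010\cdots)_2$; see Lemma~\ref{lem:Sn}), not at most $2$, though this does not affect the claimed $O(\log n)$ memory bound.
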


\subsection{Resync operations and analysis}

While a backward bagging would result in a more balanced overall structure, we embrace the slight imbalance of forward bagging, as it provides the following useful property.

\begin{lemma}\label{lem:f-distance} 
For any $1\leq k\leq n$, in F-MMB the $k$-th most recently appended item has a membership proof consisting of at most $2\lfloor \log_2 k \rfloor +2 =O(\log k)$ hashes. 
\end{lemma}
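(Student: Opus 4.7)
The plan is to decompose the membership proof of the $k$-th most recent item into two parts---the hashes collected inside its mountain, and the hashes collected in the range (bagging) layer---and to bound each by $\lfloor \log_2 k\rfloor + 1$.

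Setting $d := \lfloor \log_2 k\rfloor + 1$, the in-mountain bound follows immediately from Lemma~\ref{lem:kth}: the target leaf sits in one of the $d$ rightmost mountains, and that mountain has height at most $d$, so climbing from the leaf up to its peak produces exactly one sibling hash per level, and therefore contributes at most $d$ hashes to the proof.

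For the range layer I would describe the forward-bagging structure explicitly. Writing the $t$ peaks from left to right as $p_1, \ldots, p_t$, the append rule from Section~\ref{s:app-f} produces range nodes $r_1, \ldots, r_{t-1}$, where $r_1$ is the parent of $p_1$ and $p_2$, and $r_i$ is the parent of $r_{i-1}$ and $p_{i+1}$ for $i \geq 2$, with $r_{t-1}$ being the root. A short induction then shows that $p_t$ sits at depth $1$ in this range tree, $p_{t-1}$ at depth $2$, and in general the peak at position $j$ from the right sits at depth $\min(j, t-1)$. Hence the range-layer contribution to the proof is at most $j$ hashes, and Lemma~\ref{lem:kth} gives $j \leq d$.

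Summing the two parts yields at most $2d = 2\lfloor \log_2 k\rfloor + 2$ hashes, as claimed. I do not anticipate any real obstacle: the argument reduces to invoking Lemma~\ref{lem:kth} together with a depth computation on a chain-shaped tree, and is in fact the whole point of the slight imbalance built into forward bagging. The only boundary cases worth checking are $t = 1$ (no range nodes exist and the single peak is already the root) and $j = t$ (the leftmost peak sits at the same depth as $p_2$), both of which only tighten the bound.
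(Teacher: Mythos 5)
Your proposal is correct and follows essentially the same route as the paper: invoke Lemma~\ref{lem:kth} to place the leaf in one of the $d$ rightmost mountains of height at most $d$, bound the leaf-to-peak distance by $d$ and the peak-to-root distance in the forward-bagged range by $d$, and sum. The only (immaterial) discrepancy is your description of the bagging layer with $t-1$ range nodes, whereas the paper uses $t$ range nodes (one above each peak, the leftmost having a default $\bullet$ sibling); either convention yields the same depth bound.
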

\begin{proof} 
By Lemma~\ref{lem:kth}, if $d:=\lfloor \log_2 k \rfloor +1$ then the $k$-th most recent leaf is in one of the $d$ rightmost mountains, whose height is at most $d$. 
This means that the distance from the leaf to the mountain peak is bounded by $d$, and the distance from this peak to the range root is bounded by $d$ again, so the overall leaf-to-root distance is at most $2d$.
\end{proof}

Hence, recent items have short proofs, regardless of the list size $n$. 
The $\FMMB$ scheme is thus partly incremental, as stated in Table~\ref{tab:blockchain}, and as such already provides most of the protocol improvements highlighted in Table~\ref{tab:apps}, when compared to $\MMR$. 
To conclude the section, we observe that all other resync operations can be performed with little communication complexity, but may require more computation. 

\begin{lemma} 
In $\FMMB$, a participant resynchronizing after a $k$-increment needs $O(\log n)$ computational and $O(\log k)$ communication complexity to perform any resync operation.
\end{lemma}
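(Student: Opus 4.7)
The plan is to analyze each of the four basic resync operations in turn and identify the step that dominates the cost. Since the $\FMMB$ commitment is a single hash, simply receiving and storing $\langle X_n\rangle$ is $O(1)$ in both measures, so operation~(1) (updating the commitment) is immediate.

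For retrieving and verifying a new membership proof $\pi_{x_i\in X_n}$ with $m<i\leq n$, I would note that $x_i$ is among the last $k$ appended items, so Lemma~\ref{lem:f-distance} yields an $O(\log k)$-sized proof that can be checked by $O(\log k)$ hash evaluations along the leaf-to-root path; this handles operation~(3).

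For the increment proof from $X_m$ to $X_n$, my plan is to exploit Lemma~\ref{lem:differ}: only the rightmost $O(\log k)$ peaks differ between the two states. In the forward-bagged top layer each range node is the hash of its predecessor with the next peak to the right, so it depends only on peaks strictly to its left. Consequently the range node $r$ sitting just above the last unchanged peak is literally the same node in both chains, and covers exactly the unchanged prefix. The increment proof then transmits the $O(\log k)$ old peaks to the right of $r$ (so Alice can re-hash the old chain tail and certify $r$ against $\langle X_m\rangle$) together with the $O(\log k)$ new peaks to the right of $r$ (so she can re-hash the new chain tail and recover $\langle X_n\rangle$), giving $O(\log k)$ communication and $O(\log k)$ computation.

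The hard part lies in operation~(4), updating an old membership proof $\pi_{x_i\in X_m}$ with $i\leq m$. Its mountain portion either is unchanged or extends by at most $O(\log k)$ hashes from merges involving $x_i$'s peak (Lemma~\ref{lem:merge-steps}), and its chain portion changes only in those entries sitting above a peak that differed between the two states, again $O(\log k)$ positions; so Bob transmits $O(\log k)$ new hashes. However, to verify the updated proof against $\langle X_n\rangle$, Alice must rehash the entire leaf-to-root path, which in the forward-bagged structure has length $\Theta(\log n)$ whenever $x_i$'s peak sits toward the left end of the peak list. This $\Theta(\log n)$ computational cost cannot be avoided under a constant-size commitment without the stronger incrementality property that $\MMB$ will provide in the next section. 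The only delicate bit is the precise bookkeeping of which chain entries differ, which I would handle using Lemma~\ref{lem:Sn} applied to the old and new mountain lists, deferring the formalization of the increment proof to Section~\ref{s:increment}.
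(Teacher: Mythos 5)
Your treatment of operations (1), (3) and (4) lands on the right bounds, and your observation that verifying an updated old proof forces a $\Theta(\log n)$ re-hash of the leaf-to-root path is exactly the reason $\FMMB$ is only ``partly'' incremental. But your increment-proof argument contains a genuine gap, rooted in a misreading of Lemma~\ref{lem:differ}: that lemma says the peak lists $\langle X_m\rangle_{\UMMB}$ and $\langle X_n\rangle_{\UMMB}$ differ in $O(\log k)$ \emph{hashes}, not that the differing peaks are the \emph{rightmost} $O(\log k)$ ones. The analysis preceding that lemma allows one high merge (of height at least $\lceil\log_2 k\rceil$) whose merge peak sits at position $\nu_2(n'+1)$ from the right, which can be $\Theta(\log n)$ even for $k=1$. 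Under forward bagging every range node depends on all peaks to its left, so a single changed peak that far from the right end changes $\Theta(\log n)$ range nodes --- this is precisely the failure mode the paper cites at the start of Section~\ref{s:double} as the motivation for $\MMB$'s double bagging. Consequently your node $r$ ``just above the last unchanged peak'' can have $\Theta(\log n)$ peaks to its right, so re-hashing the two chain tails costs $\Theta(\log n)$ computation (your claimed $O(\log k)$ is wrong, and indeed Table~\ref{tab:blockchain} lists the $\FMMB$ increment-proof CCC as $O(\log n)$), and transmitting those peaks costs $\Theta(\log n)$ communication unless you additionally assume Alice already stores the full peak list.

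That extra assumption is exactly the paper's (much shorter) argument: the participant keeps all $O(\log n)$ mountain peaks in local memory, performs the corresponding $\UMMB$ resync operation --- which by Theorem~\ref{thm:UMMB} costs $O(\log k)$ CCC, with the one high merge reconstructed locally from two peaks she already holds --- and then rebuilds all forward-bagging range nodes locally in $O(\log n)$ time with zero additional communication. This handles all four resync operations uniformly and cleanly separates the $O(\log k)$ communication from the $O(\log n)$ computation. To repair your proposal, either adopt that local-peak-storage model explicitly, or weaken your increment-proof claim to $O(\log n)$ computation and route the communication bound through the locally reconstructible high merge.
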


To see why this statement holds, notice that a participant can store in local memory all mountain peaks, execute any $\UMMB$ resync operation, and then build the bagging nodes locally in $O(\log n)$ time but without any additional communication overhead. 
Hence, the statement follows from Theorem~\ref{thm:UMMB}. 
Of course, this requires the scheme manager to answer queries about updates to the peak list. 
For a participant not storing mountain peaks, both the communication and computational complexities will be $O(\log n)$.

\section{The Merkle Mountain Belt}\label{s:double}

Consider the different steps of the append operation: lazy merging reduces the merging cost from logarithmic to constant, and similarly forward bagging reduces the bagging cost from logarithmic to constant \emph{on average}. 
Yet, when the merge takes place far from the right end of the mountain list, we need to update many range nodes. 
In turn, this substantially changes the membership proofs of most items, making resync operations non-incremental.  
In this section we propose a novel bagging procedure that circumvents this issue.  

As a first approach, consider partitioning the U-MMB mountain list into two sublists of roughly equal size, and forward-bagging each one to obtain two separate mountain ranges. 
This way, when a merge takes place, we only need to update range nodes within the affected range, leaving the other range unaffected.  
This trick effectively halves the worst-case bound on the number of range node updates triggered by a merge. 

The next idea, of course, is to partition the mountains into even more ranges. 
But rather than fixing a size for these ranges, we define their boundaries in terms of the position of the mergeable pairs of mountains:  
our goal is to have each mergeable pair placed at the very right end of a range. 
This way, wherever a merge occurs, only the range node directly above the merge peak needs to be updated, as it is the root of its range. 

However, we now need an additional, third layer, where all mountain ranges are forward-bagged once again into a \emph{mountain belt}.%
\footnote{In geology, a mountain belt is a group of mountain ranges with similarity in structure and a common cause, usually an orogeny. 
E.g., in Ecuador, the Andes mountain belt consists of two parallel ranges, the Cordillera Occidental and the Cordillera Oriental.} 
This gives rise to the Merkle Mountain Belt (MMB), a succinct, incremental, and optimally additive scheme with an $O(1)$ sized commitment. 

\subsection{Append operation with double bagging}\label{s:MMB-append}

We introduce the $\MMB$ structure with the help of a visual example in Figure~\ref{fig:MMB}; compare it against the $\FMMB$ structure in Figure~\ref{fig:FMMB} for the same value of $n$. 
As before, we hide internal mountain nodes, represent peaks with triangles and place them on one horizontal line, represent range nodes with rhombi and place them on a second line, and finally we refer to the top-layer bagging nodes as \emph{belt nodes}, represent them with circles, and place them on a third line. 
A range with $i$ mountains requires $i$ range nodes, and similarly $j$ ranges require $j$ belt nodes, giving a total of at most $2t$ bagging nodes for a list of $t=\lfloor \log_2 (n+1) \rfloor$ mountains. 
All peaks and bagging nodes are kept in memory by the manager, hence the logarithmic memory requirement.

\begin{figure}[htp]
\centering
\includegraphics[width=0.55\textwidth]{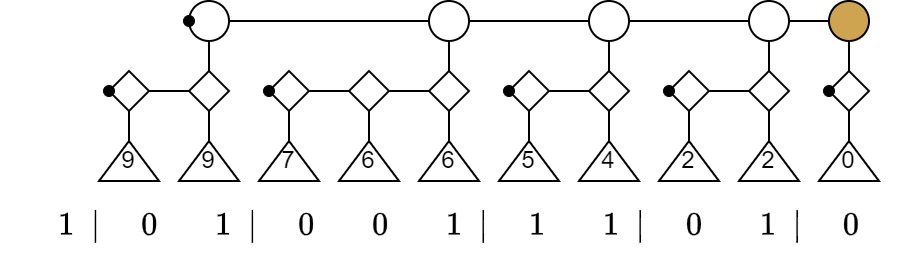}
\caption{MMB structure for same value $n=1337$ as in Figure~\ref{fig:FMMB}, and binary representation of $n+1$ with encoded range splits.  
Peaks are represented with triangles, range nodes with rhombi and belt nodes with circles. The root is in brown and its hash is the commitment.}
\label{fig:MMB}
\end{figure}

The MMB structure is fully characterized once we establish how the U-MMB mountain list is partitioned. 
We do so with the following \textbf{range split conditions}: consecutive mountains $(M, M')$ belong to different ranges if and only if
\begin{enumerate}
    \item their height difference is 2, or
    \item $M$ forms a mergeable pair with the mountain to its left.
\end{enumerate}

An equivalent characterization, which will be useful for the scheme analysis, can be established in terms of the binary representation of $n+1$: a range split occurs wherever there is a bit sequence $(1|0)$ or $(01|)$ in it, where the vertical bar indicates the location of the corresponding split; see Figures \ref{fig:MMB} and \ref{fig:MMBs}. 
The equivalence follows easily from Lemma~\ref{lem:Sn}.

While the second split condition guarantees that each mergeable pair sits at the right end of a range, the first condition ensures that, after a merge, the new merge peak \emph{remains} at the right end of its range, and does not join the range to its right (which would trigger many range node updates).  
However, if the merge peak sits alone in its range, it might join the range to its left. 
Similarly, the new leaf joins the rightmost range whenever possible.
Hence, these rules minimize the number of range node updates needed, while also not creating more splits than necessary. 
We present a series of appends in Figure~\ref{fig:MMBs} for illustration purposes.

\begin{figure}[htp]
\centering
\includegraphics[width=0.85\textwidth]{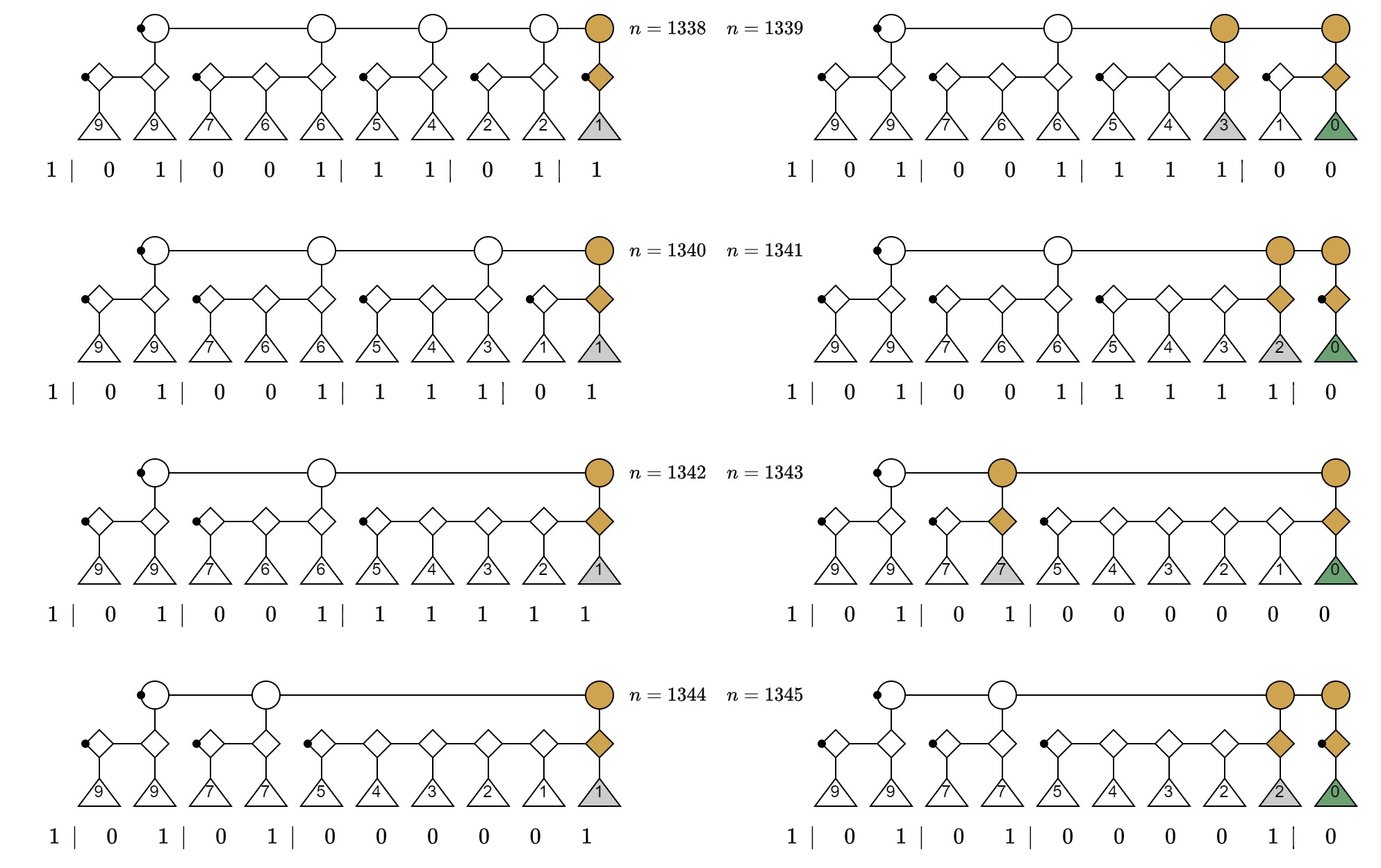}
\caption{Evolution of the MMB structure during a sequence of item appends, starting from the structure in Figure~\ref{fig:MMB}. 
We color all nodes that are created or updated during each append, including the new leaf (in green), the merge peak (in gray), and new range and belt nodes (in brown).}
\label{fig:MMBs}
\end{figure}

During an append, we execute a partial rebagging that preserves as many nodes as possible. 
Naturally, we need to update the range and belt nodes directly above the merge peak, as well as the range and belt nodes directly above the new leaf, but we claim that \emph{no other bagging nodes need updating}. 
This is clear from our previous discussion for the range nodes. 
But how do we avoid a long sequence of updates among belt nodes?
The answer lies in the next observation, which states that although the merge peak may sit arbitrarily far from the right end of the mountain list, it is always conveniently close to the belt root.

\begin{lemma}\label{lem:close}
If an append operation in MMB has a merge step, the new merge peak sits at the right end of its range. Moreover, it sits in either the rightmost or second rightmost range.
\end{lemma}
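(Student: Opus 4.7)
The plan is to work entirely in the post-append structure $S_n$ (so $n$ is the new leaf count) and translate the range split conditions into statements about the binary representation of $n+1$, using Lemma~\ref{lem:Sn}. Since the append involves a merge step, $n+1$ is not a power of two, and by point~3 of Lemma~\ref{lem:Sn} the merge peak sits at position $j := \nu_2(n+1) \geq 0$ with height $s_j = b_j + j = j+1$. Writing $(n+1)$ in binary as $(b_t \cdots b_{j+1}\, 1\, 0 \cdots 0)$ with $j$ trailing zeros pinpoints every neighbor I need.

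First I would show the merge peak is at the right end of its range. If $j=0$ this is immediate, since the merge peak is the rightmost mountain of the whole list. If $j\geq 1$, the mountain immediately to the right of the merge peak sits at position $j-1$ with height $s_{j-1} = b_{j-1}+(j-1) = j-1$, giving a height gap of $(j+1)-(j-1)=2$ with the merge peak. By range split condition~1 this forces a range boundary between positions $j$ and $j-1$, so nothing to the right of the merge peak lies in its range.

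Next I would show the merge peak's range is either the rightmost or the second rightmost. For $j=0$ it is the rightmost. For $j\geq 1$, I claim positions $0,1,\dots,j-1$ (all having $b_i=0$ and heights $0,1,\dots,j-1$) form a single range. Condition~1 is never triggered among them because consecutive height differences are all $1$. Condition~2 for a split between positions $i+1$ and $i$ (for $0 \leq i \leq j-2$) would require the mountains at positions $i+2$ and $i+1$ to be mergeable, but their heights are $i+2$ and $i+1$, a gap of $1$. So positions $0,\dots,j-1$ constitute the rightmost range, and the merge peak's range is second rightmost.

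The steps are routine once the bit-level characterization is in place; the main thing to be careful with is the edge case $j=0$ (where there is nothing strictly to the right of the merge peak) and the fact that condition~2 for the split between positions $j-1$ and $j$ need not be examined separately, since condition~1 already produces that split. No difficulty is expected beyond bookkeeping.
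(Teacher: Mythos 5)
Your proof is correct and follows essentially the same route as the paper: both arguments rest on Lemma~\ref{lem:Sn}, placing the merge peak at position $j=\nu_2(n+1)$ so that all positions to its right carry a zero bit, which forces a split immediately to the right of the merge peak and no splits among the trailing positions $0,\dots,j-1$ (the paper just phrases this via its bit-sequence characterization $(1|0)$ / $(01|)$ of the split conditions). One cosmetic slip: in your condition-2 check, when $i+2=j$ the mountain at position $i+2$ has height $j+1$, not $i+2$, but the pair is still non-mergeable, so the conclusion stands.
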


\begin{proof}
Recall from Lemma~\ref{lem:Sn} that the new merge peak sits at position $\nu_{2}(n+1)$, which is the position of the rightmost ``one'' in the binary representation of $n+1$, so there are only zeroes to its right. 
As the range split conditions produce a split between a one and a zero, but no splits within a sequence of zeros, we conclude that this merge peak sits at the right end of a range, while all mountains to its right (if any) form another single range.
\end{proof}

\begin{lemma}\label{lem:hash-d} 
The number of hash computations needed during the $\MMB$ append operation is at most $5$, with an amortized value of $4$.
\end{lemma}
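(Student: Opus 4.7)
The plan is to count hash computations layer by layer during an MMB append, leveraging Lemma~\ref{lem:close} to confine updates to the right end of the structure.

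For the worst-case bound of 5, I would handle the three layers separately. The bottom (mountain) layer contributes at most 1 hash by Lemma~\ref{lem:hash-U}. In the middle layer, the new merge peak (if any) sits at the right end of its range by Lemma~\ref{lem:close}, and the new leaf sits at the right end of the rightmost range by construction; since each range is forward-bagged, only the rightmost range node depends on those rightmost peaks, so each such change requires at most one new hash, giving at most 2 new middle-layer hashes. In the top layer, Lemma~\ref{lem:close} places the merge peak's range among the two rightmost, and a change to the $i$-th range root forces recomputation only of belt nodes $B_i, B_{i+1}, \ldots$; hence the merge peak triggers at most 2 belt updates, while the new leaf (always in the rightmost range) affects only the belt root, already counted. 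Summing yields $1 + 2 + 2 = 5$.

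For the amortized bound of 4, I would combine the $O(\log N)$-exception count for merges with a case analysis of the bagging layers. By Lemma~\ref{lem:Sn}, the merge step is skipped exactly when $n+1$ is a power of 2, so across $N$ appends the bottom layer costs $N - \lfloor \log_2(N+1)\rfloor$ hashes, amortizing to $1$ per append. For the bagging layers, I would split appends by the value of $j := \nu_2(n+1)$: (i)~no merge (the single-range staircase structure simply extends); (ii)~merge with $j = 0$ (the merge absorbs the new leaf and the merge peak sits at position 0 in the rightmost range); (iii)~merge with $j \geq 1$ (the merge peak lies in the second-rightmost range and the new leaf in the rightmost). Cases (ii) and (iii) each occur approximately $N/2$ times, while (i) contributes only $O(\log N)$ times. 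A direct tally --~reusing the merge peak hash whenever it can double as a range root or belt node (e.g., when its range consists solely of the two mergeable mountains)~-- should yield an amortized $3$ hashes per append in the bagging layers, for a grand total of $4$.

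The main obstacle will be the amortized analysis of the bagging layers: one must carefully identify when a range root or belt node coincides with the merge peak hash (and is thus reusable at no extra cost) versus when it is a genuinely new computation, and correctly weight the per-case costs by the asymptotic frequencies above. The worst-case half, by contrast, is a nearly immediate corollary of Lemma~\ref{lem:close} combined with the forward-bagging structure.
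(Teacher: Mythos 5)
Your proposal matches the paper's proof: both bound the worst case by $5$ via Lemma~\ref{lem:close} (one merge-peak hash plus at most two range-node and two belt-node updates), and both obtain the amortized value $4$ from the same parity split --- your case $j=\nu_2(n+1)=0$ is exactly the paper's even-$n$ case ($3$ hashes) and $j\geq 1$ its odd-$n$ case ($5$ hashes), each occurring half the time. The ``reuse of the merge peak hash'' you flag as the main obstacle never arises in the paper's accounting, since a range node is computed above every peak, including in a single-mountain range, so the tally is a direct count.
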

 
A proof of this result is included in Appendix~\ref{s:proofs},  
and we remark that the $\MMB$ append requires only one additional hash computation on average relative to $\FMMB$; see Lemma~\ref{lem:hash-F}. 

\begin{lemma}
    The $\MMB$ append operation requires $O(\log n)$ memory and $O(1)$ time.
\end{lemma}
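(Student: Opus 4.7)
The plan is to verify the memory bound directly from the structural node count, and the time bound by analyzing each of the three append substeps with the help of the preceding structural lemmas, particularly Lemma~\ref{lem:close} and Lemma~\ref{lem:hash-d}.

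For the memory claim, I would observe that the manager needs to keep in cache: the at most $t=\lfloor \log_2(n+1)\rfloor$ mountain peaks (by Lemma~\ref{lem:Sn}), at most $t$ range nodes (one per range, and there are at most $t$ ranges since each range contains at least one mountain), and at most $t$ belt nodes (one per range). Together with some auxiliary bookkeeping — a stack of the mergeable mountain pairs, and parent pointers linking each peak to its range root and each range root to its covering belt node — this totals $O(\log n)$.

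For the time claim, I would examine the three substeps. The add step is trivially $O(1)$: push a new height-$0$ peak onto the right end of the mountain list. For the merge step, the stack of mergeable pairs identifies the rightmost pair in $O(1)$; Lemma~\ref{lem:hash-U} guarantees at most one hash computation; and maintaining the stack afterwards is constant work, since at most one mergeable pair disappears (the one just merged) and at most one new one can form (with the new merge peak's left neighbour). The crux is the bag step, and here Lemma~\ref{lem:close} is doing the heavy lifting: the new merge peak sits at the right end of its range and in one of the two rightmost ranges, so only the range node directly above the merge peak and a constant number of belt nodes on the path to the belt root need to be updated, all reachable in $O(1)$ via the parent pointers. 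The handling of the new leaf is similarly localized to the right end of the structure, touching at most a constant number of range and belt nodes. Lemma~\ref{lem:hash-d} certifies that the total number of hash computations is at most $5$ per append.

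The main obstacle, and the subtle point I would dwell on, is arguing that the range-split conditions themselves can be re-evaluated in $O(1)$, despite the fact that there are up to $O(\log n)$ ranges globally. The equivalent bit-pattern characterisation of range splits in terms of the binary representation of $n+1$ makes this manageable: incrementing $n$ modifies only a suffix of trailing bits, so the only splits that can appear or disappear are near the low end of $n+1$ and near the merge location, and Lemma~\ref{lem:close} places both of these in a constant-sized neighbourhood of the right end of the belt. Combined with the parent-pointer bookkeeping, this lets us identify and rehash exactly the nodes promised by Lemma~\ref{lem:hash-d}, completing the $O(1)$ time argument.
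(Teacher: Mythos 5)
Your proposal is correct and follows essentially the same route as the paper, which leaves this lemma's proof implicit: the memory bound comes from counting the $t$ peaks plus $O(t)$ bagging nodes, and the time bound combines Lemma~\ref{lem:close} (the merge peak is at the right end of its range, in one of the two rightmost ranges) with Lemma~\ref{lem:hash-d}. One small slip that does not affect the conclusion: there is one range node per \emph{mountain} (so $t$ in total), not one per range, though your stated bound of at most $t$ still holds.
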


\subsection{Resync operations and analysis}\label{MMB-analysis}

\begin{lemma}\label{lem:d-distance} 
For any $1\leq k\leq n$, in MMB the $k$-th most recent item has a membership proof consisting of at most $2\lfloor \log_2 k \rfloor +3 =O(\log k)$ hashes. 
\end{lemma}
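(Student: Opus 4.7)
The plan is to decompose the leaf-to-root path for the $k$-th most recent item into three segments, one per layer of $\MMB$ (mountain, range nodes, belt nodes), and sum the bounds on each. Setting $d := \lfloor \log_2 k \rfloor + 1$, the goal is total path length at most $2d + 1$, which is exactly the stated bound. Recall that the number of hashes in a membership proof equals the leaf-to-root distance.

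First I handle the bottom layer by invoking Lemma~\ref{lem:kth}: the target leaf lies in a mountain of height $s \leq d$ whose $0$-indexed position from the right, call it $j$, satisfies $j \leq d - 1$. This immediately contributes at most $d$ hashes from the leaf up to its peak.

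Next I analyse the two bagged layers. Let $\ell$ be the $0$-indexed position (from the right) of the target peak within its range, and let $r$ be the $0$-indexed position of the target range within the belt. Forward-bagging of $i$ items produces a left-leaning linked list whose root sits above the rightmost item, so the distance from an item at right-position $p$ to its bagging root is $p + 1$. Applied at both levels, this contributes $(\ell + 1) + (r + 1)$ further hashes. The crucial inequality is $\ell + r \leq j$, which I obtain by a direct count: the $\ell$ peaks of the target's range lying strictly to its right, together with at least one peak from each of the $r$ ranges to the right, give $\ell + r$ distinct peaks strictly to the right of the target. Summing the three segments yields total path length at most $d + (\ell + 1) + (r + 1) \leq d + (d - 1) + 2 = 2d + 1$, as required.

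The main obstacle I foresee is pinning down the exact forward-bagging distance formula under the paper's convention that a range of $i$ mountains carries $i$ range nodes (and similarly $j$ ranges carry $j$ belt nodes). I interpret this as a left-padded linked list in which even the leftmost item sits below a distinct bagging parent, which is what yields the $p + 1$ distance used above; I would cross-check this against Figure~\ref{fig:MMB} and the append procedure in Section~\ref{s:MMB-append} before finalising. Under any reasonable alternative convention only the additive constant shifts, while the leading $2\lfloor \log_2 k \rfloor$ term is robust, coming unambiguously from combining Lemma~\ref{lem:kth} with the two-level bagging structure.
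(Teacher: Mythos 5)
Your proof is correct, and your reading of the bagging convention is the right one: as in the Merkle chain of Figure~\ref{fig:chain}, a forward-bagged list of $i$ items carries $i$ bagging nodes (the leftmost item gets a parent with a $\bullet$ sibling), so the distance from the item at $0$-indexed right-position $p$ to the bagging root is indeed $p+1$. Your route differs from the paper's: the paper deduces the bound from Lemma~\ref{lem:f-distance} (the F-MMB bound $2\lfloor\log_2 k\rfloor+2$) together with the observation that, for fixed $n$, any leaf's depth in MMB exceeds its depth in F-MMB by at most one --- range splits can only shorten the peak-to-root segment, and the worst case is the rightmost range, where the only extra ancestor is a single belt node. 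You instead rebuild the whole bound from Lemma~\ref{lem:kth} via the explicit three-segment decomposition $d+(\ell+1)+(r+1)$ and the counting inequality $\ell+r\leq j\leq d-1$. The two arguments encode the same combinatorial fact (your $\ell+r\leq j$ is precisely why double bagging costs at most one more level than single bagging), but yours is self-contained and makes the source of each term in the constant explicit, whereas the paper's is shorter and reuses the F-MMB analysis; your version also directly exposes when the bound is tight (every range to the right of the target's range containing exactly one peak), which the paper leaves implicit.
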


\begin{proof}
    The result follows from Lemma~\ref{lem:f-distance} and the fact that, for any fixed $n$, the depth of any leaf increases by at most one unit in MMB relative to F-MMB. 
    Indeed, range splits only shorten this depth, so the largest increase relative to F-MMB takes place in the rightmost range, where any leaf has a single additional ancestor, namely a belt node.
\end{proof}

Hence, we see that double bagging still provides short proofs to recent items. 
Now, proving that all resync operations observe the incrementality property is trickier, and requires us to first introduce some definitions and a technical result. 
Relative to a $k$-increment from $X_m$ to $X_n$, let the \emph{merge range} be the range containing the highest merge peak created, 
i.e., the leftmost range affected during the increment, and let the \emph{leaf range} be the range containing the $(m+1)$-th leaf. 
We prove that these two ranges are never far apart. 

\begin{lemma}\label{lem:merge-leaf}
After any $k$-increment in MMB, the merge and leaf ranges are either equal or adjacent. 
Moreover, the leaf range may be arbitrarily large but only $O(\log k)$ rightmost peaks in it are new, and there are $O(\log k)$ further peaks (and ranges) to the right of the leaf range.  
\end{lemma}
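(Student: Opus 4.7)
The plan is to use Lemma~\ref{lem:Sn}, the binary-representation characterization of $\MMB$, together with Lemma~\ref{lem:merge-steps} to control the region of the mountain list affected by the $k$-increment.

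First I would locate $x_{m+1}$'s mountain at $X_n$. It lies at the bit position $i_L$ satisfying $\sum_{j<i_L} 2^{s_j} < k \leq \sum_{j\leq i_L} 2^{s_j}$, where $s_j$ are the heights at $X_n$. Since $s_j \geq j$ by Lemma~\ref{lem:Sn}, the left sum is at least $2^{i_L} - 1$, forcing $i_L \leq \lfloor \log_2 k \rfloor = O(\log k)$. Part~(c) is immediate: strictly to the right of $x_{m+1}$'s mountain there are at most $i_L$ peaks, and therefore at most $i_L$ ranges, both bounded by $O(\log k)$.

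For part~(b) I would split the leaf-range peaks into two groups: the rightmost $O(\log k)$ peaks at bits $\leq i_L$, which contain $x_{m+1}$ or later leaves and are thus necessarily new; and the peaks at bits $> i_L$ within the leaf range, which contain only old leaves since their cumulative leaf count is at most $m$. For the latter I would argue they are \emph{old} peaks, by showing that absence of a range split between $i_L$ and such a bit forces the mountain boundaries and sizes at $X_n$ to coincide with those at $X_m$, making the peaks literally match.

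For part~(a) I would analyze the split separating the merge range from the leaf range. By Lemma~\ref{lem:close} applied at the instant the tallest merge of the increment was performed, the merge peak sat at the right end of its range, with a height-two jump (condition~1 split) to its right. Lemma~\ref{lem:merge-steps} prevents any subsequent append in the increment from repopulating the adjacent bit position through a merge of comparable height, so this split between the merge peak and its right neighbor persists until state $X_n$. Combined with part~(b), which ensures no extra range can sit between the merge peak's range and the leaf range, this gives the required adjacency.

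The main obstacle will be proving the old-peak identification in part~(b): within a single range of $X_n$, one must show that the mountain decomposition of the old portion of leaves agrees with $X_m$'s decomposition. This demands a careful case analysis of the two range-split conditions, translating the structural condition of ``no split'' into a tight alignment of the binary representations of $m+1$ and $n+1$ on the relevant bits.
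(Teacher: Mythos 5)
Your part (c) is fine (it is essentially Lemma~\ref{lem:kth}), but parts (a) and (b) each contain a genuine gap. For (b), the implication ``contains only old leaves $\Rightarrow$ old peak'' is false: the mountain decomposition of the \emph{old} leaves keeps changing during the increment, because merges at appends $m+1,\dots,n$ can combine mountains consisting entirely of pre-increment leaves. Concretely, take $m=42$, $n=54$, $k=12$. Then $S_{42}=(4,4,2,2,1)$ and $S_{54}=(5,3,3,2,1)$, the leaf range at $X_{54}$ consists of the mountains at positions $3$ and $2$ with $i_L=2$, and the peak at position $3$ (covering leaves $33$ through $40$, all old) was created at append $43$ by merging the two height-$2$ mountains over leaves $33$--$36$ and $37$--$40$; it is a new peak despite covering only old leaves, so no alignment of the binary representations of $m+1$ and $n+1$ will make it ``literally match'' a peak of $X_m$. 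The paper instead bounds the number of new peaks directly: by Lemma~\ref{lem:merge-steps}, every merge other than the tallest one has merging height below $\lceil\log_2 k\rceil$, so besides the highest merge peak there can be only $O(\log k)$ new peaks, and they sit at the right end of the leaf range or to its right.

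For (a), your persistence claim also fails when the tallest merge has height $j\le\lceil\log_2 k\rceil$: the right neighbour of the merge peak can grow from height $j-1$ to height $j$ via a later merge inside the same increment (at the append $2^{j-1}$ steps later), erasing the condition-1 split. This outcome is harmless --- the two ranges simply become equal --- but your argument does not cover it. More importantly, the actual content of (a) is that no \emph{new} split ever appears strictly between the highest merge peak and the $(m+1)$-th leaf's mountain, and deferring this to part (b) does not work, since (b) only counts new peaks inside the leaf range. The paper proves it with a dynamic invariant: at the moment the highest merge peak is created there is no mergeable pair strictly between it and the $(m+1)$-th leaf, and this is preserved by induction over the remaining appends, because every later merge either involves the $(m+1)$-th leaf's mountain or happens to its right, and new range splits only arise next to new merges. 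You would need some version of this invariant to close part (a).
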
 

\begin{proof} 
Consider the state during the $k$-increment at which the highest merge peak is created. 
By Lemma~\ref{lem:close}, at this point the merge range is either the rightmost or second rightmost range, with the leaf range to its right, so these ranges are either equal or adjacent.  
Furthermore, there are no mergeable pairs to the right of the merge peak, so in particular \emph{there is no mergeable pair strictly in between the highest merge peak and the $(m+1)$-th leaf}. 

We claim that this last property continues to hold in later appends. 
The proof is by induction: mountain heights only change via merges, and any further merge (taking place to the right of the highest merge peak) must either affect the $(m+1)$-th leaf or take place to its right by induction hypothesis. 
Hence, the merge and leaf ranges do not drift apart, because new range splits only take place to the right of new merges. 
This proves the first statement. 

The second and third statements follow from Lemma~\ref{lem:merge-steps}: 
besides the highest merge peak, all other merges are of height $O(\log k)$, so at the end of the $k$-increment there can only be $O(\log k)$ new peaks, placed either at the right end of the leaf range, or to its right. 
\end{proof}

For instance, Figure~\ref{fig:prefix-D} shows the result of a $k$-increment from $m=91107$ to $n=91145$, $k=n-m=38$, where the highest merge peak is of height $11$. 
The merge and leaf ranges are adjacent, and the leaf range is large but contains only $3$ new peaks, with $3$ additional peaks --~and $2$ ranges~-- to its right. 
In that figure we label each mountain node with its height, defined as its distance to any descendant leaf. 

\begin{figure}[htp]
\centering
\includegraphics[width=0.7\textwidth]{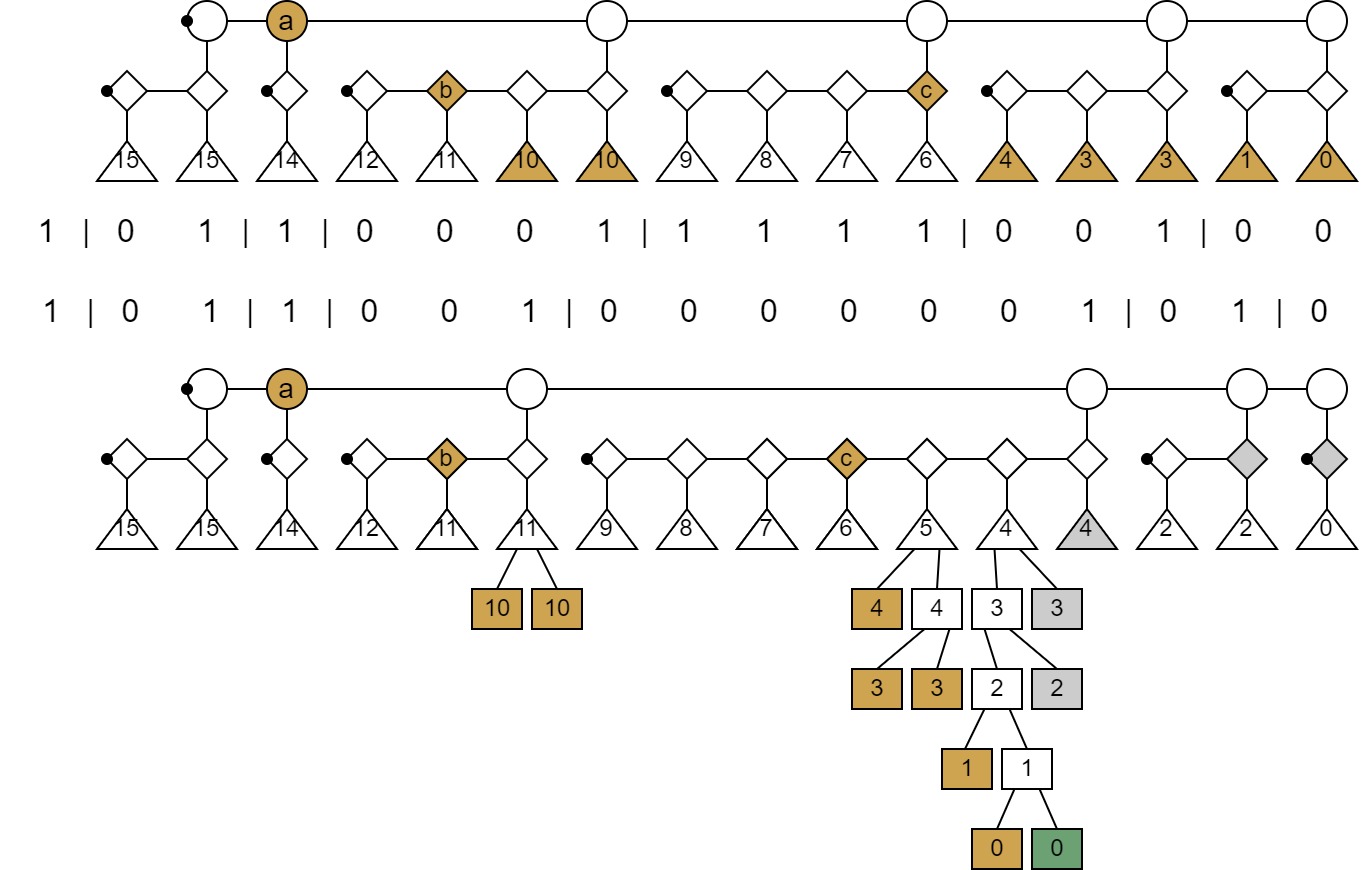}
\caption{MMB structure for $m=91107$ (above) and $n=91145$ (below), with the binary representations of $m+1$ and $n+1$, and mountain nodes labeled with their height. 
Non-white nodes form set~$C_n$, brown nodes form its subset $C_m$ (see Section~\ref{s:increment}), and the $(m+1)$-th leaf is in green.}
\label{fig:prefix-D}
\end{figure}

We now show that we can update any membership proof efficiently, which completes the proof of Theorem~\ref{thm:MMB}, except for the claim on increment proofs, which is delayed to Section~\ref{s:increment}. 
In what follows, let \textbf{a} be the belt node that commits to all unchanged ranges (i.e., those to the left of the merge range), and let \textbf{b} and \textbf{c} be the range nodes that commit to all unchanged mountains in the merge range and leaf range, respectively; again see Figure~\ref{fig:prefix-D}.

\begin{lemma}
After a $k$-increment, in MMB the path from any leaf to the belt root only changes by $O(\log k)$ topmost nodes; 
hence, any membership proof changes by $O(\log k)$ hashes. 
\end{lemma}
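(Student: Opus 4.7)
The plan is to observe that, by collision-resistance of $H$, a node of MMB has a different hash at $X_n$ than at $X_m$ exactly when its subtree contains at least one of the $k$ new leaves $x_{m+1},\ldots,x_n$. So the task reduces to bounding, for an arbitrary leaf, the number of its ancestors whose subtree contains a new leaf. I would then split the leaf's position in $X_n$ into four cases via the three unchanged witnesses $\mathbf{a},\mathbf{b},\mathbf{c}$ introduced just before the lemma: (i) the leaf sits in the unchanged subtree rooted at $\mathbf{a}$, (ii) at $\mathbf{b}$, (iii) at $\mathbf{c}$, or (iv) nowhere among these, i.e., inside the merge peak's mountain, a new peak's mountain in the leaf range, or a range to the right of the leaf range.

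In cases (i)--(iii) the entire path from the leaf up to the corresponding witness consists of unchanged nodes, so it suffices to count the changed ancestors strictly above the witness. For (i), Lemma~\ref{lem:merge-leaf} tells us there are $O(\log k)$ ranges from the merge range to the rightmost range, hence $O(\log k)$ changed belt ancestors above $\mathbf{a}$. For (ii), above $\mathbf{b}$ one adds exactly one new range node (the merge range's root, whose two children are $\mathbf{b}$ and the merge peak) and the same $O(\log k)$ changed belt ancestors. For (iii), Lemma~\ref{lem:merge-leaf} provides that only $O(\log k)$ rightmost peaks of the leaf range are new, which translates into $O(\log k)$ changed range ancestors above $\mathbf{c}$, plus the $O(\log k)$ changed belt ancestors.

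In case (iv), the leaf sits in a mountain whose peak is itself new. For an old leaf, the sub-mountain that it occupied at state $X_m$ is still intact as a subtree, so the only new mountain-layer ancestors are the successive merge peaks built on top; by Lemma~\ref{lem:merge-steps} this leaf participated in at most $\lceil \log_2 k\rceil+1$ merges, giving $O(\log k)$ new ancestors in the mountain layer. For a new leaf, the same lemma bounds its mountain's height by $\lceil \log_2 k\rceil+1$, so the mountain portion of its path has length $O(\log k)$. In either sub-case, above the peak we are in the merge range, the leaf range, or a range to its right, and the counting of the previous paragraph still applies to the range and belt layers, contributing $O(\log k)$ further changed ancestors.

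Summing across layers, every leaf has at most $O(\log k)$ changed ancestors, and these are topmost: within each layer the changed portion lies above $\mathbf{b}$ or $\mathbf{c}$ (range layer), above $\mathbf{a}$ (belt layer), or above the old sub-mountain (mountain layer), so the unchanged ancestors always sit at the bottom of the corresponding sub-path. The \emph{hence} clause then follows immediately, since each changed ancestor contributes at most one hash change (its sibling) in the membership proof. The main obstacle I anticipate is case (iv) when the merge peak's mountain is tall, with height $\Theta(\log n)$ potentially far exceeding $\log k$; the key rescue is that Lemma~\ref{lem:merge-steps} bounds merges \emph{per individual leaf} rather than the overall merge-peak height, so even an old leaf buried deep inside such a mountain accrues only $O(\log k)$ new ancestors at the very top of the mountain.
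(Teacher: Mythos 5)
Your proof is correct and follows essentially the same route as the paper's: a case split according to whether the leaf descends from one of the unchanged witnesses \textbf{a}, \textbf{b}, \textbf{c}, combined with Lemma~\ref{lem:merge-leaf} to bound the distance from these witnesses to the root and Lemma~\ref{lem:merge-steps} to bound the mountain-layer contribution. Two remarks. First, your opening reduction --- ``a node has a different hash at $X_n$ than at $X_m$ exactly when its subtree contains a new leaf'' --- is false in the ``only if'' direction: the highest merge peak is typically a brand-new node whose subtree contains only old leaves, as are the range and belt nodes recomputed above it; had you literally counted ancestors whose subtree contains a new leaf, you would undercount the changed nodes. Fortunately your case analysis never uses this reduction and instead counts new or recomputed ancestors directly, so nothing breaks. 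Second, your case (iv) is in one respect more careful than the paper's corresponding case: the paper asserts that the highest merge peak is the first ancestor to change for all its descendant leaves, which is not literally true when lower merges cascade beneath it during the increment, whereas your appeal to the per-leaf merge count of Lemma~\ref{lem:merge-steps} handles that situation cleanly.
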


\begin{proof} 
Let $h_i=H(x_i)$ be the leaf in consideration. 
We consider 3 cases: 
\begin{enumerate} 
    \item Bagging nodes \textbf{a}, \textbf{b} and \textbf{c} are all unchanged (by definition) and at distance $O(\log k)$ to the root by Lemma~\ref{lem:merge-leaf}. So, if $h_i$ is a descendant of any of these nodes, the claim follows. 
    \item Similarly, the highest merge peak is at distance $O(\log k)$ to the root by Lemma~\ref{lem:merge-leaf}, and is the first ancestor to change for all its descendant leaves, so the claim follows for them. 
    \item The only remaining case is for $h_i$ to be in the leaf range, and in a mountain that has participated in merges. By Lemma~\ref{lem:merge-steps}, its mountain must be of height $O(\log k)$, so together with Lemma~\ref{lem:merge-leaf} we conclude that its full leaf-to-root distance is $O(\log k)$. 
\end{enumerate} 
\end{proof}

\section{The increment proof}\label{s:increment}

In this section we formalize the increment proof for any append-only commitment scheme based on a Merkle structure. 
Then, we describe it in detail for both U-MMB and MMB.

Given a Merkle structure $\mathcal{M}$ for list $X$, let $h_i:=H(x_i)$ be its $i$-th leaf, and let $[h_i, h_j]:=\{h_i, h_{i+1}, \cdots, h_j\}$ be a range of consecutive leaves, for any $1\leq i\leq j\leq n=|X|$. 
A node set $C$ in $\mathcal{M}$ is \emph{independent} if no two nodes in it have any ancestor-descendant relationship. 
An \emph{opening step} in set $C$ is replacing a node by its children, and a \emph{closing step} is replacing two siblings by its parent. 
Following notation from~\cite{meiklejohn2020think}, we say that an independent set $C$ \emph{covers} $[h_i, h_j]$ if this range is exactly the set of leaves that descend from nodes in~$C$. 
Commitment $\langle X_n \rangle$ trivially covers $[h_1, h_n]$, and we can verify that an independent set $C$ covers $[h_1, h_n]$ by deriving $\langle X_n \rangle$ from $C$ via a sequence of closing steps. 
Instructions on these steps should be provided to make this verification efficient.%
\footnote{In this section we ignore the communication and computational complexity (CCC) of these instructions, as the overall CCC linked to an increment proof is typically dominated by the number of hashes in it.} 
By the properties of a cryptographic hash function, it is computationally hard to derive $\langle X_n \rangle$ from a set $C$ that does not cover $[h_1, h_n]$. 

We now provide an initial definition of an increment proof, as well as an alternative definition of a membership proof, that will come in handy:

\begin{itemize} 
    \item An \emph{extended} increment proof $\pi'_{X_m\prec X_n}$ is an independent set $C_n$ containing a subset $C_m$, such that $C_m$ and $C_n$ provably cover $[h_1, h_m]$ and $[h_1, h_n]$, respectively.
    \item For an item $x_i\in X_n$, an \emph{extended} membership proof $\pi'_{x_i\in X_n}$ is an independent set $C$ that contains $h_i$ and provably covers $[h_1, h_n]$.  In particular, $C\setminus h_i$ can be partitioned into two subsets, that cover $[h_{1}, h_{i-1}]$ and $[h_{i+1}, h_{n}]$, respectively. 
\end{itemize}

To see why the first definition makes sense, notice that if any of the first $m$ leaves was modified between the publications of $\langle X_m \rangle$ and $\langle X_n \rangle$, then either $\langle X_m \rangle$ cannot be derived from $C_m$, $\langle X_n \rangle$ cannot be derived from $C_n$, or $C_m$ cannot be contained in $C_n$. 
Notice as well that the extended membership proof $\pi'_{x_{m+1}\in X_n}$ of the $(m+1)$-th item is also an increment proof $\pi'_{X_m\prec X_n}$, provided that instructions are given to efficiently derive from it both $\langle X_m \rangle$ and $\langle X_n \rangle$. 
This will be our general strategy to build increment proofs. 

For instance, in the Merkle tree in Figure~\ref{fig:membership}, set $C_8:=\{h_{14}, h_{5}, h_6, h_{78}\}$ contains $h_6$ and covers $[h_1, h_8]$, so it is a membership proof $\pi'_{x_6\in X_8}$. 
It also contains the subset $C_5:=\{h_{14}, h_{5}\}$, which covers $[h_1, h_5]$, so $C_8$ is an increment proof $\pi'_{X_5\prec X_8}$.

In the case of $\UMMB$, we start from the extended membership proof $\pi'_{x_{m+1}\in X_n}$ of the $(m+1)$-th item, and perform additional opening steps to include all nodes that were mountain peaks before the $k$-increment; see Figure~\ref{fig:prefix}. 
Notice that this proof is of size $O(\log n)$ and, in particular, contains \emph{all} peaks in $\langle X_m \rangle$; i.e., $C_m = \langle X_m \rangle$. 

\begin{figure}[htp]
\centering
\includegraphics[width=0.7\textwidth]{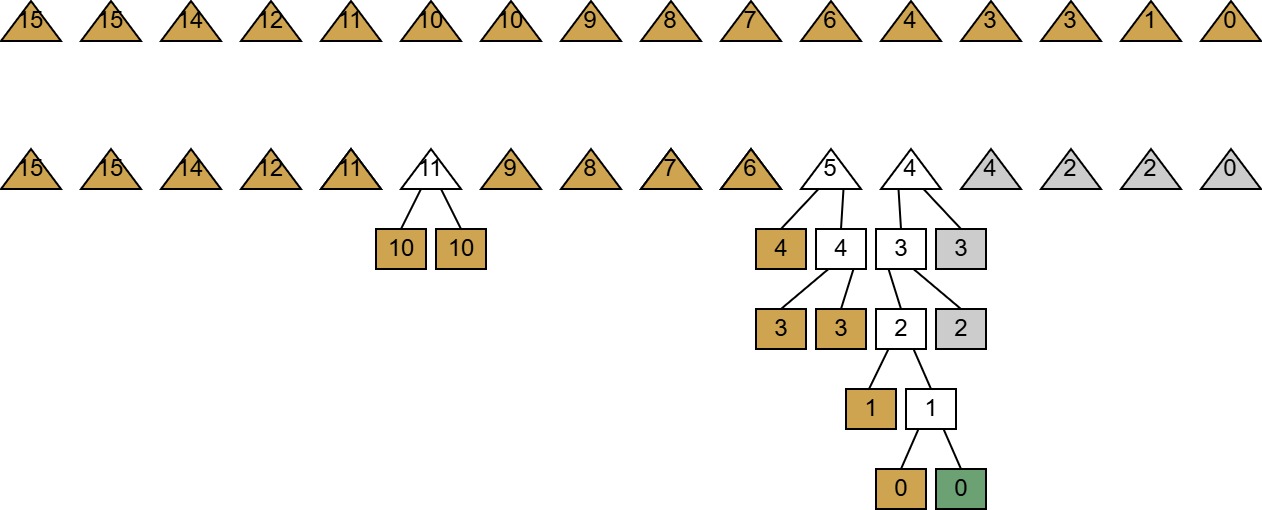}
\caption{U-MMB structures for the same values of $m$ and $n$ as in Figure~\ref{fig:prefix-D}. 
Non-white nodes form set $C_n=\pi'_{X_m\prec X_n}$, brown nodes form its subset $C_m=\langle X_m\rangle$, the $(m+1)$-st leaf is in green, and finally green and  gray nodes form the restricted increment proof $\pi_{X_m\prec X_n}= C_n \setminus \langle X_m\rangle$.}
\label{fig:prefix}
\end{figure}

However, for any Merkle structure, we define \emph{restricted} increment and membership proofs from extended ones by removing some redundant information:  
\begin{itemize} 
    \item $\pi_{X_m\prec X_n}:=\pi'_{X_m\prec X_n}\setminus \langle X_m \rangle$, and 
    \item $\pi_{x_i\in X_n}:=\pi'_{x_i\in X_n} \setminus \langle X_n \rangle$. 
\end{itemize}  

These are sensible restrictions for a verifier who knows the old commitment $\langle X_m \rangle$ prior to requesting an increment proof, and knows the current commitment $\langle X_n \rangle$ prior to requesting a membership proof. 
Notice that these restrictions make no difference for single-tree structures (such as MMB), and that the restricted membership proof corresponds to the standard one. 

\begin{lemma}
The increment proof $\pi_{X_m\prec X_n}$ in U-MMB is of size $O(\log k)$, where $k=n-m$.    
\end{lemma}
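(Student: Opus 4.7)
My plan is to construct a small cover directly and count its nodes, rather than starting from an arbitrary extended membership proof and tracking the effect of opening steps. Since the extended proof decomposes as $C_n = \langle X_m \rangle \cup \{h_{m+1}\} \cup R$ for some independent set $R$ covering $[h_{m+2}, h_n]$, we have $\pi_{X_m\prec X_n} = \{h_{m+1}\} \cup R$, so it suffices to exhibit such an $R$ of size $O(\log k)$.

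Let $M$ denote the mountain of $X_n$ containing the new leaf $h_{m+1}$, let $s$ be its height, and let $h_b$ be its rightmost leaf. I would split the target range as $[h_{m+2}, h_n] = [h_{m+2}, h_b] \cup [h_{b+1}, h_n]$ and cover each piece separately. For the inside-$M$ piece I would take the right siblings of those ancestors of $h_{m+1}$ within $M$ that are left children; this is the standard binary-tree construction and yields at most $s$ nodes. For the outside piece I would simply take the peaks at state $X_n$ of all mountains lying strictly to the right of $M$, which trivially cover $[h_{b+1}, h_n]$.

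To bound both contributions I would invoke the earlier structural lemmas. By Lemma~\ref{lem:merge-steps}, since $h_{m+1}$ is one of the $k$ new leaves, its mountain $M$ has height $s \leq \lceil \log_2 k \rceil + 1$. By Lemma~\ref{lem:kth}, since $h_{m+1}$ is the $k$-th most recent leaf, $M$ lies among the $\lfloor \log_2 k \rfloor + 1$ rightmost mountains of $X_n$, leaving at most $\lfloor \log_2 k \rfloor$ mountains strictly to its right. Combining, $|R| \leq \lceil \log_2 k \rceil + \lfloor \log_2 k \rfloor + 1 = O(\log k)$, and hence $|\pi_{X_m\prec X_n}| = |R| + 1 = O(\log k)$.

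The only step that could plausibly require care is verifying that $\langle X_m \rangle \cup \{h_{m+1}\} \cup R$ really forms an independent set, since both the peaks of $\langle X_m \rangle$ that got absorbed into $M$ and the right siblings I chose live inside $M$. The resolution is immediate, though: the right siblings I picked cover only leaves with index at least $m+2$, while the peaks of $\langle X_m \rangle$ cover only leaves with index at most $m$, so their leaf sets are disjoint and no ancestor-descendant relation can arise among them. The rest is routine.
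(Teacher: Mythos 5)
Your proof is correct and takes essentially the same route as the paper's: both decompose the increment proof into the portion of $x_{m+1}$'s membership proof lying inside its mountain $M$ plus the peaks of the mountains to the right of $M$, and bound each piece by $O(\log k)$. The only cosmetic differences are that you bound $M$'s height via Lemma~\ref{lem:merge-steps} and the number of right-of-$M$ peaks via Lemma~\ref{lem:kth}, whereas the paper cites Lemma~\ref{lem:kth} for the membership-proof part and Lemma~\ref{lem:differ} for the new peaks; your added independence check is sound but not a new idea.
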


\begin{proof}
This proof contains 
i)~nodes from the restricted membership proof $\pi_{x_{m+1}\in X_n}$, of size $O(\log k)$ by Lemma~\ref{lem:kth}, and 
ii)~nodes from set $\langle X_n \rangle \setminus \langle X_m \rangle$, of size $O(\log k)$ by Lemma~\ref{lem:differ}. 
\end{proof}

In U-MMB, proof $\pi_{X_m\prec X_n}$ simply corresponds to a set covering $[h_{m+1}, h_{n}]$, and its verification consists of deriving $\langle X_n \rangle$ from $(\langle X_m \rangle \cup \pi_{X_m\prec X_n})$ via closing steps. 
In fact, this increment-proof verification could be the standard means by which a verifier updates their local copy of the commitment, from $\langle X_m \rangle$ to $\langle X_n \rangle$, after a $k$-increment.

In MMB we follow a similar procedure: we start from the membership proof of the $(m+1)$-th item, which is a set $C_n$ covering $[h_1, h_n]$ with a subset $C_m$ covering $[h_1, h_m]$, and we perform additional opening steps in $C_m$ to include all nodes that were peaks before the $k$-increment but not thereafter, i.e., nodes in $\langle X_m\rangle_{\UMMB} \setminus \langle X_n \rangle_{\UMMB}$; see the example in Figure~\ref{fig:prefix-D}. 
A verifier simply needs to derive from $C_m$ and $C_n$ the roots $\langle X_m\rangle_{\MMB}$ and $\langle X_n \rangle_{\MMB}$, respectively, via closing steps. 

\begin{lemma}
The increment proof $\pi_{X_m\prec X_n}$ in MMB is of size $O(\log k)$, where $k=n-m$.     
\end{lemma}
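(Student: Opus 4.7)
The plan is to bound $|\pi'_{X_m\prec X_n}|$, which is within a single node of $|\pi_{X_m\prec X_n}|$ since MMB has a single-hash commitment. I would follow the two-phase construction given just above the statement and bound each phase separately.

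For the initial phase, the extended membership proof $\pi'_{x_{m+1}\in X_n}$ consists of the leaf $h_{m+1}$ together with the sibling of each of its ancestors in MMB at state $X_n$, so its size equals one plus the depth of $h_{m+1}$. Since $x_{m+1}$ is the $k$-th most recent item at state $X_n$, Lemma~\ref{lem:d-distance} bounds its depth by $2\lfloor\log_2 k\rfloor+3$, giving $|C_n|=O(\log k)$ at this stage.

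For the opening phase, I would argue that at most $O(\log k)$ opening steps are performed and each enlarges $C_n$ by exactly one node. Every target node $p\in\langle X_m\rangle_{\UMMB}\setminus\langle X_n\rangle_{\UMMB}$ is an old U-MMB peak whose descendant leaves all lie within $[h_1,h_m]$; since $C_m$ provably covers $[h_1,h_m]$ at state $X_n$, each such $p$ is either already in $C_m$ or is a strict descendant of a unique $v\in C_m$. If a particular $v$ absorbs $r\ge 1$ such old peaks, then the merge history that glued those peaks together forms a binary tree with $r$ leaves, so exactly $r-1$ opening steps suffice to replace $v$ by these $r$ peaks. Summing over the relevant $v$'s yields a total opening-step count at most $|\langle X_m\rangle_{\UMMB}\setminus\langle X_n\rangle_{\UMMB}|$, which is $O(\log k)$ by Lemma~\ref{lem:differ}. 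Combining both phases, $|\pi'_{X_m\prec X_n}|=O(\log k)$, which proves the claim.

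The main obstacle I expect is justifying the merge-history bookkeeping: specifically, showing that the $r$ old peaks absorbed into a common $v\in C_m$ span precisely $v$'s leaf range, so that the $r-1$ opening-step count is tight and the openings indeed exhaust the difference counted by Lemma~\ref{lem:differ}. This should reduce to analyzing how the U-MMB append operation from Section~\ref{s:unbagged-description} turns old peaks into internal mountain nodes during the $k$-increment, combined with the observation that $C_m$ consists of left-siblings along a single leaf-to-root path and is hence built from subtree roots that are well-aligned with the mountain structure.
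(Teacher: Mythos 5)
Your proposal is correct and follows essentially the same route as the paper: both bound the increment proof by the membership proof of $x_{m+1}$ (via Lemma~\ref{lem:d-distance}) plus the vanished peaks $\langle X_m\rangle_{\UMMB}\setminus\langle X_n\rangle_{\UMMB}$ (via Lemma~\ref{lem:differ}), with the paper's (terser) proof merely listing the bagging nodes \textbf{a}, \textbf{b} and \textbf{c} explicitly rather than counting opening steps. The obstacle you flag resolves cleanly: any node of $C_m$ that gets opened covers only leaves in $[h_1,h_m]$, so every leaf of its merge-history tree is a node that already existed and was still a peak at state $X_m$ but no longer is at $X_n$; hence the $r$ old peaks do tile its leaf range and your $r-1$ count goes through.
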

\begin{proof}
This proof contains: 
i)~nodes from the membership proof $\pi_{x_{m+1}\in X_n}$, which is of size $O(\log k)$ by Lemma~\ref{lem:d-distance}; 
ii)~nodes from the set difference $\langle X_m\rangle_{\UMMB} \setminus \langle X_n \rangle_{\UMMB}$, which is of size $O(\log k)$ by Lemma~\ref{lem:differ}; and 
iii)~the bagging nodes \textbf{a}, \textbf{b} and \textbf{c}, defined in Section~\ref{MMB-analysis}. 
\end{proof}

We conclude that, for both $\UMMB$ and $\MMB$, the restricted increment proof $\pi_{X_m\prec X_n}$ is of size $O(\log k)$ and allows a verifier to check the relation $X_m\prec X_n$, assuming they know commitment $\langle X_m \rangle$. 
This completes the proof of the incrementality property in our schemes.


\section{Conclusions}\label{s:conc}

These are the main advantages of $\MMB$ relative to $\MMR$:  
1)~each item append takes constant time; 
2)~after a $k$-increment offline, a light client can resynchronize with a cost polylogarithmic in $k$, instead of $n$;  
3)~$\UMMB$ (and a variant of $\MMB$ presented in Appendix~\ref{s:apps}) offers light clients the new asynchrony property of recent-proof compatibility; and 
4)~in many applications where light clients exhibit a recency bias in their queries, the size of the proofs they handle ~-- and their transaction fees~-- become constant over time, in expectation; see Table~\ref{tab:apps}. 

Among blockchain applications with a commitment to an ever-growing append-only list~$X$, the authors have yet to find an instance where $\MMB$ is \textbf{not} asymptotically superior to both $\MMR$ and the Merkle chain. 
See Appendix~\ref{s:apps} for an overview of the improvements brought by $\MMB$ to some of these applications. 
Hence, we propose $\MMB$ as the \emph{canonical} Merkle structure for such applications; most notably, for committing to block headers.

\vspace{5mm}

\textbf{Acknowledgments.} 
This project started when all authors were affiliated with Web 3.0 Technologies Foundation, and continued while Robert Hambrock was affiliated with Parity Technologies. 
The work of Alfonso Cevallos and Robert Hambrock was also greatly supported by a grant from OpenGov --~a community-governed funding mechanism within the Polkadot network~\cite{burdges2020overview}~-- approved via Referendum $\#1667$ in August 2025. 
The authors gratefully acknowledge this support from the Polkadot community, which made it possible to complete the present work. 
\appendix

\bibliography{zreferences}

@inproceedings{agrawal2023proofs,
  title={Proofs of Proof-Of-Stake with Sublinear Complexity},
  author={Agrawal, Shresth and Neu, Joachim and Tas, Ertem Nusret and Zindros, Dionysis},
  booktitle={5th Conference on Advances in Financial Technologies (AFT 2023)},
  year={2023},
  organization={Schloss-Dagstuhl-Leibniz Zentrum f{\"u}r Informatik}
}

@inproceedings{baldimtsi2017accumulators,
  title={Accumulators with applications to anonymity-preserving revocation},
  author={Baldimtsi, Foteini and Camenisch, Jan and Dubovitskaya, Maria and Lysyanskaya, Anna and Reyzin, Leonid and Samelin, Kai and Yakoubov, Sophia},
  booktitle={2017 IEEE European Symposium on Security and Privacy (EuroS\&P)},
  pages={301--315},
  year={2017},
  organization={IEEE}
}

@article{belenkov2025sok,
  title={{SoK}: A Review of Cross-Chain Bridge Hacks in 2023},
  author={Belenkov, Nikita and Callens, Valerian and Murashkin, Alexandr and Bak, Kacper and Derka, Martin and Gorzny, Jan and Lee, Sung-Shine},
  journal={arXiv preprint arXiv:2501.03423},
  year={2025}
}

@inproceedings{benaloh1993one,
  title={One-way accumulators: A decentralized alternative to digital signatures},
  author={Benaloh, Josh and De Mare, Michael},
  booktitle={Workshop on the Theory and Application of of Cryptographic Techniques},
  pages={274--285},
  year={1993},
  organization={Springer}
}

@article{bhatt2025trustless,
  title={Trustless Bridges via Random Sampling Light Clients},
  author={Bhatt, Bhargav Nagaraja and Shirazi, Fatemeh and Stewart, Alistair},
  journal={Cryptology ePrint Archive},
  year={2025}
}

@article{bonneau2025merkle,
  title={Merkle Mountain Ranges are Optimal: On witness update frequency for cryptographic accumulators},
  author={Bonneau, Joseph and Chen, Jessica and Christ, Miranda and Karantaidou, Ioanna},
  journal={Cryptology ePrint Archive},
  year={2025}
}

@article{burdges2020overview,
  title={Overview of polkadot and its design considerations},
  author={Burdges, Jeff and Cevallos, Alfonso and Czaban, Peter and Habermeier, Rob and Hosseini, Syed and Lama, Fabio and Alper, Handan Kilinc and Luo, Ximin and Shirazi, Fatemeh and Stewart, Alistair and others},
  journal={arXiv preprint arXiv:2005.13456},
  year={2020}
}

@inproceedings{bunz2020flyclient,
  title={Flyclient: Super-light clients for cryptocurrencies},
  author={B{\"u}nz, Benedikt and Kiffer, Lucianna and Luu, Loi and Zamani, Mahdi},
  booktitle={2020 IEEE Symposium on Security and Privacy (SP)},
  pages={928--946},
  year={2020},
  organization={IEEE}
}

@article{camacho2012strong,
  title={Strong accumulators from collision-resistant hashing},
  author={Camacho, Philippe and Hevia, Alejandro and Kiwi, Marcos and Opazo, Roberto},
  journal={International Journal of Information Security},
  volume={11},
  number={5},
  pages={349--363},
  year={2012},
  publisher={Springer}
}

@inproceedings{catalano2013vector,
  title={Vector commitments and their applications},
  author={Catalano, Dario and Fiore, Dario},
  booktitle={Public-Key Cryptography--PKC 2013: 16th International Conference on Practice and Theory in Public-Key Cryptography},
  pages={55--72},
  year={2013},
  organization={Springer}
}

@misc{Chainalysis,
  title = {Vulnerabilities in cross-chain bridge protocols emerge as top security risk},
  author={Chainalysis},
  howpublished = {\url{https://www.chainalysis.com/blog/cross-chain-bridge-hacks-2022/}},
  note = {Accessed: 2025-08}
}

@inproceedings{chatzigiannis2022sok,
  title={SoK: Blockchain Light Clients},
  author={Chatzigiannis, Panagiotis and Baldimtsi, Foteini and Chalkias, Konstantinos},
  booktitle={International Conference on Financial Cryptography and Data Security},
  pages={615--641},
  year={2022},
  organization={Springer}
}

@article{chen2020minichain,
  title={{MiniChain}: A lightweight protocol to combat the {UTXO} growth in public blockchain},
  author={Chen, Huan and Wang, Yijie},
  journal={Journal of Parallel and Distributed Computing},
  volume={143},
  pages={67--76},
  year={2020},
  publisher={Elsevier}
}

@inproceedings{crosby2009efficient,
  title={Efficient data structures for tamper-evident logging},
  author={Crosby, Scott A and Wallach, Dan S},
  booktitle={Proceedings of the 18th conference on USENIX security symposium},
  pages={317--334},
  year={2009}
}

@inproceedings{garg2018registration,
  title={Registration-based encryption: removing private-key generator from IBE},
  author={Garg, Sanjam and Hajiabadi, Mohammad and Mahmoody, Mohammad and Rahimi, Ahmadreza},
  booktitle={Theory of Cryptography Conference},
  pages={689--718},
  year={2018},
  organization={Springer}
}

@misc{Herodotus,
  title = {Historical block hash accumulator: {Merkle Mountain Ranges}},
  author={Herodotus Protocol},
  howpublished = {\url{https://docs.herodotus.dev/herodotus-docs/protocol-design/historical-block-hash-accumulator/merkle-mountain-ranges}},
  note = {Accessed: 2025-08}
}

@misc{Glaeser,
  title = {Key distribution on blockchains: The case for registration-based encryption},
  author={Noemi Glaeser},
  howpublished = {\url{https://a16zcrypto.com/posts/article/registration-based-encryption/}},
  note = {Accessed: 2025-08}
}

@misc{Hambrock,
  title = {Merkle Mountain Belts: Proof-of-concept implementation},
  author={Hambrock, Robert and Cevallos, Alfonso},
  howpublished = {\url{https://github.com/w3f/merkle-mountain-belt-clj}},
  note = {Accessed: 2025-10}
}

@incollection{hirata2024stochastic,
  title={Stochastic modeling of UTXOs in Bitcoin graphs},
  author={Hirata, Shinya and Shirai, Tomoyuki},
  booktitle={Proceedings of Blockchain Kaigi 2023 (BCK23)},
  pages={011008},
  year={2024}
}

@article{hopwoodzcash,
  title={Zcash protocol specification},
  author={Hopwood, D. and Bowe, S. and Hornby T. and Wilcox N.},
  journal={Zerocoin Electric Coin Company},
  year={2018}
}

@misc{Hyperbridge,
  title = {Protocol Specification},
  author={Hypebridge},
  howpublished = {\url{https://docs.hyperbridge.network/protocol}},
  note = {Accessed: 2025-09}
}

@inproceedings{kiayias2020non,
  title={Non-interactive proofs of proof-of-work},
  author={Kiayias, Aggelos and Miller, Andrew and Zindros, Dionysis},
  booktitle={Financial Cryptography and Data Security: 24th International Conference},
  pages={505--522},
  year={2020},
  organization={Springer}
}

@article{lan2021horizon,
  title={Horizon: A gas-efficient, trustless bridge for cross-chain transactions},
  author={Lan, Rongjian and Upadhyaya, Ganesha and Tse, Stephen and Zamani, Mahdi},
  journal={arXiv preprint arXiv:2101.06000},
  year={2021}
}

@inproceedings{lee2023sok,
  title={{Sok}: Not quite water under the bridge: Review of cross-chain bridge hacks},
  author={Lee, Sung-Shine and Murashkin, Alexandr and Derka, Martin and Gorzny, Jan},
  booktitle={2023 IEEE International Conference on Blockchain and Cryptocurrency (ICBC)},
  pages={1--14},
  year={2023},
  organization={IEEE}
}

@inproceedings{li2007universal,
  title={Universal accumulators with efficient nonmembership proofs},
  author={Li, Jiangtao and Li, Ninghui and Xue, Rui},
  booktitle={Applied Cryptography and Network Security: 5th International Conference, ACNS 2007, Zhuhai, China, June 5-8, 2007. Proceedings 5},
  pages={253--269},
  year={2007},
  organization={Springer}
}

@inproceedings{mahmoody2022lower,
  title={Lower bounds for the number of decryption updates in registration-based encryption},
  author={Mahmoody, Mohammad and Qi, Wei and Rahimi, Ahmadreza},
  booktitle={Theory of Cryptography Conference},
  pages={559--587},
  year={2022},
  organization={Springer}
}

@inproceedings{mahmoody2023online,
  title={Online mergers and applications to registration-based encryption and accumulators},
  author={Mahmoody, Mohammad and Qi, Wei},
  booktitle={4th Conference on Information-Theoretic Cryptography (ITC 2023)},
  pages={15--1},
  year={2023},
  organization={Schloss Dagstuhl--Leibniz-Zentrum f{\"u}r Informatik}
}

@article{meiklejohn2020think,
  title={Think global, act local: Gossip and client audits in verifiable data structures},
  author={Meiklejohn, Sarah and Kalinnikov, Pavel and Lin, Cindy S and Hutchinson, Martin and Belvin, Gary and Raykova, Mariana and Cutter, Al},
  journal={arXiv preprint arXiv:2011.04551},
  year={2020}
}

@inproceedings{merkle1989certified,
  title={A digital signature based on a conventional encryption function},
  author={Merkle, Ralph C},
  booktitle={Conference on the theory and application of cryptographic techniques},
  pages={369--378},
  year={1987},
  organization={Springer}
}

@article{moshrefi2021lightsync,
  title={{LightSync}: Ultra Light Client for {PoW} Blockchains},
  author={Moshrefi, Niusha and Daneshpajooh, Mahyar and Feng, Chen},
  journal={arXiv preprint arXiv:2112.03092},
  year={2021}
}

@techreport{nakamoto2019bitcoin,
  title={Bitcoin: A peer-to-peer electronic cash system},
  author={Nakamoto, Satoshi},
  year={2008}
}

@conference{icissp21,
author={Ilker Ozcelik. and Sai Medury. and Justin Broaddus. and Anthony Skjellum.},
title={An Overview of Cryptographic Accumulators},
booktitle={Proceedings of the 7th International Conference on Information Systems Security and Privacy - ICISSP},
year={2021},
pages={661-669},
publisher={SciTePress},
organization={INSTICC},
doi={10.5220/0010337806610669},
isbn={978-989-758-491-6},
issn={2184-4356},
}

@article{perazzo2024smartfly,
  title={SmartFly: Fork-Free Super-Light Ethereum Classic Clients for the Internet of Things},
  author={Perazzo, Pericle and Xefraj, Riccardo},
  journal={IEEE Internet of Things Journal},
  year={2024},
  publisher={IEEE}
}

@article{qi2025tight,
  title={Tight Lower Bound on Witness Update Frequency in Additive Positive Accumulators},
  author={Qi, Wei},
  journal={Cryptology ePrint Archive},
  year={2025}
}

@inproceedings{reyzin2016efficient,
  title={Efficient asynchronous accumulators for distributed PKI},
  author={Reyzin, Leonid and Yakoubov, Sophia},
  booktitle={International Conference on Security and Cryptography for Networks},
  pages={292--309},
  year={2016},
  organization={Springer}
}

@inproceedings{shamir1984identity,
  title={Identity-based cryptosystems and signature schemes},
  author={Shamir, Adi},
  booktitle={Workshop on the theory and application of cryptographic techniques},
  pages={47--53},
  year={1984},
  organization={Springer}
}

@misc{Snowbridge,
  title = {Snowbridge: Introduction},
  author={Snowfork},
  howpublished = {\url{https://docs.snowbridge.network/}},
  note = {Accessed: 2024-10}
}

@misc{Sorgente,
  title = {Don’t trust, verify: An introduction to light clients},
  author={Mariano Sorgente},
  howpublished = {\url{https://a16zcrypto.com/posts/article/an-introduction-to-light-clients/}},
  note = {Accessed: 2025-08}
}

@inproceedings{tas2023bitcoin,
  title={Bitcoin-enhanced proof-of-stake security: Possibilities and impossibilities},
  author={Tas, Ertem Nusret and Tse, David and Gai, Fangyu and Kannan, Sreeram and Maddah-Ali, Mohammad Ali and Yu, Fisher},
  booktitle={2023 IEEE Symposium on Security and Privacy (SP)},
  pages={126--145},
  year={2023},
  organization={IEEE}
}

@inproceedings{tas2024light,
  title={Light clients for lazy blockchains},
  author={Tas, Ertem Nusret and Tse, David and Yang, Lei and Zindros, Dionysis},
  booktitle={International Conference on Financial Cryptography and Data Security},
  pages={3--21},
  year={2024},
  organization={Springer}
}

@inproceedings{tyagi2022versa,
  title={{VeRSA}: verifiable registries with efficient client audits from {RSA} authenticated dictionaries},
  author={Tyagi, Nirvan and Fisch, Ben and Zitek, Andrew and Bonneau, Joseph and Tessaro, Stefano},
  booktitle={Proceedings of the 2022 ACM SIGSAC Conference on Computer and Communications Security},
  pages={2793--2807},
  year={2022}
}

@misc{MMR,
  title = {Merkle Mountain Ranges},
  author={Todd, Peter},
  howpublished = {\url{https://github.com/opentimestamps/
opentimestamps-server/blob/master/doc/merkle-mountain-range.md}},
  note = {Accessed: 2024-01}
}

@article{wood2014ethereum,
  title={Ethereum: A secure decentralised generalised transaction ledger},
  author={Wood, Gavin and others},
  journal={Ethereum project yellow paper},
  volume={151},
  number={2014},
  pages={1--32},
  year={2014}
}

\section{Related work and potential applications}\label{s:apps}

In this section we briefly describe some protocols --~theoretical and implemented ones, in blockchain and elsewhere~-- that currently use $\MMR$~\cite{MMR} or $\UMMR$~\cite{reyzin2016efficient} as a commitment scheme, and we highlight the improvements they would obtain by using one of our proposed schemes instead. 
We also describe further possible variants of $\MMB$. 

In particular, in recent years many blockchain communities have discussed transitioning from a Merkle chain to an $\MMR$ to commit to block headers. 
This can be accomplished via a hard fork, a soft fork, or a velvet fork; see~\cite{moshrefi2021lightsync} for a discussion on these options. 
Such a transition has been proposed for Ethereum Classic~\cite{perazzo2024smartfly}, and adopted by Grin, Beam, Harmony~\cite{lan2021horizon}, Ethereum (with the Herodotus protocol~\cite{Herodotus}), Zcash~\cite{hopwoodzcash} (with the adoption of FlyClient~\cite{bunz2020flyclient} in its Heartwood fork), and Polkadot~\cite{burdges2020overview} (in its BEEFY protocol~\cite{bhatt2025trustless}). 
This adoption of $\MMR$ has enabled faster LCPs and cross-chain bridges. 
These could be made even faster with $\MMB$. 

\vspace{3mm}

\textbf{FlyClient.} 
Consider a light client Alice who wants to synchronize to a PoW blockchain network, but might be in the presence of adversarial forks, which contain invalid blocks. 
FlyClient~\cite{bunz2020flyclient} proposes committing to the list $X$ of block headers with an MMR, and requires each header $x_i$ to store the commitment $\langle X_{i-1} \rangle$ to all previous headers. 
Authors prove that, after downloading the current commitment $\langle X_{n} \rangle$, Alice can detect an adversarial fork with overwhelming probability by requesting the headers $x_i$ of $O(\log n)$ randomly sampled blocks, checking their PoW difficulty level, and downloading and verifying their corresponding proofs $\pi_{x_i\in X_n}$ and $\pi_{X_{i-1}\prec X_n}$. 
As each of these $O(\log n)$ proofs is of size $O(\log n)$ with MMR, they obtain a total communication and computational complexity (CCC) of $O(\log^2 n)$. 

Using $\MMB$ would decrease the update time per append from $O(\log n)$ to $O(1)$ for block authors. 
%
However, the main advantage of using $\MMB$ is its incrementality: when Alice resyncs after a $k$-increment, she only needs to sample $O(\log k)$ headers out of the $k$ new ones, and thus only handles proofs of size $O(\log k)$ with $\MMB$, which leads to a CCC of $O(\log^2 k)$, as opposed to a CCC of $O(\log n \cdot \log k)$ with $\MMR$.

\vspace{3mm}

\textbf{Proofs of Proof-of-Stake (PoPoS).} 
Consider a PoS blockchain network whose consensus protocol is run by a committee of validators that changes every epoch. 
PoPoS~\cite{agrawal2023proofs} proposes using an $\MMR$ to commit a list $X$ where each item $x_i$ in it contains information about an epoch's committee of validators, as well as their signed handover to the next epoch's committee. 
The authors argue that when Alice synchronizes to the network, her most security-critical step is retrieving the current commitment $\langle X_n \rangle$ from an honest source. 

Now assume Alice retrieves two different commitments $\langle X_n \rangle, \langle X'_n \rangle$, from an honest and an adversarial server, respectively. 
Authors propose an interactive \emph{bisection game}, for Alice to find the first index $i$ at which lists $X_n$ and $X_n'$ disagree; once she downloads items $x_i$ and $x'_i$, she can establish which list made a non-authorized committee handover. 
Starting from the root of the MMR (i.e., the commitment), in each communication round of the game, Alice requests from both servers both children of the current node, zooms in on the first child in which they disagree, and repeats. 
Eventually, she will reach a leaf, namely $h_i=H(x_i)$.  
Hence, the number of rounds equals the depth of the leaf, which is $O(\log n)$ with MMR. 

Again, we can exploit MMB's incrementality in this LCP. 
If Alice already knows a validator committee $x_m$ from a $k$-increment ago, $k=n-m$, her CCC can be reduced to $O(\log k)$ if using MMB. 
First, she requests and verifies the two membership proofs $\pi_{x_m\in X_n}$ and $\pi_{x_m\in X'_n}$. 
Using the terminology introduced in Section~\ref{s:increment}, we can say that each of these proofs contains a node set covering $[h_{m+1}, h_{n}]$, each of size $O(\log k)$.
Alice can zoom in on the first node for which these sets disagree, and start the bisection game from there. 
As each leaf in $[h_{m+1}, h_{n}]$ has a depth of $O(\log k)$ in MMB, the game will only take $O(\log k)$ rounds. 

\vspace{3mm}

\textbf{Cross-chain bridges.} 
A trustless communication bridge between a source blockchain network $A$ and a target network $B$ typically works as follows. 
Users in $A$ continually generate messages for $B$, such as transfer requests. 
Network $A$ commits to a list $X$ of these outgoing messages \emph{on the granularity of blocks}: 
namely, it first commits to messages generated within a block (e.g., with a Merkle tree), 
and then it appends this commitment to list $X$. 
Hence, we have two nested commitment schemes, and authenticating an outgoing message requires both a \emph{message inclusion proof} relative to a block, and a \emph{block inclusion proof} in $X$. 
In what follows, we only consider the changes to the block inclusion proof, when list $X$ is committed with MMB instead of MMR. 

A smart contract in $B$ receives from $A$ an update on commitment $\langle X \rangle$ at regular intervals, typically on the order of minutes or hours. 
Then, it receives messages (either from users or from a relayer), authenticates them against its copy of $\langle X \rangle$, and executes them. 
Hence, when a message is created in $A$, it can only be relayed the next time $\langle X \rangle$ is updated in $B$. 
Now, while this smart contract should be able to accept arbitrarily old messages, most users (or relayers) have no reason to wait longer than necessary, so the recency of most relayed messages will be of the same order of magnitude as the time interval between commitment updates, i.e., minutes or hours old. 
And since submitting a message with a large proof incurs high transaction fees, which can be significant in popular target networks such as Ethereum, it is sensible to let recent messages in $X$ have shorter proofs. 

We consider the case of Polkadot~\cite{burdges2020overview}. 
Since 2024, its validators run the \emph{Bridge efficiency enabling finality yielder} (BEEFY) protocol~\cite{bhatt2025trustless} on top of the regular consensus protocol. 
At regular intervals, they update and sign a commitment $\langle X \rangle$ to a list $X$ that has one item per finalized block. 
In turn, each item commits not only to a block, but also to any concurrent messages to external networks; thus, commitment $\langle X \rangle$ works as a universal digest of the state of Polkadot for all light clients and bridges. 
In particular, this commitment is used by Snowbridge~\cite{Snowbridge} and Hyperbridge~\cite{Hyperbridge}, trustless Polkadot-Ethereum bridges in production since 2024 and 2025, respectively. 

We remark that, as of late 2025, Polkadot has north of $28$ million blocks, so committing to them with a balanced tree would result in block inclusion proofs around 25 hashes long (and growing). 
In contrast, if BEEFY uses MMB, by Theorem~\ref{thm:Zipf} and the fact that Polkadot produces 10 blocks per minute, in expectation, the block inclusion proof of a twelve-hour old message ($k=7200$) would be less than $20$ hashes long, that of a one-hour old message ($k=600$) would be about $15$ hashes long, and that of a five-minute old message ($k=50$) would be about 10 hashes long.

\vspace{2mm}

\textbf{Stateless architecture, and MiniChain.}  
Many blockchain applications suffer from state bloat: they contain a large data set $X$ that keeps growing at a fast pace, while any individual item is used very rarely, or might even stop serving any purpose whatsoever soon after its creation. 
This bloat constitutes a burden for full nodes in terms of disk space, and slows down the synchronization process for new full nodes that join the network. 
A possible solution to this problem is to use a stateless architecture. 
The idea is to require full nodes to maintain a commitment scheme over list $X$, but relieve them of the responsibility of storing it in full. 
Instead, each user has the responsibility to store any item of interest, along with its membership proof, and update this proof whenever necessary. 

MiniChain~\cite{chen2020minichain} proposes a stateless architecture for a UTXO-based blockchain. 
Authors suggest using $\MMR$ to commit to the list $X$ of all transaction outputs (TXOs) \emph{on the granularity of blocks}. 
As before, this means there are two nested commitment schemes --~the internal one being a Merkle tree per block~-- and authenticating a TXO requires both a \emph{TXO inclusion proof} relative to a block, and a \emph{block inclusion proof} in $X$. 
Again, we only consider the changes to the block inclusion proof, when list $X$ is committed with an $\MMB$. 

There is a second append-only list, of \emph{spent} transaction outputs (STXOs), committed with an RSA-based scheme~\cite{li2007universal}.
Hence, in order to spend a TXO, a user submits the two aforementioned inclusion proofs to the TXO list $X$, and a proof of non-membership to the STXO list. 
Authors propose that, while full nodes run $\MMR$ on $X$ (which requires $O(\log n)$ memory), users only store block inclusion proofs up to the local mountain peak, i.e., $\UMMR$ proofs~\cite{reyzin2016efficient}. 
This way, users profit from low update frequency. 

\emph{$\MMB$ Variant 1.} We can require full nodes to run $\MMB$ --~store all $O(\log n)$ peaks and bagging nodes, and maintain an $O(1)$-sized commitment~-- while users only store $\UMMB$ proofs. 
This way, users' proofs are shorter and retain all asynchrony properties. 
Full nodes can validate any user's $\UMMB$ proof, and may ``extend'' it into an $\MMB$ proof. 

If list $X$ is instead committed with $\MMB$ Variant 1, each append update only takes constant time. 
Besides, users gain the property of recent-proof compatibility, so they need not be constantly online to check for proof updates; instead, they are given a guaranteed validity period, which grows exponentially fast after each update. 
And the CCC of each such update reduces from $O(\log n)$ to $O(1)$, as their proof can only grow by one or two hashes. 

Finally, we observe that a user typically spends a TXO soon after its creation. 
For example, it was recently proved that in Bitcoin, the lifetime of TXOs --~the number of blocks created in between a TXO being created and it being spent~-- follows a recency-based inverse power law distribution with exponent $s\approx 1.31$~\cite{hirata2024stochastic}. 
This means that over $\%90$ of Bitcoin TXOs are spent when they are less than $1000$ blocks old. And by Theorem~\ref{thm:Zipf} with $k=1000$, we see that $\MMB$ would give these TXOs a block inclusion proof less than $18$ hashes long in expectation. 
Hence, using a variant of $\MMB$ in MiniChain would likely provide most users with  transaction fees that are cheaper in expectation, and that remain constant over time, unaffected by the state bloat. 

\emph{MMB Variant 2.} We could require full nodes to also store internal mountain nodes of height (say) 20 or higher. 
Thus, users can eventually stop updating their proofs after (say) one year, once their leaves belong to a mountain of height 20. 
The memory requirements for full nodes will increase at a very slow linear rate of (say) two hashes per year. 

\vspace{3mm}

\textbf{Registration-based encryption.} 
In the context of public-key cryptography, consider a set of $n$ users, each of whom has an ID (phone number, email, domain name, etc.). 
Traditionally, each user generates a (secret key, public key) pair, and if Alice wants to send a message to Bob, she first needs to learn his public key, which requires the existence of public-key infrastructure (PKI), such as an authority who provides public keys under request. 

Identity-based encryption (IBE)~\cite{shamir1984identity} is an alternative where users do not generate a (secret key, public key) pair;  
instead, an authority manages a single master key. 
Alice only needs to know the master key and Bob's ID to send him an encrypted message, and Bob can request his personal decryption key from the authority. 
IBE reduces the communication needed between users and the authority, as each user only needs to request a single key (their own decryption key), but introduces the \emph{key escrow problem}: the authority may decrypt any of Bob's messages, or give his decryption key to third parties. 

Garg et al.~\cite{garg2018registration} recently proposed a third alternative called registration-based encryption (RBE): each user generates a (secret key, public key) pair, and the authority simply commits to the list $X$ of users' (ID, public key) pairs. 
Alice only needs to know the public commitment $\langle X\rangle $ and Bob's ID to send him an encrypted message. 
Bob, in turn, needs his secret key and the membership proof of his item in $X$ to decrypt the message. 

The use of $\UMMR$ is proposed in~\cite{garg2018registration} to achieve low update frequency, so that Bob only needs to communicate with the authority $O(\log n)$ times --~to update his membership proof~-- after $n$ new user registrations. 
If $\UMMB$ is used instead, Bob also gains the property of recent-proof compatibility. 
But more importantly, \emph{the CCC of any interaction with the authority reduces from $O(\log n)$ to $O(1)$}: namely, any user registration (i.e., item append) and any decryption key update (as a membership proof can only grow by one or two hashes).

The use of RBE in blockchain has been recently discussed~\cite{Glaeser}, where the role of the authority is filled by a smart contract, and each new user bears the computational cost needed to update the commitment for their registration. 
As computations can be prohibitively expensive on smart contracts, it becomes vital to have a registration with $O(1)$ cost.

\vspace{3mm}

\textbf{Low update frequency.} Motivated by the use of Merkle structures in registration-based encryption, recent work \cite{mahmoody2022lower, mahmoody2023online, bonneau2025merkle, qi2025tight} has been done on the theoretical limits of commitment schemes in terms of proof update frequencies. 
In particular, Bonneau et al.~\cite{bonneau2025merkle} considered the number of times that the membership proof of an item needs to be updated, \emph{on average}, if we start with an empty list and append $n$ items in sequence. 
They remarked that $\UMMR$ achieves an average of $O(\log n)$, which is smallest among all known binary Merkle structures with polylogarithmic sized commitments, and proved a theoretical lower bound of $\omega(1)$. 
Qi~\cite{qi2025tight} recently improved this lower bound to $O(\log n / \log \log n)$, for any append-only commitment scheme with a polylogarithmic sized commitment. 
This latter bound matches the average reached by a non-binary Merkle structure presented in~\cite{mahmoody2023online}. 

We highlight that $\UMMB$ achieves an average of $O(\log n)$ updates, matching $\UMMR$ in the metric above, while reducing the complexity of each such update from $O(\log n)$ to $O(1)$.  
Besides, $\UMMB$ is the first scheme to achieve the property of recent-proof compatibility. 
We argue that, in many applications, this property is both more useful for protocol designers and easier to understand for users, than simply having a bound on the average update frequency across all items and all states of the list. 
Namely, a recent-proof compatibility with parameter $c=1/5$ means that if Alice updates the membership proof of an item five hours after its creation, this proof should remain valid for about one hour, regardless of the list size or the append rate, as long as this rate is roughly constant or changes slowly.
For example, such a guarantee would be very useful to users of the MiniChain application described above. 


\emph{MMB Variant 3.} In U-MMB (or MMB Variant 1), if we require full nodes to also store the left-hand child of every peak, the recent-proof-compatibility parameter enjoyed by users increases from $c=1/5$ to $c=1/3$. 
This can be shown by adapting the first case in the proof of Lemma~\ref{lem:rpc} (now, the proof becomes invalid only after the corresponding leaf participates in two merges, instead of one). 
Parameter $c$ can be further improved by having verifiers store more descendants of the peaks.

\section{Algorithmic details}\label{s:alg-UMMB}

In this section we provide algorithmic details for the $\UMMB$ commitment scheme. 
In particular, we demonstrate that the manager's append operation can be performed in constant time and logarithmic cache memory, without any large hidden factors that may render the scheme impractical. For further algorithmic details on $\UMMB$ and $\MMB$, we invite the reader to visit our proof-of-concept implementation in Clojure~\cite{Hambrock}. 

The $\UMMB$ scheme is built on top of a background application with an append-only ordered item list $X=(x_1, x_2, \cdots, x_n)$.  
It is composed of the following data structures:
\begin{itemize}
    \item A counter $n$ of the current number of items in $X$.
    \item \texttt{Peaks}: A list of mountain peaks, formally defined as a singly linked list of $t=\lfloor \log_2(n+1)\rfloor$ ``peak'' objects, where in turn each peak object \texttt{P} has the following fields: 
    \begin{itemize}
        \item \texttt{P.hash}: the hash of the peak node,
        \item \texttt{P.height}: the height of the corresponding mountain, and
        \item \texttt{P.prev}: a pointer to the previous peak object to the left.
    \end{itemize}
    We also keep a pointer $\texttt{P}_\texttt{head}$ to the rightmost peak object at all times.
    \item \texttt{Pairs}: a list representing all mergeable pairs of mountains, formally defined as a stack 
    of pointers to peak objects, containing the right-hand peak of each mergeable pair, and kept sorted so that the rightmost peak is on top.  
    \item \texttt{Hashes[]}: the hashes of nodes in mountains, including leaves and non-leaves, formally defined as an append-only array of size $2n+2=O(n)$ whose entries are hash values.  
\end{itemize}

In $\UMMB$, commitment $\langle X\rangle$ corresponds to the \texttt{Peaks} list, which can be accessed via the $\texttt{P}_\texttt{head}$ pointer and the $\texttt{P.prev}$ pointers used in sequence. 
All of these structures appear on a public bulletin board kept up-to-date by a scheme manager. 
Participants --~i.e., light clients~-- keep in memory the first two data structures, and the manager keeps in memory the first three data structures, hence they all need $O(\log n)$ memory. 

Array \texttt{Hashes[]} can be saved in permanent storage. 
It is append-only: entries are written in strict sequence and never erased nor modified. Its size grows dynamically as needed.  
The manager only needs append rights to it --~not read rights~-- and we assume that each append takes constant time. 
In turn, a participant only needs read rights, and we assume that reading entry \texttt{Hashes[$i$]} for a given index $i$ also takes constant time.

We present the manager's append operation in Algorithm~\ref{alg:append}. 
For an empty set $X$, the input is initialized as $n=0$, $\texttt{P}_\texttt{head}=\texttt{Null}$, $\texttt{Pairs} = \emptyset$, $\texttt{Hashes[]} = [0,0,0]$.

\begin{algorithm}[htp]
    \caption{Append operation, performed by the manager. The new item has hash $h$.}
    \label{alg:append}
 
\begin{algorithmic}
\State \textbf{Input:} Counter $n$, list \texttt{Peaks}, peak pointer $\texttt{P}_\texttt{head}$, stack \texttt{Pairs}, array \texttt{Hashes[]}, hash $h$
\State Update counter $n++$
\State Create new peak object $\texttt{P}_\texttt{new}$ \Comment{Start of ADD step}
\State Set $\texttt{P}_\texttt{new}.\texttt{hash}\leftarrow h$, $\texttt{P}_\texttt{new}.\texttt{height}\leftarrow 0$, $\texttt{P}_\texttt{new}.\texttt{prev}\leftarrow \texttt{P}_\texttt{head}$
\If{($\texttt{P}_\texttt{head}\texttt{ $\neq$ Null}$ \&  $\texttt{P}_\texttt{head}.\texttt{height}==0$)} 
    \State $\texttt{Pairs.push(P}_\texttt{new}\texttt{)}$ \Comment{new mergeable pair}
\EndIf
\State $\texttt{Peaks.add(P}_\texttt{new}\texttt{)}$, set $\texttt{P}_\texttt{head} \leftarrow \texttt{P}_\texttt{new}$
\State Set $\texttt{Hashes}[2n+1]\leftarrow h$ \Comment{append hash to array}
\If{($|\texttt{Pairs}|>0$)} \Comment{Start of MERGE step}
    \State Let $\texttt{P}_\texttt{mrg}\leftarrow \texttt{Pairs.pop()}$ 
    \State Let $\texttt{P}_\texttt{garbage} = \texttt{Peaks.remove}(\texttt{P}_\texttt{mrg}.\texttt{prev})$ \Comment {remove left-hand peak%
    \footnotemark} 
    \State Update $\texttt{P}_\texttt{mrg}.\texttt{hash}\leftarrow 
    H(\texttt{P}_\texttt{garbage}.\texttt{hash} || \texttt{P}_\texttt{mrg}.\texttt{hash})$ \Comment{$\texttt{P}_\texttt{mrg}$ becomes new parent}
    \State Update $\texttt{P}_\texttt{mrg}.\texttt{height}++$, $\texttt{P}_\texttt{mrg}.\texttt{prev}\leftarrow  \texttt{P}_\texttt{garbage}.\texttt{prev}$
    \If{ ($\texttt{P}_\texttt{mrg}.\texttt{prev $\neq$ Null}$ \& $\texttt{P}_\texttt{mrg}.\texttt{prev.height $==$ P}_\texttt{mrg}.\texttt{height}$) } 
        \State $\texttt{Pairs.push(P}_\texttt{mrg}\texttt{)}$ \Comment{new mergeable pair}
    \EndIf
    \State Set $\texttt{Hashes}[2n+2]\leftarrow \texttt{P}_\texttt{mrg}.\texttt{hash}$ \Comment{append hash to array}
\EndIf
\Return
\end{algorithmic} 
\end{algorithm}
\footnotetext{The $\texttt{Peaks.remove()}$ operation is assumed to be a garbage collection process that removes a peak object from the $\texttt{Peaks}$ list, but makes its contents temporarily available in the variable $\texttt{P}_\texttt{garbage}$}

Evidently, the algorithm requires $O(\log n)$ cache memory and $O(1)$ time, which proves Lemma~\ref{lem:U-append}. 
We remark that hashes are appended to array \texttt{Hashes[]} in the order they are created, namely a leaf followed by a merge peak, and in case there is no merge step we skip an entry (or alternatively append a default value~$\bullet$), so that the hash of the $n$-th item $x_n$ is always stored in \texttt{Hashes[$2n+1$]}. 
By doing so, we facilitate index calculations for a node's parent and children, as we show next.
Recall from Section~\ref{s:unbagged} that the height of a mountain node is defined as its distance to any descendant leaf, and that for a positive integer $m$, its 2-adic valuation $\nu_2(m)$ is the highest integer exponent $\nu$ such that $2^{\nu}$ divides $m$, $i\%m$ is the remainder of $i$ modulo $m$, and $i\backslash m = \lfloor i/m \rfloor$ is the integer division of $i$ over $m$.

\begin{lemma}\label{lem:array}
Array \texttt{Hashes[]} has the following properties:
\begin{enumerate}
    \item Entry \texttt{Hashes[$i$]} is skipped if and only if $i$ is zero or a power of two, so $O(\log n)$ entries are skipped.
    \item If $N$ is the node whose hash is stored at \texttt{Hashes[i]}, then $N$ is a leaf if and only if $i$ is odd, and in that case it corresponds to the $(i\backslash 2)$-th item. 
    More generally, the height of $N$ never changes and is exactly $\nu_2(i)$.
    \item If $\texttt{isLeft}(i):=(i\backslash 2^{\nu_2(i)+1})\% 2$, then $N$ is (or will be) a left child if $\texttt{isLeft}(i)=1$ and a right child if $\texttt{isLeft}(i)=0$, and its sibling is (or will be) stored at index $\texttt{sibling}(i):=i+(-1)^{\texttt{isLeft}(i)+1}\cdot 2^{\nu_2(i)+1}$.
    \item The parent of node $N$ is (or will be) stored at index $\texttt{parent}(i):=i+(1+2\cdot \texttt{isLeft}(i))\cdot 2^{\nu_2(i)}$, and $N$ is a peak if and only if $\texttt{parent}(i)>2n+2$.
    \item If node $N$ is not a leaf, then its left and right children are respectively stored at indices $\texttt{left}(i):=i-3\cdot 2^{\nu_2(i)-1}$ and $\texttt{right}(i):=i-2^{\nu_2(i)-1}$. 
\end{enumerate}
\end{lemma}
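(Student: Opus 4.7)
The plan is to handle Parts 1 and 2 directly from Algorithm~\ref{alg:append} and Lemma~\ref{lem:Sn}, and to reduce Parts 3--5 to a single structural claim identifying the indices of both children in any merge step.

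For Part 1, I will inspect the append algorithm: a leaf hash is always written at index $2n+1$, while index $2n+2$ receives a hash exactly when the merge step fires, which by Lemma~\ref{lem:Sn}(3) happens iff $n+1$ is not a power of two. Together with the three zero-entries initialized at indices $0, 1, 2$, the skipped indices are exactly $\{0\}\cup\{2^k : k\geq 0\}$, giving the claimed $O(\log n)$ count. Part 2 follows by an induction on $n$: the leaf written at $2n+1$ has height $0 = \nu_2(2n+1)$ and item index $(2n+1)\backslash 2 = n$; the merge peak at $2n+2$, when present, has height $\nu_2(n+1)+1 = \nu_2(2(n+1))$ by Lemma~\ref{lem:Sn}(3). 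Since a node's height is fixed at creation, $\nu_2(i)$ remains correct forever.

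For Parts 3--5, the heart of the argument is the following structural claim about any merge at append $n$: setting $h := \nu_2(n+1)$ and $q := (n+1)/2^h$ (odd by Lemma~\ref{lem:Sn}(3)), the new parent sits at $i_P = 2(n+1) = q\cdot 2^{h+1}$, the right child at $i_R = (2q-1)\cdot 2^h$, and the left child at $i_L = (2q-3)\cdot 2^h$. I will derive these by tracking the two height-$h$ peaks alive just before append $n$: each was born either as a leaf (if $h = 0$) or, for $h\geq 1$, as a merge peak at an earlier append $n'$ with $\nu_2(n'+1) = h-1$. The two largest such values of $n'+1$ below $q\cdot 2^h$ are $(2q-1)\cdot 2^{h-1}$ and $(2q-3)\cdot 2^{h-1}$, giving the claimed indices via $i = 2(n'+1)$; the case $h = 0$ is handled by the same formula, as $i_R = 2n+1$ and $i_L = 2n-1$.

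The main obstacle is verifying that both peaks genuinely \emph{survive} (are not consumed by an intervening height-$h$ merge) between their creation and append $n$. This reduces to showing that no $m\in((2q-3)\cdot 2^{h-1},\; q\cdot 2^h)$ satisfies $\nu_2(m)=h$; equivalently, no odd integer $m/2^h$ lies in $(q - 3/2,\; q)$. The only integer there is $q-1$, which is \emph{even} because $q$ is odd -- contradiction. Once the structural claim is established, Parts 3--5 follow by elementary arithmetic: $i_R - i_L = 2^{h+1} = 2^{\nu_2(i_L)+1}$, $i_P - i_L = 3\cdot 2^h$, and $i_P - i_R = 2^h$. A node at index $i$, writing $m := i/2^{\nu_2(i)}$, is a left child iff $m \equiv 3 \pmod 4$ (since $2q - 3 = 4r - 1$ when $q = 2r+1$ is odd), matching $(i\backslash 2^{\nu_2(i)+1})\% 2 = 1$, and a right child iff $m \equiv 1 \pmod 4$. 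The peak criterion in Part 4 merely records that the parent index has not yet been allocated in $\texttt{Hashes}[]$, and Part 5 inverts the parent formula to locate the two children at height $\nu_2(i) - 1$.
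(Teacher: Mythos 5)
Your proposal is correct, and Parts 1--2 follow the paper's own argument almost verbatim (leaf at index $2n+1$, merge peak at $2n+2$, heights fixed at creation and matching $\nu_2(i)$). For Parts 3--5, however, you take a genuinely different route. The paper argues globally: height-$s$ hashes occupy the indices $3\cdot 2^s, 5\cdot 2^s, 7\cdot 2^s,\ldots$ in creation order, and since Lemma~\ref{lem:Sn} caps the number of coexisting height-$s$ peaks at two, a third cannot be born before the first two merge; hence consecutive entries in this list pair up, and the parent indices follow because the height-$(s+1)$ nodes are likewise created in ascending order. You instead work locally from a single merge event: you compute the birth appends of the two height-$h$ peaks consumed at append $n$, obtaining $i_L=(2q-3)2^h$ and $i_R=(2q-1)2^h$ with $q=(n+1)/2^h$ odd, and you must then separately verify that neither peak is consumed by an intervening height-$h$ merge --- which you do correctly, since the only integer in $(q-3/2,\,q)$ is the even number $q-1$. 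Your survival lemma is exactly the step the paper's coexistence-bound pairing argument renders unnecessary, so the paper's proof is shorter; in exchange, your version makes fully explicit \emph{which} two peaks participate in a given merge and why, and all the resulting index arithmetic ($i_R-i_L=2^{h+1}$, $i_P-i_L=3\cdot 2^h$, $i_P-i_R=2^h$, and the $m\bmod 4$ handedness criterion) checks out against the formulas in the statement.
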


\begin{proof}
The first claim follows from Lemma~\ref{lem:Sn}. 
Next, in our scheme mountains evolve only via the add and merge steps of the append operation; as neither step changes the height of an existing node, it follows that node heights never change. 
When the $n$-th node is appended, its leaf is stored at position $i=2n+1$, which is odd, hence $\nu_2(i)=0$ matching the node height. 
When there is a merge step, two mountains of height $j:=\nu_2(n+1)$ are merged (by Lemma~\ref{lem:Sn}) via a merge peak of height $j+1$ that is stored at position $2n+2$, and hence $\nu_2(2n+2)=\nu_2(2\cdot(n+1))=1+\nu_2(n+1)=j+1$, again matching the node height. This proves the second claim.

As a consequence of the second claim, we have that for any $s\geq 0$, hashes of nodes of height $s$ are stored in ascending order at indices $3\cdot 2^s, 5\cdot 2^s, 7\cdot 2^s, 9\cdot 2^s, \cdots$, because we use all such indices that are not a power of two. 
As it follows from Lemma~\ref{lem:Sn} that there can be at most two peaks of height $s$ at any given moment, it must be the case that peaks at indices $3\cdot 2^s$ and $5\cdot 2^s$ are merged together, then peaks at indices $7\cdot 2^s$ and $9\cdot 2^s$ are merged together, and so on. This proves the third claim. 
Finally, the nodes of height $s+1$ resulting from these merges are also created in ascending order, hence the parent of nodes at indices $3\cdot 2^s$ and $5\cdot 2^s$ must be at index $3\cdot 2^{s+1}$, the parent of nodes at indices $7\cdot 2^s$ and $9\cdot 2^s$ must be at index $5\cdot 2^{s+1}$, and so on. This proves the last two claims.
\end{proof}

\begin{figure}[htp]
\centering
\includegraphics[width=0.7\textwidth]{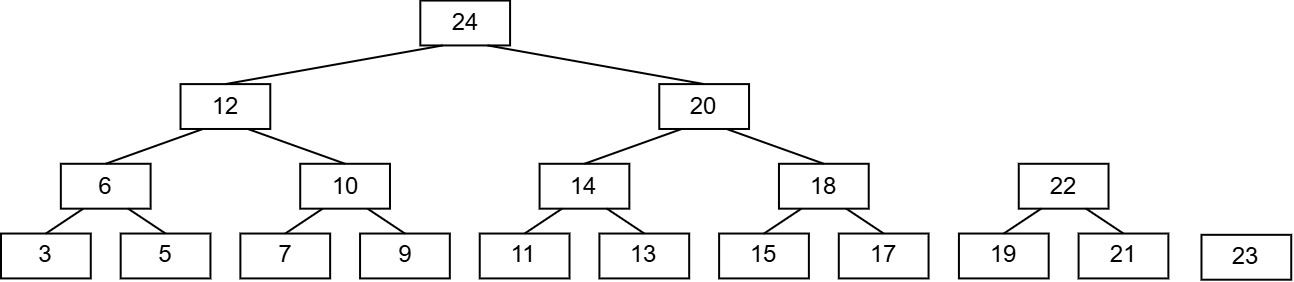}
\caption{U-MMB structure for $n=11$, with height sequence $S_{11}=(3,1,0)$. Each node is labeled with the index $i$ of array \texttt{Hashes[]} where its hash is stored.}
\label{fig:indices}
\end{figure}

For example, if $N$ is the node whose hash is stored at \texttt{Hashes[18]}, then its height is $\nu_2(18)=1$, and it is a right child because $\texttt{isLeft}(18)=(18\backslash 2^2)\% 2 =0$. 
Its sibling is at index $\texttt{sibling}(18)=18-2^2=14$, its parent at index $\texttt{parent}(18)=18+2^1 =20$, and its children at indices $\texttt{left}(18)=18-3\cdot 2^0=15$ and $\texttt{right}(18)=18-2^0=17$. 
See Figure~\ref{fig:indices}. 
Another way to formulate statements 2 and 3 in Lemma~\ref{lem:array} is that if $i$ in binary is $(...b_2b_1b_0)_2$, and $s$ is the lowest index such that $b_s=1$, then node $N$ is of height $s$, and is a left child if and only if $b_{s+1}=1$.

Finally, we describe how Alice can exploit these data structures to perform her update operations efficiently after being offline between states $X_m$ and $X_n$, where $k=n-m$. 
To update her copy of the commitment, she only needs to retrieve some $O(\log k)$ rightmost peaks from $\langle X_n \rangle$, which she can do via the $\texttt{P}_\texttt{head}$ pointer and each peak's pointer to the previous peak. 
Next, for any item $x_i\in X$ of interest, Alice can efficiently find its leaf hash at index $j=2i+1$ of the \texttt{Hashes[]} array, and can retrieve the indices of the hashes that form its membership proof $\pi_{x_i\in X_n}$ by recursively applying the formula $j'=\texttt{sibling}(\texttt{parent}(j))$, until this formula exceeds $2n+1$. 
Hence, assuming that when Alice goes offline at state $X_m$ she stores the first such index $j$ that exceeds $2m+1$, she knows the exact value of $n$ at which her stored proof $\pi_{x_i\in X_m}$ becomes outdated, and the exact indices of the $O(\log k)$ missing hashes to update the proof. 

\section{Amortized membership proof sizes} \label{s:amortized}

For a Merkle structure $\mathcal{M}$ over an append-only list $X$, and any integers $1\leq k\leq n$, let $\sigma_{\mathcal{M}}(k,n)$ be the size, measured in hashes, of the membership proof of the $k$-th most recently appended item when $|X|=n$, i.e., of $\pi_{x_{n-k+1}\in X_n}$.  
In this section we study the \emph{amortized} size of this proof, when $k$ is kept fixed and $n$ sweeps over a large interval. 
Formally, 
\begin{equation}\label{eq:amortized}
\bar{\sigma}_{\mathcal{M}}(k):= \lim_{N\rightarrow \infty} \frac{1}{N} \sum_{n=k}^{k+N-1} \sigma_{\mathcal{M}}(k,n).
\end{equation}

This criterion is important for a light client that regularly queries recent items, such as a user of a cross-chain bridge (see Appendix~\ref{s:apps}), or a user who wants to verify if their latest transaction was finalized. 
It is easy to see that while $\bar{\sigma}_{\chain}(k)=k$, $\bar{\sigma}_{\MMR}(k)$ diverges; hence, according to this criterion, even the Merkle chain eventually outperforms $\MMR$. 
Interestingly, the forward-bagged variant of $\MMR$ achieves a reasonable amortized size of $\bar{\sigma}_{\FMMR}(k)=\Theta(\log k)$, although its worst-case size diverges. 

In this section we prove the first statement of Theorem~\ref{thm:Zipf}, and prove as well that, on top of their superiority in terms of worst-case bounds, our structures are also superior in terms of amortized proof sizes. 
In particular, $\UMMB$ outperforms $\UMMR$, and $\MMB$ outperforms $\FMMR$; see Figure~\ref{fig:amortized}. 

\begin{figure}[htp]
    \centering
    \includegraphics[width=0.9\linewidth]{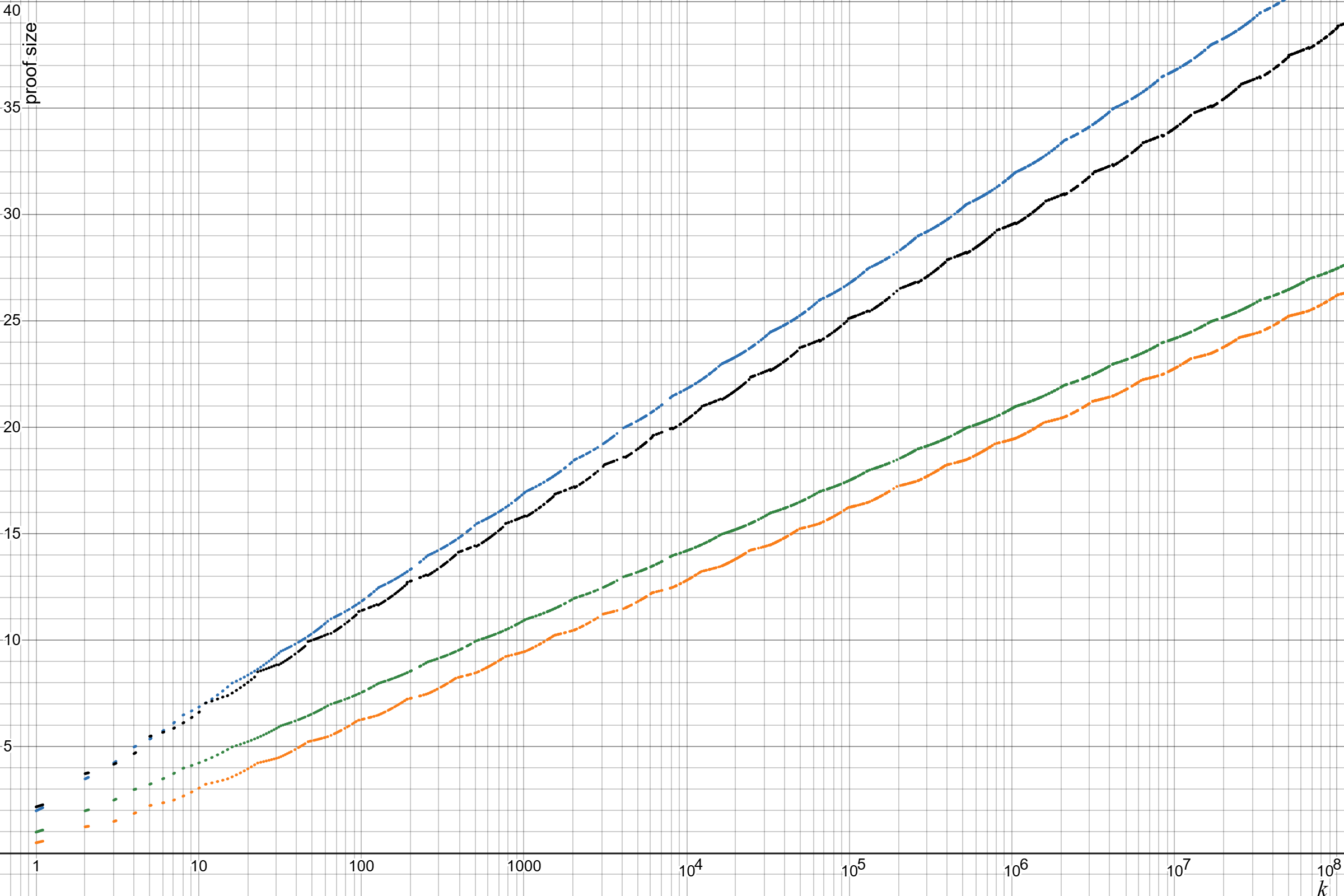}
    \caption{Amortized proof size $\bar{\sigma}_{\mathcal{M}}(k)$ of the $k$-th newest item: exact functions for $\UMMB$ (in orange), $\UMMR$ (in green) and $\FMMR$ (in blue), and an upper bound for $\MMB$ (in black). }
    \label{fig:amortized}
\end{figure}

\subsection{U-MMR and F-MMR}\label{s:am-MMR}

For $1\leq k\leq n$, consider the $\FMMR$ structure when $|X|=n$, let $r_{\FMMR}(k,n)$ be the number of ancestors of the $k$-th newest leaf that are range nodes, and let the indicator function $r'_{\FMMR}(k,n)$ be $1$ if this leaf is contained in the leftmost tree, and $0$ otherwise.  
Then,  
$$\sigma_{\FMMR}(k,n)=\sigma_{\UMMR}(k,n)+r_{\FMMR}(k,n)-r'_{\FMMR}(k,n),$$ 
because, as we extend a membership proof from $\UMMR$ to $\FMMR$, we add one hash per range node ancestor, except if the leaf is in the leftmost tree ($r'_{\FMMR}(k,n)=1$), in which case we can save one hash in the proof because the mountain peak has no sibling node. 
For the auxiliary functions defined above, we also define their amortized values with formulas similar to~\eqref{eq:amortized}. 
It is easy to see that $\bar{r'}_{\FMMR}(k)=0$, so 
\begin{equation}\label{eq:am-MMR}
\bar{\sigma}_{\FMMR}(k)=\bar{\sigma}_{\UMMR}(k)+\bar{r}_{\FMMR}(k).     
\end{equation}

We begin with a needed observation, whose proof is given in Appendix~\ref{s:proofs}. 

\begin{lemma}\label{lem:nu} 
The amortized value of the $2$-adic valuation function $\nu_2(n)$ is $$\lim_{N\rightarrow \infty} \frac{1}{N} \sum_{n=1}^N \nu_2(n)=1.$$
\end{lemma}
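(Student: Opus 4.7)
The plan is to rewrite $\sum_{n=1}^N \nu_2(n)$ by swapping the order of summation, using the characterization of $\nu_2$ as a counting function. Specifically, for every positive integer $n$ we have the identity
$$\nu_2(n) = \sum_{j=1}^{\infty} \mathbbm{1}[\,2^j \mid n\,],$$
since $\nu_2(n)$ is exactly the number of indices $j \geq 1$ for which $2^j$ divides $n$.

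Substituting this into the sum and swapping the order of summation gives
$$\sum_{n=1}^N \nu_2(n) = \sum_{j=1}^{\infty} \bigl|\{\, 1\leq n\leq N : 2^j \mid n \,\}\bigr| = \sum_{j=1}^{\infty} \left\lfloor \frac{N}{2^j} \right\rfloor.$$
The inner sum truncates naturally at $j = \lfloor \log_2 N \rfloor$, since for larger $j$ the floor is zero.

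Using the two-sided bound $\frac{N}{2^j} - 1 \leq \lfloor N/2^j \rfloor \leq \frac{N}{2^j}$ on each nonzero term, and noting that there are at most $\lfloor \log_2 N \rfloor$ such terms, I obtain
$$N - \log_2 N - 1 \;\leq\; \sum_{n=1}^N \nu_2(n) \;\leq\; \sum_{j=1}^{\infty} \frac{N}{2^j} \;=\; N.$$
Dividing through by $N$ and taking $N \to \infty$ squeezes the amortized value to $1$, proving the lemma.

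There is no real obstacle here; the only subtlety is making sure the swap of summation is legitimate, which is immediate since all terms are non-negative and, for each fixed $n$, the inner sum defining $\nu_2(n)$ is finite. The rest is elementary bookkeeping with floor functions.
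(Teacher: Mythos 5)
Your proof is correct and follows essentially the same route as the paper: both reduce the sum to the identity $\sum_{n=1}^N \nu_2(n)=\sum_{j\geq 1}\lfloor N/2^j\rfloor$ and then estimate it as $N-O(\log N)$ --- the paper reaches that identity via $\nu_2(N!)$ and Legendre's formula, while you derive it directly by double counting, which is the same computation in different clothing. (Your stated lower bound is off by a harmless additive constant, since $\sum_{j=1}^{\lfloor\log_2 N\rfloor} N/2^j = N(1-2^{-\lfloor\log_2 N\rfloor})$ can be as small as $N-2$; this does not affect the limit.)
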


\begin{lemma}\label{lem:aUMMR}
In $\UMMR$, for any $k\geq 1$, if $d'=\lceil \log_2 k \rceil$ the amortized proof size is 
$$\bar{\sigma}_{\UMMR}(k)=d'+\frac{2k}{2^{d'}}-1=\log_2 k + O(1).$$
\end{lemma}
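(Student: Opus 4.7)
The plan is to express the depth of the $k$-th newest leaf directly in terms of the binary expansion of $n$, split the analysis by the low-order bits, and then apply a density argument combined with Lemma~\ref{lem:nu}. If the $1$-bits of $n$ sit at positions $j_1<j_2<\cdots$, then in $\UMMR$ the mountains from right to left have heights $j_1,j_2,\ldots$, with the $i$-th containing $2^{j_i}$ leaves, and the depth of the $k$-th newest leaf equals the height of the first mountain (counting from the right) whose cumulative leaf count reaches $k$. I would write $n=2^{d'}q+m$ with $m=n\bmod 2^{d'}$; then the low mountains, i.e., those of height strictly less than $d'$, collectively hold exactly $m$ leaves.

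This splits the range of $n$ into two cases. If $m<k$, the leaf lies beyond the low mountains and so sits in the rightmost mountain of height at least $d'$, namely at depth $d'+\nu_2(q)$. If $m\geq k$, the leaf lies in a low mountain; since $k>2^{d'-1}$ (when $k=2^{d'}$ this case is vacuous) and $m<2^{d'}$, the tallest low mountain has height exactly $d'-1$ and contains $2^{d'-1}$ leaves, while all remaining low mountains jointly hold only $m-2^{d'-1}<2^{d'-1}<k$ leaves, so the leaf sits in that tallest low mountain at depth $d'-1$. As $n$ sweeps a long window, the residue $m$ cycles uniformly through $\{0,1,\dots,2^{d'}-1\}$ and, within each residue class, $q=\lfloor n/2^{d'}\rfloor$ runs over consecutive integers. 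Hence Case~1 has natural density $k/2^{d'}$ and contributes $d'+1$ on average by Lemma~\ref{lem:nu}, while Case~2 has density $(2^{d'}-k)/2^{d'}$ and contributes $d'-1$ uniformly. Summing,
\[
\bar{\sigma}_{\UMMR}(k)=\frac{k}{2^{d'}}(d'+1)+\frac{2^{d'}-k}{2^{d'}}(d'-1)=d'+\frac{2k}{2^{d'}}-1,
\]
and since $d'=\log_2 k+O(1)$ and $2k/2^{d'}\in[1,2]$, this is $\log_2 k+O(1)$.

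The main obstacle is the Case~2 analysis: one must verify that regardless of the specific low bits of $m$, \emph{all} non-leftmost low mountains together hold strictly fewer than $k$ leaves, so that the leaf consistently reaches depth exactly $d'-1$ rather than some smaller height. This rests on the tight inequality chain $m-2^{d'-1}<2^{d'-1}<k$, and forces separate attention to the boundary cases $k=1$ and $k=2^{d'}$, where Case~2 is vacuous but the closed-form identity above still holds.
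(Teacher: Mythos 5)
Your proposal is correct and follows essentially the same route as the paper's proof: decompose $n$ as $2^{d'}q+m$ (the paper writes $n=n_1\cdot 2^{d'}+n_2$), observe that the low mountains hold exactly $m$ leaves, split on whether $m\geq k$ to get depth $d'-1$ versus depth $d'+\nu_2(q)$, and average the latter via Lemma~\ref{lem:nu}. Your Case~2 justification (that the non-tallest low mountains jointly hold $m-2^{d'-1}<k$ leaves) is a slightly more explicit version of the paper's terser argument, but the decomposition, case split, densities, and final computation are identical.
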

\begin{proof}
Fix $k\geq 1$ and let $d'=\lceil \log_2 k \rceil$. 
We decompose $n$ as $n=n_1 \cdot 2^{d'}+n_2$, consider $n_1$ and $n_2$ as independent random integers sampled uniformly from $[1,N]$ and $[0, 2^{d'}-1]$ respectively, and then we let $N\rightarrow \infty$. 
If $M$ is the mountain in $\UMMR$ that contains the $k$-th newest leaf, its height must be at least $d'-1$, because mountains of height at most $d'-2$ contain at most $\sum_{i=0}^{d'-2} 2^i = 2^{d'-1}-1<k$ leaves. 
In fact, the number of leaves in mountains of height at most $d'-1$ is exactly $n_2$, so we need to consider whether or not $n_2\geq k$.  

Case 1: $k\leq n_2< 2^{d'}$ (probability $1-\frac{k}{2^{d'}}$). 
In this case, the height of $M$ is $d'-1$. 

Case 2: $0\leq n_2 < k$ (pr. $\frac{k}{2^{d'}}$). 
$M$~is the smallest existing mountain of height at least $d'$. 
Its height is exactly $\nu_2(n_1\cdot 2^{d'})=d'+\nu_2(n_1)$, and by Lemma~\ref{lem:nu}, its amortized value is $d'+1$. 

The amortized proof size $\bar{\sigma}_{\UMMR}(k)$ is the expected height of $M$, so 
$$\bar{\sigma}_{\UMMR}(k)=\left(1-\frac{k}{2^{d'}}\right)(d'-1)+\frac{k}{2^{d'}}(d'+1) 
= d'+\frac{2k}{2^{d'}}-1,$$ 
as claimed.
\end{proof}

\begin{corollary}\label{cor:aFMMR}
In $\FMMR$, for any $k\geq 1$, if $d'=\lceil \log_2 k \rceil$ then 
\begin{align*}
\bar{r}_{\FMMR}(k) &= \frac{1}{2}d'+\frac{k}{2^{d'}} & \text{and} \\
\bar{\sigma}_{\FMMR}(k) &= \frac{3}{2}d'+\frac{3k}{2^{d'}}-1 = \frac{3}{2} \log_2 k + O(1). 
\end{align*}
\end{corollary}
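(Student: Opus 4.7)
The plan is to leverage equation~(\ref{eq:am-MMR}), namely $\bar{\sigma}_{\FMMR}(k) = \bar{\sigma}_{\UMMR}(k) + \bar{r}_{\FMMR}(k)$, together with Lemma~\ref{lem:aUMMR}: once I prove the first identity $\bar{r}_{\FMMR}(k) = \tfrac{1}{2}d' + k/2^{d'}$, the formula and asymptotics for $\bar{\sigma}_{\FMMR}(k)$ follow by arithmetic, since $d' = \log_2 k + O(1)$ and $k/2^{d'} \in (1/2, 1]$. I would also verify in passing the claim $\bar{r}'_{\FMMR}(k) = 0$ that underpinned~(\ref{eq:am-MMR}).

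I would reuse the decomposition $n = n_1 \cdot 2^{d'} + n_2$ with $n_2$ uniform on $[0, 2^{d'})$, exactly as in the proof of Lemma~\ref{lem:aUMMR}. The key structural observation is that, after absorbing the leftmost-tree exception into the $-r'_{\FMMR}$ correction, $r_{\FMMR}(k, n)$ equals the position $j$ (counted from the right) of the mountain $M$ containing the $k$-th newest leaf in the U-MMR mountain list. Writing $\ell(m)$ for the Hamming weight of $m$, I would then split into the two cases of Lemma~\ref{lem:aUMMR}. In Case~1 ($n_2 \geq k$), that lemma places $M$ at height $d'-1$; since $d' = \lceil \log_2 k \rceil$ forces $k > 2^{d'-1}$ and $n_2 < 2^{d'}$, the constraint $n_2 \geq k$ makes bit $d'-1$ of $n_2$ its top set bit, so $M$ is the leftmost of the ``lower group'' (mountains of height $<d'$), and its position from the right equals $\ell(n_2)$. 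In Case~2 ($n_2 < k$), $M$ lies in the ``upper group'' (height $\geq d'$), to the left of all $\ell(n_2)$ lower-group mountains, so $j = \ell(n_2) + 1$. Combining, the contribution to $\sigma_{\FMMR} - \sigma_{\UMMR}$ is $\ell(n_2) + \mathbbm{1}[\text{Case 2}]$ for every $n$ outside a vanishing set.

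Averaging over $n_2$ uniform on $[0, 2^{d'})$ is now routine: its $d'$ bits are independent Bernoulli$(1/2)$ variables, so $\mathbb{E}[\ell(n_2)] = d'/2$, while $\Pr[\text{Case 2}] = k/2^{d'}$. Hence $\bar{r}_{\FMMR}(k) = d'/2 + k/2^{d'}$. For the auxiliary claim $\bar{r}'_{\FMMR}(k) = 0$: the event $r'_{\FMMR} = 1$ requires the total mountain count $t = \lfloor \log_2(n+1) \rfloor$ to equal $j \leq d' + 1$; since $j$ is bounded in $k$ while $t \to \infty$ with $n$, this holds for only a vanishing fraction of $n \leq N$ as $N \to \infty$. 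Substituting both quantities into~(\ref{eq:am-MMR}) together with Lemma~\ref{lem:aUMMR} yields $\bar{\sigma}_{\FMMR}(k) = \tfrac{3}{2}d' + 3k/2^{d'} - 1 = \tfrac{3}{2}\log_2 k + O(1)$.

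The main obstacle I anticipate is Case~1: I must argue carefully that bit $d'-1$ of $n_2$ is necessarily the top set bit, which is a delicate consequence of the normalization $d' = \lceil \log_2 k \rceil$ (so that $k > 2^{d'-1}$) combined with $n_2 < 2^{d'}$. Once this bit-counting step is nailed down, everything else reduces to standard expectations on uniform bits.
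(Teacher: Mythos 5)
Your proposal is correct and follows essentially the same route as the paper: the same decomposition $n = n_1\cdot 2^{d'} + n_2$, the same two cases inherited from Lemma~\ref{lem:aUMMR}, and the same identification of the number of range-node ancestors of $M$ with the Hamming weight $\ell(n_2)$ (plus $1$ in Case~2), averaged via $\E[\ell(n_2)]=d'/2$ and $\Pr[\text{Case 2}]=k/2^{d'}$, then combined with identity~\eqref{eq:am-MMR}. The only blemish is in your side-verification of $\bar{r}'_{\FMMR}(k)=0$ (which the paper merely asserts): the U-MMR mountain count is the Hamming weight of $n$, not $\lfloor\log_2(n+1)\rfloor$, and it does not tend to infinity with $n$ --- the correct (and still easy) argument is that the set of $n\leq N$ on which the $k$-th newest leaf lies in the leftmost mountain has vanishing density.
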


\begin{proof}
In the previous proof, if $t$ is the number of ``ones'' in the binary representation of $n_2$, then mountain $M$ has $t$ range node ancestors in Case~1, and $t+1$ in Case~2. 
Hence, 
$$\bar{r}_{\FMMR}(k) = \E_{n_2}[t] + 1\cdot \Pr[\text{Case 2}] = \frac{1}{2}d'+\frac{k}{2^{d'}},$$
where $\E_{n_2}[t]=d'/2$ because the binary representation of $n_2$ can be modeled as $d'$ independent fair coin tosses. 
The second claim follows from identity~\eqref{eq:am-MMR}. 
\end{proof}

\subsection{U-MMB and F-MMB}

For $1\leq k\leq n$, consider the $\FMMB$ structure when $|X|=n$, let $r_F (k,n)$ be the number of ancestors of the $k$-th newest leaf that are range nodes, and let $r'_F (k,n)$ be $1$ if this leaf is contained in the leftmost tree, and $0$ otherwise. 
As explained in Section~\ref{s:am-MMR}, it follows that 
$$\sigma_{\FMMB}(k,n) = \sigma_{\UMMB}(k,n) + r_F(k,n)-r'_F(k,n).$$
We define the amortized values of these functions with formulas similar to~\eqref{eq:amortized}. It is easy to see that $\bar{r'}_F(k)=0$, so 
\begin{equation}\label{eq:am-FMMB}
\bar{\sigma}_{\FMMB}(k)=\bar{\sigma}_{\UMMB}(k)+\bar{r}_F(k).
\end{equation}

We will show that the function $r_F(k,n)$ above is \emph{periodic}. 
We say that a function $f(k,n)$ is periodic, with period $N=N(k)$, if $f(k,n)=f(k,n+N)$ for any $1\leq k\leq n$.%
\footnote{We do not include any sense of minimality in our definition of a period. In the rest of the section we establish periods for many auxiliary functions, without any claims that the given periods are minimal.}
Clearly, its amortized value matches its average value within any interval of period size, i.e., 
\begin{equation}\label{eq:period}
  \frac{1}{N} \sum_{n=m}^{m+N-1} f(n,k) = \bar{f}(k), \quad \text{for any } m\geq k.  
\end{equation}

We will use the notation and results from Lemma~\ref{lem:Sn}: 
we let $(\cdots b_2b_1b_0)$ be the binary representation of $n+1$ where $n=|X|$, we enumerate the mountain positions in $\UMMB$ starting from zero from the right, and we recall that the mountain at position $i$ has a height $s_i = i +b_i$.  
We begin with a needed technical observation. 

\begin{lemma}\label{lem:MM'}
For any $1\leq k< n$, if U-MMB has $n$ leaves, $d=\lfloor \log_2 (k+1) \rfloor$, and $M$ and $M'$ are the mountains that contain the $k$-th and $(k+1)$-th newest leaves, respectively, then 
\begin{itemize}
    \item $M$ is at position $d-1$ if $k+1-2^d \leq (n+1 \mod 2^d)$, and position $d$ otherwise; 
    \item $M\neq M'$ if and only if $n \equiv k \mod 2^d $; and furthermore 
    \item there is a range split between $M$ and $M'$ if and only if 
    $$\begin{cases}
        n \equiv k \mod 2^{d+1} & \text{for } 2^d\leq k+1< \frac{3}{2} 2^d \\
        n \equiv k \mod 2^{d+2} & \text{for } \frac{3}{2} 2^d \leq k+1 < 2^{d+1}.
    \end{cases}$$
\end{itemize}
\end{lemma}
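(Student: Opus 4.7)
The plan is to translate all three claims into arithmetic on the binary expansion of $n+1$, using Lemma~\ref{lem:Sn}. Writing $n+1 = (b_t \cdots b_1 b_0)_2$, by that lemma the mountain at position $j$ contains $2^{s_j} = 2^{j+b_j}$ leaves, so the number of leaves in the $i+1$ rightmost mountains is
\[
N_i := \sum_{j=0}^{i} 2^{j+b_j} = (2^{i+1}-1) + \sum_{j=0}^{i} b_j 2^j = 2^{i+1} - 1 + ((n+1) \bmod 2^{i+1}).
\]
The mountain $M$ holding the $k$-th newest leaf then sits at the smallest position $j$ with $N_j \geq k$.

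For claim~1, the assumption $2^d \leq k+1 < 2^{d+1}$ gives $N_{d-2} \leq 2^d - 2 < k$ and $N_d \geq 2^{d+1} - 1 > k$, so $M$ is at position $d-1$ or $d$; comparing $N_{d-1} = 2^d - 1 + ((n+1) \bmod 2^d)$ against $k$ yields exactly the stated criterion. Claim~2 then follows from the observation that $M \neq M'$ iff the $k$-th newest leaf is the oldest leaf of $M$, iff $N_j = k$ where $j$ is $M$'s position; since $N_d > k$, this forces $j = d-1$ and $(n+1) \bmod 2^d = k+1-2^d$, which is precisely $n \equiv k \pmod{2^d}$.

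Claim~3 is where the real work lies. Under $M \neq M'$, $M$ is at position $d-1$ and $M'$ at position $d$, immediately to $M$'s left. The range-split rule of Section~\ref{s:MMB-append} separates them iff either (a) $s_d - s_{d-1} = 2$, equivalently $b_d - b_{d-1} = 1$; or (b) $M'$ forms a mergeable pair with the mountain at position $d+1$, equivalently $s_{d+1} = s_d$, i.e., $b_d - b_{d+1} = 1$. Writing $k+1 = 2^d + a$ with $0 \leq a < 2^d$, claim~2 pins $(n+1) \bmod 2^d = a$, so $b_{d-1}$ is forced to equal the bit of $a$ at position $d-1$. When $k+1 < \frac{3}{2} 2^d$ (so $a < 2^{d-1}$) we get $b_{d-1} = 0$ automatically; (a) then reduces to $b_d = 1$, and (b) is strictly stronger, so the union ``(a) or (b)'' is just $b_d = 1$, which combined with $(n+1) \bmod 2^d = a$ yields $n \equiv k \pmod{2^{d+1}}$. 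When $k+1 \geq \frac{3}{2} 2^d$, the forced value $b_{d-1} = 1$ rules out (a), leaving only (b), which forces $b_d = 1, b_{d+1} = 0$ and yields $n \equiv k \pmod{2^{d+2}}$.

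The main obstacle, to my eye, is not the arithmetic but being careful with the orientation conventions of the range-split rule --~which element of the pair is called ``$M$'' there, and whose left neighbor is meant~-- and handling the edge case where $M'$ is itself the leftmost mountain, so the ``mountain to its left'' invoked by condition (b) does not exist. One must check that in that situation the leading-bit constraint forces $b_{d+1} = 1$, so (b) fails for the right reason and the stated modular formula remains accurate.
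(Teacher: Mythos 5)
Your proof is correct and follows essentially the same route as the paper's: both reduce the three claims to the binary expansion of $n+1$ via Lemma~\ref{lem:Sn}, locate $M$ by counting leaves in the rightmost mountains as $2^{d}-1+((n+1)\bmod 2^{d})$, and case-split on whether $k+1-2^{d}$ lies below or above $2^{d-1}$ to pin down bits $b_{d-1}$, $b_{d}$, $b_{d+1}$. Your explicit treatment of the edge case where $M'$ is the leftmost mountain (so condition (b) is vacuous but the leading bit $b_{d+1}=1$ makes the formula fail anyway) is a small point the paper leaves implicit, but it does not change the argument.
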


\begin{proof}
Mountains up to position $d$ have at least $\sum_{i=0}^d 2^{s_i} \geq \sum_{i=0}^d 2^{i}=2^{d+1}-1\geq k+1$ leaves, while mountains up to position $d-2$ have at most $\sum_{i=0}^{d-2} 2^{s_i} \leq \sum_{i=0}^{d-2} 2^{i+1}=2^{d}-2< k$ leaves. 
This proves that both $M$ and $M'$ must be at positions either $d-1$ or $d$. 
In fact, the number of leaves in mountains up to position $d-1$ is 
$$\sum_{i=0}^{d-1} 2^{s_i} = \sum_{i=0}^{d-1} 2^{i+b_i} = (2^d-1)+\sum_{i=0}^{d-1} 2^{b_i}=(2^d-1)+n', \quad \text{where } n':=(n+1\mod 2^{d}).$$ 
Hence, $M$ is at position $d-1$ if and only if this number of leaves is $2^d-1+n'\geq k$, and mountains $M$ and $M'$ are different (at positions $d-1$ and $d$, respectively) if and only if this inequality is tight, which implies $n \equiv k \mod 2^d$. 
This proves the first two claims. 

To prove the last claim, let $(\cdots b'_2b'_1b'_0)$ be the binary representation of $k+1$ and assume that $n+1 \equiv k+1 \mod 2^d$, so $M$ and $M'$ are at positions $d-1$ and $d$, respectively, by the second claim. 
Recall from the range splitting conditions in Section~\ref{s:MMB-append} that there is a split between $M$ and $M'$ if and only if $(b_{d+1} b_d b_{d-1})\in\{01|0, 01|1, 11|0\}$, i.e., these three bits of $n+1$ fall into one of these three configurations. 
We consider two cases: 

Case 1: $2^d\leq k+1< \frac{3}{2}2^d$. 
We have that $(b'_d b'_{d-1})=10$, so $b_{d-1}=b'_{d-1}=0$ by our assumption. 
Conditioned to this assumption, there is a range split if and only if $(b_db_{d-1})=1|0$, implying an additional match on bits $b_d=b'_d=1$, so $n+1\equiv k+1 \mod 2^{d+1}$.

Case 2: $\frac{3}{2}2^d \leq k+1 < 2^{d+1}$. 
Now we have $(b'_{d+1}b'_d b'_{d-1})=011$, so $b_{d-1}=b'_{d-1}=1$ by our assumption.  
Conditioned to it, there is a range split if and only if $(b_{d+1} b_d b_{d-1})=01|1$, implying additional matches on bits $b_{d+1}=b'_{d+1}=0$ and $b_d=b'_d=1$, so $n+1\equiv k+1 \mod 2^{d+2}$.
\end{proof}

\begin{lemma}\label{lem:aUMMB}
In U-MMB, for any $k\geq 1$, if $d=\lfloor \log_2 (k+1) \rfloor$ the amortized proof size is
\begin{align*}
\bar{\sigma}_{\UMMB}(k)&= \begin{cases}
    d + \frac{3(k+1)}{2^{d+1}}-2 & \text{for} \quad 2^d\leq k+1< \frac{3}{2} 2^d \\
    d+\frac{k+1}{2^{d+1}}-\frac{1}{2} & \text{for} \quad \frac{3}{2} 2^d\leq k+1< 2^{d+1} .  
\end{cases} 
\end{align*} 
\end{lemma}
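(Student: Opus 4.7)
The plan is to observe that in U-MMB the membership proof $\pi_{x_{n-k+1}\in X_n}$ consists of the siblings along the path from the $k$-th newest leaf up to its mountain peak, so $\sigma_{\UMMB}(k,n)$ equals the height of the mountain $M$ containing that leaf. My goal is therefore to compute the average value of $\mathrm{height}(M)$ as $n$ sweeps a period interval. Using Lemma~\ref{lem:Sn} and Lemma~\ref{lem:MM'}, this height depends only on the lowest $d+1$ bits of $n+1$, so the function is periodic with period $2^{d+1}$, and by \eqref{eq:period} the amortized value equals its uniform average over one such period.

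Concretely, I would write $n' := (n+1)\bmod 2^d$ and $b_d := \lfloor (n+1)/2^d\rfloor\bmod 2$, and note that as $n$ ranges over a period, $n'$ is uniform on $\{0,\ldots,2^d-1\}$ and $b_d$ is independent and uniform on $\{0,1\}$; the bit $b_{d-1}$ is the top bit of $n'$. Setting $p_0 := k+1-2^d$ (so $0\le p_0<2^d$), Lemma~\ref{lem:MM'} tells us that $M$ sits at position $d-1$ with height $s_{d-1}=d-1+b_{d-1}$ when $n'\ge p_0$, and at position $d$ with height $s_d=d+b_d$ when $n'<p_0$. The amortized proof size is then just $\E[\mathrm{height}(M)]$ under this uniform distribution.

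I would split on the two cases in the statement according to whether $p_0<2^{d-1}$ (i.e.\ $k+1<\tfrac{3}{2}2^d$) or $p_0\ge 2^{d-1}$ (i.e.\ $k+1\ge\tfrac{3}{2}2^d$), as this is exactly the threshold at which the conditional behavior of $b_{d-1}$ changes. In the first case, the event $\{b_{d-1}=1\}$ is a subset of $\{n'\ge p_0\}$, so position $d-1$ with height $d$ occurs with probability $1/2$, position $d-1$ with height $d-1$ occurs with probability $(2^{d-1}-p_0)/2^d$, and position $d$ occurs with probability $p_0/2^d$ contributing expected height $d+\tfrac{1}{2}$. In the second case the conditioning $n'\ge p_0$ forces $b_{d-1}=1$, so position $d-1$ contributes $d\cdot(2^d-p_0)/2^d$, while position $d$ again contributes $(d+\tfrac{1}{2})\cdot p_0/2^d$. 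Substituting $p_0/2^d=(k+1)/2^d-1$ and simplifying yields the two closed forms in the lemma.

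The hard part is not the computation itself but setting up the case split so that the conditional distribution of the top bit $b_{d-1}$ of $n'$ is handled cleanly in each regime; once that subcasing mirrors the threshold $p_0\lessgtr 2^{d-1}$ in the statement, the remaining work is a short linear combination and algebraic rearrangement, and the periodicity argument reduces the amortized average to a finite expectation without needing the auxiliary Lemma~\ref{lem:nu} (which is only required for the U-MMR calculation).
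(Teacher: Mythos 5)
Your proposal is correct and follows essentially the same route as the paper: both reduce $\bar{\sigma}_{\UMMB}(k)$ to the expected height of the mountain containing the $k$-th newest leaf, use Lemma~\ref{lem:MM'} to place it at position $d-1$ or $d$ according to whether $n'=(n+1)\bmod 2^d$ is at least $k+1-2^d$, treat the bits $b_{d-1}$ and $b_d$ as fair coins via Lemma~\ref{lem:Sn}, and split on the threshold $k+1\lessgtr\frac{3}{2}2^d$; your three events are exactly the paper's Cases B.2, B.1 and A, and the arithmetic checks out. Your explicit appeal to periodicity with period $2^{d+1}$ and equation~\eqref{eq:period} is a slightly more careful justification of the finite averaging than the paper's informal ``independent coin tosses'' phrasing, but it is not a different argument.
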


\begin{proof}
Let $d=\lfloor \log_2 (k+1) \rfloor$. 
We consider bits $b_0$ through $b_{d}$ in the binary representation of $n+1$ as $d+1$ independent fair coin tosses. 
We will compute the expected height of the mountain $M$ that contains the $k$-th most recent leaf. 
By the previous lemma, $M$ is at position $d-1$ or $d$, depending on whether $k+1-2^d \leq n'$, where $n':=n+1 \mod 2^d$: 

Case A: $0\leq n'< k+1-2^{d}$ (probability $\frac{k+1}{2^{d}}-1$). 
Mountain $M$ is at position $d$. 
Its height, $s_{d}=d+b_{d}$, is either $d$ or $d+1$, depending on bit $b_{d}$. 
As this bit corresponds to a fair coin toss, the expected height of $M$ is $d+1/2$. 

Case B: $k+1-2^{d} \leq n'< 2^{d}$ (prob. $2-\frac{k+1}{2^{d}}$). 
$M$~is at position $d-1$. 
Its height, $s_{d-1}=d-1+b_{d-1}$, is either $d-1$ or $d$, depending on bit $b_{d-1}$, i.e., whether or not $n'< 2^{d-1}$:  
\begin{itemize}
    \item Case B.1: $k+1-2^{d}\leq n'< 2^{d-1}$ (prob. $\max\{0, \frac{3}{2}-\frac{k+1}{2^{d}}\}$). 
Its height is $s_{d-1}=d-1$. 
    \item Case B.2: $\max\{k+1-2^{d}, 2^{d-1}\}\leq n'< 2^{d}$ (prob. $\min\{\frac{1}{2}, 2-\frac{k+1}{2^{d}}\}$). 
$s_{d-1}=d$.
\end{itemize}

In order to aggregate these results, we consider two cases for the value of $k$.  
If $k+1< \frac{3}{2}2^d$, the expected height of $M$ (by Cases A, B.1 and B.2, respectively) is 
\begin{align*}
    \left(\frac{k+1}{2^{d}} - 1\right)\left(d+\frac{1}{2}\right)+\left( \frac{3}{2}-\frac{k+1}{2^{d}}\right)(d-1)+\frac{1}{2}\cdot d  
  = d + \frac{3(k+1)}{2^{d+1}} - 2,
\end{align*}
while if $k+1\geq \frac{3}{2}2^d$, Case B.1 has probability zero, and the expected height of $M$ is 
\begin{align*}
    \left(\frac{k+1}{2^{d}} - 1\right)\left(d+\frac{1}{2}\right)+\left( 2-\frac{k+1}{2^{d}} \right) \cdot d 
  = d + \frac{k+1}{2^{d+1}} -\frac{1}{2},
\end{align*}
as claimed. 
\end{proof}

\begin{corollary}
In $\FMMB$, for any $k\geq 1$, if $d=\lfloor \log_2 (k+1) \rfloor$ then 
\begin{align*}
\bar{r}_F(k) &= d+\frac{k+1}{2^{d}}-1, \quad \text{and} \\
\bar{\sigma}_{\FMMB}(k) &= \begin{cases}
    2d + \frac{5(k+1)}{2^{d+1}} - 3 & \text{for} \quad 2^d\leq k+1< \frac{3}{2}2^d \\
    2d+\frac{3(k+1)}{2^{d+1}}-\frac{3}{2} & \text{for} \quad \frac{3}{2} 2^d\leq k+1< 2^{d+1}, 
\end{cases} 
\end{align*}
\end{corollary}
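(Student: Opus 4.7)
The plan is to mirror the derivation in Corollary~\ref{cor:aFMMR}: use the identity in Equation~\eqref{eq:am-FMMB} together with Lemma~\ref{lem:aUMMB} for $\bar{\sigma}_{\UMMB}(k)$, so that the only new quantity to establish is the claimed closed form for $\bar{r}_F(k)$. The additional ingredient is Lemma~\ref{lem:MM'}, which pins down the right-position of the mountain $M$ containing the $k$-th newest leaf to one of two values, namely $d-1$ or $d$, where $d=\lfloor \log_2(k+1)\rfloor$.

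First I would observe that in forward bagging with $t$ mountains there is exactly one range node per mountain, arranged in a left-to-right chain with the root at the right end. Consequently, a leaf contained in the mountain at right-position $p_R$ has exactly $p_R+1$ range-node ancestors: the one directly above its mountain, plus one per mountain to its right along the chain path. Invoking Lemma~\ref{lem:MM'} this gives $r_F(k,n)\in\{d,d+1\}$, with $r_F(k,n)=d+1$ precisely when $(n+1)\bmod 2^d < k+1-2^d$. Since $r_F(k,n)$ depends on $n$ only through this residue class, it is periodic in $n$ with period dividing $2^d$, so its amortized value equals its average over a single period via Equation~\eqref{eq:period}. Modelling $(n+1)\bmod 2^d$ as uniform on $\{0,\ldots,2^d-1\}$, the event $\{r_F=d+1\}$ has probability $(k+1-2^d)/2^d$, whence
\begin{equation*}
\bar{r}_F(k) \;=\; d\cdot\left(1 - \frac{k+1-2^d}{2^d}\right) + (d+1)\cdot \frac{k+1-2^d}{2^d} \;=\; d + \frac{k+1}{2^d} - 1,
\end{equation*}
yielding the first claimed formula. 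For the ingredient $\bar{r'}_F(k)=0$ implicit in Equation~\eqref{eq:am-FMMB}, I would note that since $p_R\leq d$, the left-position of $M$ is at least $t-1-d$, which is positive for all $n$ with $t>d+1$; hence $r'_F(k,n)$ vanishes for all but finitely many $n$ and amortizes to zero.

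To finish, I would substitute $\bar{r}_F(k)$ into Equation~\eqref{eq:am-FMMB} alongside the two piecewise expressions for $\bar{\sigma}_{\UMMB}(k)$ from Lemma~\ref{lem:aUMMB}, and verify by routine arithmetic that the sums simplify to $2d+5(k+1)/2^{d+1}-3$ on the range $2^d\leq k+1 < \frac{3}{2}\cdot 2^d$ and to $2d+3(k+1)/2^{d+1}-\frac{3}{2}$ on the range $\frac{3}{2}\cdot 2^d\leq k+1<2^{d+1}$. The only conceptually delicate step is the combinatorial observation underlying $r_F(k,n)=p_R+1$: although the top-layer chain lengthens indefinitely with $n$, the $k$-th newest leaf always sits near the right end of the mountain list by Lemma~\ref{lem:MM'}, so its range-node ancestor count remains bounded by $d+1$; this is precisely what keeps $\bar{r}_F(k)$ finite and makes the amortized bound on $\bar{\sigma}_{\FMMB}(k)$ well-defined.
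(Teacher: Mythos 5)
Your proposal is correct and follows essentially the same route as the paper: both identify the number of range-node ancestors as one plus the right-position of the mountain containing the leaf, read off the two possible positions ($d-1$ or $d$) and their probabilities from Lemma~\ref{lem:MM'} (the paper does this via Cases A and B in the proof of Lemma~\ref{lem:aUMMB}, which are the same events), average to get $\bar{r}_F(k)=d+\frac{k+1}{2^d}-1$, and substitute into Equation~\eqref{eq:am-FMMB}. Your explicit justifications of the periodicity of $r_F(k,\cdot)$ and of $\bar{r'}_F(k)=0$ are details the paper leaves implicit, but they do not change the argument.
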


\begin{proof} 
From the previous proof, we see that $\bar{r}_F(k)$ is one unit more than the expected position of $M$ in the mountain list (because we numbered positions from zero). 
Hence, it is $d+1$ in Case A and $d$ in Case B, with an average of 
\begin{align*}
\bar{r}_{F}(k) &= \left(\frac{k+1}{2^{d}} - 1\right)(d+1) + \left(2-\frac{k+1}{2^{d}} \right) \cdot d = d+ \frac{k+1}{2^d}-1, 
\end{align*} 
as claimed. The second claim follows from identity~\eqref{eq:am-FMMB}. 
\end{proof}

We remark that $\UMMB$ beats $\UMMR$ in terms of amortized proof sizes: 
see Figure~\ref{fig:amortized}, where we plot in orange and green the functions $\bar{\sigma}_{\UMMB}(k)$ and $\bar{\sigma}_{\UMMR}(k)$, from Lemmas \ref{lem:aUMMB} and \ref{lem:aUMMR}, respectively. 
In fact, the difference $\bar{\sigma}_{\UMMR}(k)-\bar{\sigma}_{\UMMB}(k)$ is at least $1/2$ for $k\geq 1$, and at least $1$ for $k\geq 3$. 
In Appendix~\ref{s:proofs} we provide a proof of the following observation. 

\begin{lemma}\label{lem:diffU}
For any $k\geq 1$, we have $\bar{\sigma}_{\UMMR}(k)-\bar{\sigma}_{\UMMB}(k) \geq \frac{5}{4}-\frac{3}{2(k+1)}$.
\end{lemma}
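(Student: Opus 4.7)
The plan is to substitute the closed-form expressions from Lemmas~\ref{lem:aUMMR} and~\ref{lem:aUMMB} and verify the inequality by case analysis on $d=\lfloor\log_2(k+1)\rfloor$. Writing $d'=\lceil\log_2 k\rceil$, the formula $\bar{\sigma}_{\UMMR}(k)=d'+2k/2^{d'}-1$ happens to agree at each transition $k=2^{d'}$ whether one uses $d'$ or $d'+1$, so the analysis reduces to the two subcases of Lemma~\ref{lem:aUMMB} combined with whether $k\in\{2^d-1,2^d\}$ (yielding $d'=d$) or $k\geq 2^d+1$ (yielding $d'=d+1$); the low-$k$ exception $k=1$ falls into the first bucket thanks to this agreement.

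Starting with Case~II of Lemma~\ref{lem:aUMMB}, $\tfrac{3}{2}2^d\leq k+1<2^{d+1}$, a direct substitution gives the uniform expression
\[
\Delta(k):=\bar{\sigma}_{\UMMR}(k)-\bar{\sigma}_{\UMMB}(k)\;=\;\frac{k-1}{2^{d+1}}+\frac{1}{2}.
\]
The required inequality becomes $\tfrac{k-1}{2^{d+1}}+\tfrac{3}{2(k+1)}\geq \tfrac{3}{4}$; writing the left-hand side as a function of $y=k+1$, one checks that its derivative is nonnegative throughout $[\tfrac{3}{2}2^d,\,2^{d+1})$ for $d\geq 1$, so the minimum is attained at the left endpoint $y=\tfrac{3}{2}2^d$, where the value is exactly $3/4$. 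Thus the inequality holds throughout Case~II, and is tight at that endpoint.

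For Case~I of Lemma~\ref{lem:aUMMB}, $2^d\leq k+1<\tfrac{3}{2}2^d$, I would split into three sub-subcases: $k=2^d-1$ (including $k=1$, which gives equality with the bound); $k=2^d$ (arising only for $d\geq 2$); and $2^d+1\leq k\leq 3\cdot 2^{d-1}-2$ (arising only for $d\geq 3$). Substituting gives $\Delta=\tfrac{3}{2}-\tfrac{2}{k+1}$, $\Delta=\tfrac{3}{2}-\tfrac{3}{2k}$, and $\Delta=2-\tfrac{k+3}{2^{d+1}}$, respectively. The first two reduce to elementary algebraic checks; for the third, I would use $k+3\leq 3\cdot 2^{d-1}+1$ to bound $\tfrac{k+3}{2^{d+1}}\leq \tfrac{3}{4}+\tfrac{1}{2^{d+1}}$, after which the inequality reduces to $\tfrac{1}{2^{d+1}}\leq \tfrac{3}{2(k+1)}$, which holds throughout Case~I since $k+1<\tfrac{3}{2}2^d<3\cdot 2^d$.

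The main obstacle is organizational: $\bar{\sigma}_{\UMMR}$ and $\bar{\sigma}_{\UMMB}$ are parametrized by different logarithmic levels, and a naive case analysis on the relationship between $d$ and $d'$ has several small-$d$ or boundary corners (notably $k=1$, $k=2$, and the left endpoint of Case~II). The continuity observation for $\bar{\sigma}_{\UMMR}$ at transition points collapses most of these, leaving only a handful of genuinely distinct configurations, each resolved by trivial algebra once the correct substitutions have been made.
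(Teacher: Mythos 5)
Your proposal is correct and follows essentially the same route as the paper: both substitute the closed forms from Lemmas~\ref{lem:aUMMR} and~\ref{lem:aUMMB}, separate out the boundary cases $k\in\{2^d-1,2^d\}$ (where $d'=d$) from the generic case $d'=d+1$, and arrive at the identical expressions $2-\tfrac{k+3}{2^{d+1}}$ and $\tfrac12+\tfrac{k-1}{2^{d+1}}$ before closing with elementary estimates. Your transition-agreement observation for $\bar{\sigma}_{\UMMR}$ at powers of two and the endpoint-monotonicity argument in Case~II are merely tidier packagings of the same computation (and, as a side benefit, your inequality directions in the first generic sub-case are stated correctly, whereas the paper's displayed inequalities there are reversed by a typo).
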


\subsection{MMB}

Let $h$ be the $k$-th newest leaf in the $\MMB$ structure for some $1\leq k\leq n=|X|$. 
Let $r(k,n)$ and $b(k,n)$ be the number of ancestors of $h$ that are range nodes and belt nodes, respectively, let $r'(k,n)$ be $1$ if $h$ is contained in the leftmost mountain of its range, $0$ otherwise, and let $b'(k,n)$ be $1$ if $h$ is contained in the leftmost range, $0$ otherwise. 
Then, 
\begin{equation}\label{eq:kn-MMB}
\sigma_{\MMB}(k,n) = \sigma_{\UMMB}(k,n) + r(k,n)+b(k,n)-r'(k,n)-b'(k,n),     
\end{equation} 
because the indicator functions $r'(k,n)$ and $b'(k,n)$ encapsulate the conditions under which we can save a hash in the proof relative to the number of range-node and belt-node ancestors, respectively. 
We define amortized values with formulas similar to~\eqref{eq:amortized}. 
Hence, 
\begin{equation}\label{eq:am-MMB}
\bar{\sigma}_{\MMB}(k)=\bar{\sigma}_{\UMMB}(k)+\bar{r}(k)+\bar{b}(k)-\bar{r'}(k),
\end{equation}
since clearly $\bar{b'}(k)=0$.
Again, we will use the notation and results from Lemma~\ref{lem:Sn}. 
We begin by establishing the periodicity of some of the functions above. 

\begin{lemma}\label{lem:MMB-period}
For any $k\geq 1$, if $d=\lfloor \log_2 (k+1) \rfloor$ then both $r(k+1,n)$ and $b(k+1,n)$ are periodic with period $2^{d+1}$ if $2^d\leq k+1< \frac{3}{2}2^d$, and $2^{d+2}$ if $\frac{3}{2}2^d\leq k+1<2^{d+1}$. 
Furthermore, $r'(k+1,n)$ is periodic with period $2^{d+3}$. 
\end{lemma}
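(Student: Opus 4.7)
The plan is to show that each of the three functions depends only on a bounded prefix of the low-order bits of $n+1$, which then immediately yields the claimed periodicities. Throughout I write $n_d:=(n+1)\bmod 2^d$, and recall from Section~\ref{s:MMB-append} that a split between positions $i$ and $i+1$ depends only on the three bits $b_i,b_{i+1},b_{i+2}$.

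The first step is to locate $M'$, the mountain containing the $(k+1)$-th newest leaf. A direct count analogous to Lemma~\ref{lem:MM'} shows that $M'$ sits at position $p\in\{d-1,d\}$, with $p=d-1$ iff $n_d\geq k+2-2^d$; in particular, $p$ is determined by bits $b_0,\dots,b_{d-1}$ alone. Next I would unpack the forward-bagging geometry into closed forms. Writing $\mathrm{low}$ for the leftmost position of the range containing $M'$, the chain structure of forward bagging gives $r(k+1,n)=p-\mathrm{low}+1$. Similarly, $b(k+1,n)$ equals the position of $M'$'s range counted from the right among all ranges meeting $\{0,\dots,p\}$, which is $1+s$, where $s$ is the number of splits between consecutive elements of $\{0,\dots,p\}$. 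Finally, $r'(k+1,n)=1$ iff there is a split between positions $p$ and $p+1$; with the convention $b_i=0$ for $i$ above the leading bit position $t$, this formula also handles the boundary case $p=t-1$ correctly.

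Combining these formulas with the split characterization, $\mathrm{low}$ and $s$ depend only on bits up to $b_{p+1}$, and the $r'$-split depends on bits up to $b_{p+2}$. Substituting $p\in\{d-1,d\}$ yields, in the worst case, periods $2^{d+2}$ for $r$ and $b$ and $2^{d+3}$ for $r'$. Boundary cases of small $n$ (with $t\leq d+2$) are handled uniformly because the convention $b_i=0$ for $i>t$ is itself preserved under the shift $n\mapsto n+N$ whenever $N$ is a multiple of $2^{d+3}$.

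The main step, and where the real work lies, is the refinement from $2^{d+2}$ to $2^{d+1}$ for $r$ and $b$ in the regime $2^d\leq k+1<\tfrac{3}{2}2^d$. The obstacle is to rule out any dependence on $b_{d+1}$ when $p=d$. The key observation is that, in this regime, $p=d$ forces $n_d<k+2-2^d\leq 2^{d-1}$, which in turn forces $b_{d-1}=0$; inspection of the two split conditions then shows that the split between positions $d-1$ and $d$~--~the only split in $\{0,\dots,d\}$ whose existence could possibly depend on $b_{d+1}$~--~reduces to the single condition $b_d=1$, independent of $b_{d+1}$. In the complementary regime $\tfrac{3}{2}2^d\leq k+1<2^{d+1}$, the bit $b_{d-1}$ may be $1$ and $b_{d+1}$ is genuinely required, so the period stays at $2^{d+2}$.
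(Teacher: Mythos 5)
Your argument is correct and follows essentially the same route as the paper's proof: place $M'$ at position $d-1$ or $d$ via (the analogue of) Lemma~\ref{lem:MM'}, note that $r$, $b$ and $r'$ are functions of that position and of the range splits adjacent to and to the right of $M'$, and bound the highest-order bit of $n+1$ on which these splits can depend. If anything your write-up is more complete, since the paper merely asserts that the period drops to $2^{d+1}$ when $2^d\leq k+1<\frac{3}{2}2^d$ while you supply the decisive observation ($p=d$ forces $b_{d-1}=0$, killing the dependence of the split between positions $d-1$ and $d$ on $b_{d+1}$); the one cosmetic slip is calling $\mathrm{low}$ the ``leftmost'' position of $M'$'s range, when your own formula $r(k+1,n)=p-\mathrm{low}+1$ requires it to be the rightmost (lowest-index) one.
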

\begin{proof}
Let $d=\lfloor \log_2 (k+1)\rfloor$ and let $M'$ be the mountain that contains the $(k+1)$-th newest leaf. 
Both $r(k+1, n)$ and $b(k+1,n)$, as functions of $n$, depend only on the number of mountains and the positions of range splits to the right of $M'$. 
By Lemmas \ref{lem:Sn} and \ref{lem:MM'}, this information can be decoded from either $(n+1 \mod 2^{d+1})$ or $(n+1\mod 2^{d+2})$, depending on whether $k+1<\frac{3}{2}2^d$, implying the claimed periodicity. 

To prove the second claim, let $k'=2^{d+2}-1$, let $M''$ be the mountain that contains the $(k'+1)$-th leaf (assuming $n$ is large enough), and apply Lemma~\ref{lem:MM'} to them, with $d'=\log_2(k'+1)=d+2$. 
By the Lemma's first claim, $M''$ is always at position $d'-1=d+1$, so it is to the left of $M'$ (which is at position $d$ or $d-1$), and by the third claim, we can decode from $(n+1 \mod 2^{d+3})$ whether there are any range splits between $M''$ and $M'$, and thus whether $M'$ is the leftmost mountain in its range. Hence the claimed periodicity. 
\end{proof}

Next, we make some observations about $r(n,n)$, i.e., the number of mountains in the leftmost range of $\MMB$, as well as conditions for $r'(k,n)$ and $b'(k,n)$ to be $1$.  
The proofs of the next two lemmas are delayed to Appendix~\ref{s:proofs}; see Figure~\ref{fig:leftmost}.

\begin{lemma}\label{lem:rnn}
For any integer $t\geq 0$, we have that: 
\begin{align*}
    \text{for } 4\cdot 2^t &\leq n+1 < 5\cdot 2^t, \quad r(n,n)\geq 2, \quad \text{with sum } \sum_{n=4\cdot 2^t-1}^{5\cdot 2^t-2} r(n,n)=4\cdot 2^t-2, \\
    \text{for } 5\cdot 2^t &\leq n+1 < 6\cdot 2^t, \quad r(n,n)= 2, \\
    \text{for } 6\cdot 2^t &\leq n+1 < 7\cdot 2^t, \quad r(n,n)= 1, \quad \text{and} \\
    \text{for } 7\cdot 2^t &\leq n+1 < 8\cdot 2^t, \quad r(n,n)\geq 2, \quad \text{with sum } \sum_{n=7\cdot 2^t-1}^{8\cdot 2^t-2} r(n,n)=3\cdot 2^t-1.
\end{align*}
\end{lemma}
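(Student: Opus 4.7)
The plan is to translate both $r(n,n)$ and the range-split conditions of Section~\ref{s:MMB-append} into a direct bit-pattern analysis of $n+1$, and then to scan from the top of the mountain list to locate the leftmost range split.

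First I would identify $r(n,n)$ with the number $m$ of mountains in the leftmost range of $\MMB$: the oldest leaf sits in the leftmost mountain of the leftmost range, and since forward-bagging attaches a chain of $m$ range nodes above this mountain (by the ``$i$ range nodes per $i$ mountains'' convention of Section~\ref{s:MMB-append}), the leaf has exactly $m$ range-node ancestors. Using Lemma~\ref{lem:Sn}, writing $n+1=(b_{t+2}\,b_{t+1}\cdots b_0)_2$, the two range-split conditions translate into a single bit-window criterion: a split between mountain positions $i$ (left) and $i-1$ (right) occurs if and only if the 3-bit window $(b_{i+1}, b_i, b_{i-1})$ lies in $\{010, 011, 110\}$, a routine check of the eight possibilities (``height difference $2$'' gives $(b_i,b_{i-1})=(1,0)$, and ``mergeable pair to the left'' gives $(b_{i+1},b_i)=(0,1)$). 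Across all four cases we have $4\cdot 2^t \leq n+1 < 8\cdot 2^t$, so the total mountain count is $t+2$, and the leftmost range contains $t+2-i^{*}$ mountains, where $i^{*}\in\{1,\ldots,t+1\}$ is the largest index admitting such a split (or $t+2$ if none exists).

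Next I would handle the four cases by the leading three bits $b_{t+2}b_{t+1}b_t$. For~$101$, the window at $i=t$ is $(0,1,b_{t-1})\in\{010,011\}$, so $i^{*}=t$ uniformly and $r(n,n)=2$. For~$110$, the window at $i=t+1$ is $(1,1,0)$, so $i^{*}=t+1$ and $r(n,n)=1$. For~$100$, the top two windows $(1,0,0)$ and $(0,0,b_{t-1})$ never trigger, so scanning downward, $i^{*}=p$ where $p$ is the largest index in $\{1,\ldots,t-1\}$ with $b_p=1$, or no split exists at all. Symmetrically for~$111$, $i^{*}=q+1$ where $q$ is the largest index in $\{0,\ldots,t-1\}$ with $b_q=0$ (or no split if $n+1=2^{t+3}-1$). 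Both of these immediately yield $r(n,n)\geq 2$.

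Finally, for the two sums, I would partition the $2^t$ configurations of the free lower bits of $n+1$ by the value of $p$ or $q$: each position $p$ (resp.\ $q$) yields $2^p$ (resp.\ $2^q$) configurations with $r(n,n)=t+2-p$ (resp.\ $t+1-q$), and the ``fall-off-the-end'' boundary configurations contribute $r(n,n)=t+2$. The totals collapse via the standard identities $\sum_{j=0}^{t-1} 2^j = 2^t-1$ and $\sum_{j=0}^{t-1} j\cdot 2^j = 2+(t-2)2^t$ to $4\cdot 2^t-2$ and $3\cdot 2^t-1$ respectively. The main obstacle is the careful bookkeeping around these boundary configurations, which must be counted separately rather than folded into the main sum, and the slight asymmetry between cases~$100$ and~$111$ in whether $b_0$ is part of the scanned range; one must also verify the small-$t$ corner cases, most notably $t=0$, where the sum degenerates to a single term but still matches the claimed closed form.
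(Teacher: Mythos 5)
Your plan is correct, and all the key quantities check out (I verified the bit-window criterion $\{010,011,110\}$, the case analyses, and both weighted sums, which do evaluate to $4\cdot 2^t-2$ and $3\cdot 2^t-1$). However, you take a genuinely different route from the paper. The paper proves the lemma by induction on $t$: the cases $101$ and $110$ are handled directly exactly as you do, but for the cases $100$ and $111$ it observes the recursive identity $r(n,n)=1+r(n',n')$ with $n'=n-2^{t+1}$ (resp.\ $n'=n-2^{t+2}$), obtained by deleting the leftmost mountain, and then lets the two sums telescope through the induction hypothesis; the base case $t=0$ is checked by hand. You instead locate the topmost range split explicitly --- as the largest index $p\leq t-1$ with $b_p=1$ (case $100$), or $q+1$ for the largest $q\leq t-1$ with $b_q=0$ (case $111$) --- and evaluate the sums in closed form by partitioning the $2^t$ low-bit configurations according to $p$ or $q$. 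Your approach buys an exact formula $r(n,n)=t+2-i^*$ for every individual $n$ (not just the two sums) and avoids induction entirely, at the price of the boundary bookkeeping you already flag (the configurations $n+1\in\{2^{t+2},2^{t+2}+1\}$ and $n+1=2^{t+3}-1$ contributing $r=t+2$, and the identity $\sum_{j=0}^{t-1}j\cdot 2^j=(t-2)2^t+2$); the paper's induction is shorter because the recursive structure absorbs all of that arithmetic. One point to make fully explicit when writing this up: justify the identification of $r(n,n)$ with the mountain count of the leftmost range via the forward-bagging convention (one range node per mountain, chained left to right above the leftmost peak) --- the paper uses this identification silently as well, so it is consistent, but your proof leans on it as the very first step.
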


\begin{lemma}\label{lem:leftmost}
Fix $k\geq 1$, $d=\lfloor \log_2 (k+1) \rfloor$, and let $h$ be the $k$-th newest leaf in $\MMB$. 
For $n=|X|$ in the interval $k\leq n<k+5\cdot 2^d$, $h$ is in the leftmost mountain of its range if and only if $n$ is in one of the three sub-intervals $k\leq n< k+2^d$, $3\cdot 2^{d}\leq n+1< \max\{\frac{7}{2}2^d, k+2\cdot 2^d \}$, or $k+4\cdot 2^d\leq n< k+5\cdot 2^d$. 
Furthermore, $h$ is in the leftmost range within the interval  
$$\begin{cases}
    k\leq n< k+2^{d+1} &\text{for} \quad 2^d\leq k+1 < \frac{3}{2}2^d \\
    k\leq n< k+2^{d+2} &\text{for} \quad \frac{3}{2}2^d\leq k+1 < 2^{d+1}.
\end{cases}$$
\end{lemma}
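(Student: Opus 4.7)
The plan is to combine Lemma~\ref{lem:MM'}, which localizes the mountain $M$ containing the $k$-th newest leaf, with the binary-encoded range-split characterization from Section~\ref{s:MMB-append}. Write the binary representation of $n+1$ as $(\cdots b_{d+2} b_{d+1} b_d \cdots b_0)$, and let $n' := (n+1) \bmod 2^d$. A quick inspection of the two range-split conditions, after the substitutions $s_i = i+b_i$ from Lemma~\ref{lem:Sn}, shows that a range split occurs just to the left of the mountain at position $i$ if and only if $(b_{i+1}, b_i) = (1,0)$ (height-gap-two condition) or $(b_{i+2}, b_{i+1}) = (0,1)$ (mergeable-pair-on-the-left condition). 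By Lemma~\ref{lem:MM'}, $M$ sits at position $d-1$ when $n' \geq k+1-2^d$ and at position $d$ otherwise, so $M$ is the leftmost mountain of its range iff one of these two bit patterns holds for $i$ matching the position of $M$.

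First I would partition the window $[k, k+5\cdot 2^d)$ into five sub-windows of length $2^d$ indexed by the higher-order bits $(b_{d+2}, b_{d+1}, b_d)$ of $n+1$. Within each sub-window the position of $M$ and these three bits are essentially fixed, so a direct check of the two bit patterns determines which values of $n$ place $M$ at the left end of its range. Aggregating over the five sub-windows should collapse to precisely the three sub-intervals in the statement: $[k, k+2^d)$ corresponds to $M$ having just become a peak with the $(0,1)$ pattern barely to its left, $[3\cdot 2^d-1, \max\{\tfrac{7}{2}2^d, k+2\cdot 2^d\}-1)$ to the $(1,0)$ pattern opening up after the top bit of $n+1$ crosses into the $11\cdots$ range, and $[k+4\cdot 2^d, k+5\cdot 2^d)$ to a second occurrence of the $(0,1)$ pattern one level higher. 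The $\max$ expression in the middle sub-interval reflects the dichotomy between $k+1 < \tfrac{3}{2} 2^d$ and $k+1 \geq \tfrac{3}{2} 2^d$ already encountered in Lemmas~\ref{lem:MM'} and \ref{lem:aUMMB}.

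For the second claim I would argue that $M$ lies in the leftmost range iff no range split occurs anywhere to the left of $M$ in the mountain list — equivalently, iff $M$'s position is at least $t - r(n,n)$, with $t = \lfloor \log_2(n+1) \rfloor$ and $r(n,n)$ controlled by Lemma~\ref{lem:rnn}. Restricting $n$ to $[k, k+2^{d+1})$ (resp.\ $[k, k+2^{d+2})$) keeps the relevant high-order bits of $n+1$ fixed in a configuration that admits no $(1,0)$ or $(0,1)$ pattern above level $d$, so $M$ remains in the leftmost range throughout.

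The main obstacle is the endpoint bookkeeping: translating the bit-pattern conditions back into clean $n$-intervals depends on the sub-case $k+1 < \tfrac{3}{2} 2^d$ vs.\ $k+1 \geq \tfrac{3}{2} 2^d$, and a careful sub-case analysis — cross-checked against small cases such as $k=3$ and $k=5$ (both with $d=2$) — will be required to ensure that all endpoints, including the $\max$, come out exactly as stated and that the ``leftmost range'' window is not accidentally extended across a hidden split.
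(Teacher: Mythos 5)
Your overall strategy --- locate the mountain $M$ containing the $k$-th newest leaf at position $d-1$ or $d$ via Lemma~\ref{lem:MM'}, then test the two split patterns $(1|0)$ and $(01|)$ at the position immediately to the left of $M$ --- is a genuinely different route from the paper's, which instead fixes $n$, computes the mountain sizes $2^{s_i}=2^{i+b_i}$ from the bits of $n+1$ following the case analysis of Lemma~\ref{lem:rnn}, and counts how many of the \emph{leftmost} leaves lie in the leftmost mountain of a range (resp.\ the leftmost range), translating this into an interval for $k$. Your parametrization is arguably more direct for a fixed $k$, and your bit-pattern criterion for a split to the left of position $i$ is correct. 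However, two of your supporting assertions are false as stated, and the second one hides the real content of the argument. First, within a sub-window $[k+j2^d,\,k+(j+1)2^d)$ the high-order bits $(b_{d+2},b_{d+1},b_d)$ of $n+1$ and the position of $M$ are \emph{not} fixed: writing $k+1=2^d+\rho$, both flip exactly once inside each sub-window, at the multiple of $2^d$, with $M$ at position $d-1$ before the flip and $d$ after. The case analysis therefore has two regimes per sub-window that must be treated in a coupled way, not one.

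Second, and more seriously, the justification of the ``leftmost range'' claim fails. For $n$ in $[k,\,k+2^{d+1})$ with $2^d\leq k+1<\frac32 2^d$, the value $n+1$ sweeps through $[2^d,\,\frac72 2^d)$, so the configurations $(b_{d+1},b_d)=(1,0)$ and $(b_{d+1},b_d,b_{d-1})=(1,1,0)$ do occur, and the latter \emph{does} produce a $(1|0)$ split between positions $d$ and $d-1$ (e.g.\ at $n+1=3\cdot 2^d$). Your claim that the window ``admits no $(1,0)$ or $(0,1)$ pattern above level $d$'' is therefore wrong; the lemma survives only because in exactly those configurations one checks, via the $n'\geq k+1-2^d$ criterion of Lemma~\ref{lem:MM'}, that $M$ sits at position $d$, i.e.\ \emph{above} the induced split (or is the globally leftmost mountain, in which case it is trivially in the leftmost range --- a degenerate case your first sub-interval also relies on but misattributes to a $(01|)$ pattern). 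To repair the proof you must carry out the coupling between $M$'s position and the location of the splits in every configuration of $(b_{d+2},b_{d+1},b_d)$, split into the sub-cases $k+1<\frac32 2^d$ and $k+1\geq \frac32 2^d$; this is essentially the same volume of case work as the paper's leaf-counting proof, not mere ``endpoint bookkeeping.''
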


\begin{figure}[htp]
    \centering
    \includegraphics[width=0.85\textwidth]{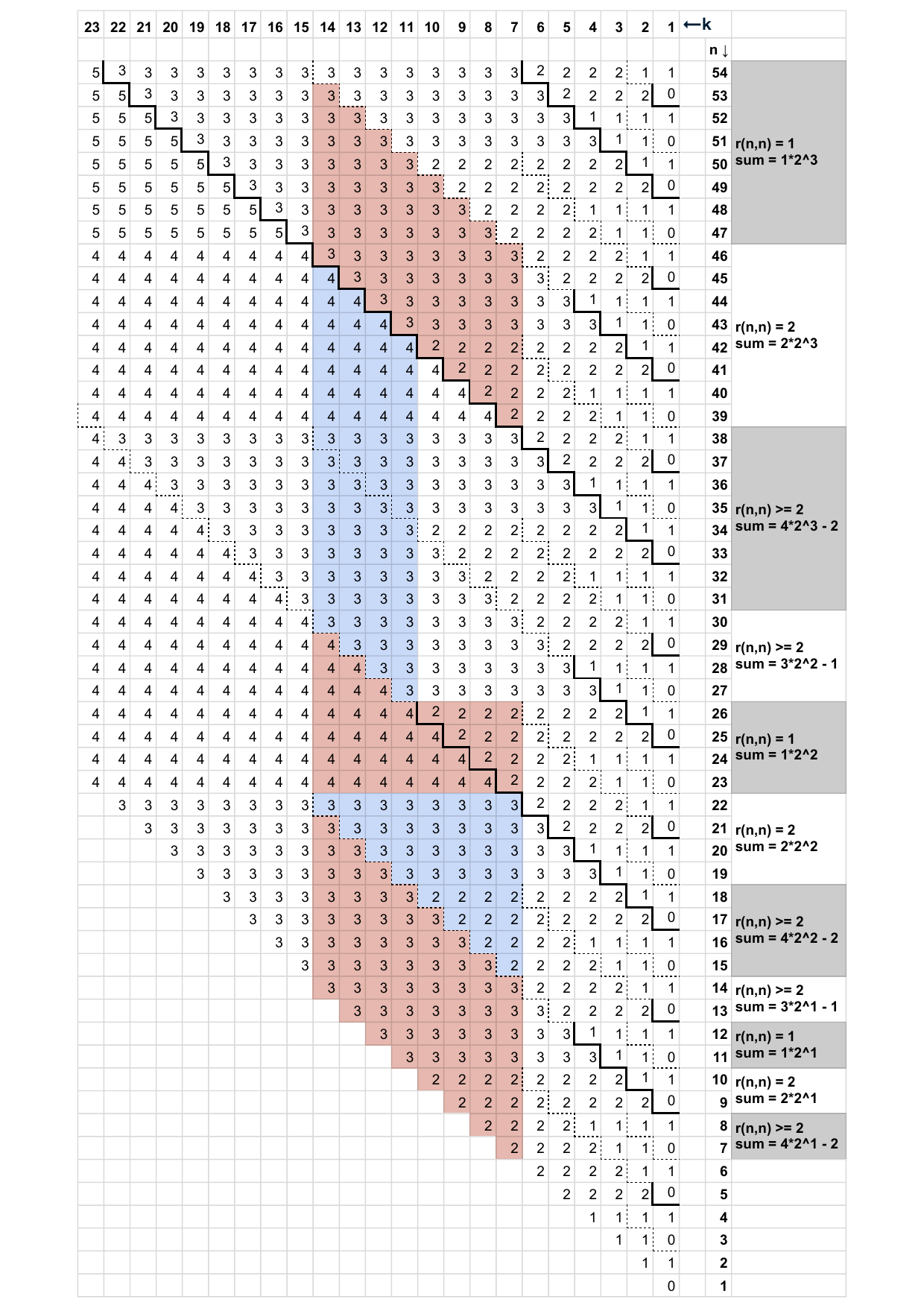}
    \caption{Each cell with coordinates $(k,n)$ is numbered with the height of the mountain that contains the $k$-th newest leaf when $|X|=n$ in $\MMB$. 
    Dotted and bold lines represent mountain divisions and range splits, respectively. 
    The rightmost column represents Lemma~\ref{lem:rnn}. 
    Red and blue cells represent Lemma~\ref{lem:leftmost} for $d=3$, with red cells representing its first statement specifically. } 
    \label{fig:leftmost}
\end{figure}

\begin{corollary}\label{cor:rprime}
For any $k\geq 1$, $\bar{r'}(k)\geq \frac{5}{16}$. 
Furthermore, if $d=\lfloor \log_2(k+1) \rfloor$, the inequality $\frac{1}{2^{d+i}}\sum_{n=k}^{k+2^{d+i}-1} r'(k,n)\geq \frac{5}{16}$ holds for $i=0$, $1$, $2$ and $3$. 
\end{corollary}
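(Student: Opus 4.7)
The plan is to apply Lemma~\ref{lem:leftmost} directly and count, for each window $[k, k+2^{d+i}-1]$, the contribution of the three sub-intervals on which $r'(k,\cdot)$ equals $1$. Label these sub-intervals $I_1 = [k, k+2^d-1]$, $I_2 = [3\cdot 2^d - 1, \max\{\tfrac{7}{2}2^d, k+2\cdot 2^d\} - 2]$, and $I_3 = [k + 4\cdot 2^d, k + 5\cdot 2^d - 1]$. A brief computation gives $|I_1| = |I_3| = 2^d$ and $|I_2| = \max\{2^{d-1}, k - 2^d\} \geq 2^{d-1}$ (using $d\geq 1$, which holds since $k\geq 1$), for a combined weight of at least $\tfrac{5}{2}\cdot 2^d$ on the window $[k, k+5\cdot 2^d - 1]$.

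Next, I would verify each of the four inequalities by determining which $I_j$ lie inside the window of length $2^{d+i}$. For $i = 0$ the window is exactly $I_1$, so the ratio equals $1$. For $i = 1$ the window still contains $I_1$ entirely, giving ratio at least $2^d/2^{d+1} = \tfrac{1}{2}$. For $i = 2$ the window contains both $I_1$ and $I_2$; the containment of $I_2$ follows from $3\cdot 2^d - 1 \geq k$ (since $k < 2^{d+1}$) and from a trivial comparison of the right endpoint of $I_2$ with $k + 4\cdot 2^d - 1$, so the ratio is at least $(2^d + 2^{d-1})/2^{d+2} = \tfrac{3}{8}$. For $i = 3$ the window of length $8\cdot 2^d$ contains the whole range $[k, k + 5\cdot 2^d - 1]$, hence all three sub-intervals, giving ratio at least $(\tfrac{5}{2}\cdot 2^d)/2^{d+3} = \tfrac{5}{16}$.

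In each case the lower bound meets or exceeds $\tfrac{5}{16}$, establishing the ``furthermore'' part of the corollary. The first statement $\bar{r'}(k) \geq \tfrac{5}{16}$ then follows immediately from the $i=3$ inequality together with Lemma~\ref{lem:MMB-period}, which tells us that $n \mapsto r'(k,n)$ is periodic with period $2^{d+3}$, so $\bar{r'}(k)$ equals the $i=3$ average over $[k, k + 2^{d+3}-1]$. The main care needed in executing this plan is the endpoint bookkeeping for $I_2$ in the two regimes distinguished by Lemma~\ref{lem:leftmost}: verifying that $|I_2| = 2^{d-1}$ when $k+1 \leq \tfrac{3}{2}2^d$ and $|I_2| = k - 2^d \geq 2^{d-1}$ otherwise, so that the uniform lower bound $|I_2| \geq 2^{d-1}$ is legitimate throughout. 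Apart from this routine case split, every step is a straightforward length comparison.
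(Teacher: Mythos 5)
Your proposal is correct and follows essentially the same route as the paper: it invokes Lemma~\ref{lem:leftmost} to locate the three sub-intervals where $r'(k,\cdot)=1$, bounds their lengths by $2^d$, $2^{d-1}$ and $2^d$, checks which of them fit inside each of the four windows, and derives the first statement from the $i=3$ case via the periodicity given by Lemma~\ref{lem:MMB-period}. The only difference is presentational — you make the endpoint bookkeeping for $I_2$ slightly more explicit than the paper does — so no further comment is needed.
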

\begin{proof}
Fix $k$ and $d$. 
By Lemma~\ref{lem:MMB-period}, $r'(k,n)$ has a period of $2^{d+3}$, so 
$$\bar{r'}(k)=\frac{1}{2^{d+3}}\sum_{n=k}^{k+2^{d+3}-1}r'(k,n).$$ 
By Lemma~\ref{lem:leftmost}, in this sum there are at least three disjoint sub-intervals of $n$ where $r'(k,n)=1$, of sizes at least $2^d$, $\frac{1}{2}2^d$ and $2^d$, respectively. 
Hence, $r'(k)\geq \frac{1}{2^{d+3}}(2^d + \frac{1}{2}2^d+2^d)=\frac{5}{16}$. 

To prove the second claim, we use Lemma~\ref{lem:leftmost} again (see Figure~\ref{fig:leftmost}) to verify that: 

a) In the interval $k\leq n< k+2^{d+2}$, there are two disjoint sub-intervals of $n$ where $r'(k,n)=1$, of sizes at least $2^d$ and $\frac{1}{2}2^d$, respectively, and $\frac{1}{2^{d+2}}(2^d + \frac{1}{2}2^d)=\frac{3}{8}>\frac{5}{16}$. 

b) In the interval $k\leq n< k+2^{d+1}$, there is a sub-interval of $n$ of size $2^d$ where $r'(k,n)=1$, and $\frac{2^d}{2^{d+1}}=\frac{1}{2}>\frac{5}{16}$. 

c) In the interval $k\leq n< k+2^{d}$, we have $r'(k,n)=1$. This completes the proof. 
\end{proof}

\begin{lemma}\label{lem:r}
For any $k\geq 1$, if $d=\lfloor \log_2 (k+1) \rfloor$ then 
$$\bar{r}(k)\leq \begin{cases}
    \frac{9}{4}-\frac{5}{2^{d+1}} & \text{for} \quad 2^d\leq k+1 < \frac{3}{2}2^d \\
    \frac{37}{16} - \frac{3}{2^{d+1}} & \text{for} \quad \frac{3}{2}2^d\leq k+1 < 2^{d+1}.
\end{cases} $$
\end{lemma}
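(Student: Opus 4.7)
Let $d = \lfloor \log_2(k+1) \rfloor$, write $(\cdots b_2 b_1 b_0)$ for the binary representation of $n+1$, and let $p_M$ denote the position (indexed from $0$ at the right) of the mountain $M$ containing the $k$-th newest leaf. By Lemma~\ref{lem:MMB-period}, $r(k, n)$ is periodic in $n$ with period $2^{d+1}$ in Case~1 ($2^d \leq k+1 < \tfrac{3}{2}2^d$) and $2^{d+2}$ in Case~2 ($\tfrac{3}{2}2^d \leq k+1 < 2^{d+1}$). Hence $\bar{r}(k)$ equals the average of $r(k,n)$ over a single period, which I compute by treating the relevant bits of $n+1$ (those in positions at most $d$ or $d+1$, respectively) as uniform independent coin tosses.

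I plan to write $r(k,n) = p_M - p_R + 1$, where $p_R$ is the position of the rightmost mountain in $M$'s range; explicitly, $p_R = 1 + \max\{p < p_M : \text{there is a split between positions } p+1 \text{ and } p\}$, or $0$ if no such split exists. By Lemma~\ref{lem:MM'}, $p_M \in \{d-1, d\}$ with known probabilities determined by $(n+1)\bmod 2^d$. Using the binary characterization of the range split conditions from Section~\ref{s:MMB-append} (patterns $(1|0)$ and $(01|)$), a split between positions $p+1$ and $p$ holds iff $b_{p+1} = 1$ and either $b_p = 0$ or $b_{p+2} = 0$. Computing $\E[p_M - p_R]$ then reduces to, for each $p < p_M$, estimating the probability that no split occurs in any of the consecutive pairs in $[p, p_M - 1]$; summing these tail probabilities yields $\E[p_M - p_R]$, and adding $1$ gives $\E[r(k,n)]$.

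The main obstacle will be the three-bit coupling in the split condition: since the events at positions $p$ and $p+1$ share the bit $b_{p+2}$, they are not independent, and a naive computation risks double-counting. I intend to address this by conditioning on consecutive-bit pairs and setting up a recursion in $p$. Extra care is required at the boundary $p = 0$ (where the $(01|)$ pattern degenerates) and in the case $p_M = d$ (where Lemma~\ref{lem:MM'} already partially constrains the bits below position $d$, so they are no longer uniform). Since the claim is an inequality, slight over-estimates are permitted during the case analysis, and I expect the resulting sums to telescope into the geometric-type closed forms $9/4 - 5/2^{d+1}$ and $37/16 - 3/2^{d+1}$.
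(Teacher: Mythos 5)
Your decomposition is correct and genuinely different from the paper's route. You compute $\E[r(k,n)]$ head-on via $r(k,n)=p_M-p_R+1$ and the tail-sum $\E[p_M-p_R]=\sum_{p<p_M}\Pr[\text{no split in }[p,p_M-1]]$, with the split condition read off the bits of $n+1$. The paper instead never touches the split probabilities directly: it bounds $r(k,n)\leq r(n,n)$ (the size of the \emph{leftmost} range) whenever Lemma~\ref{lem:leftmost} places the $k$-th newest leaf in the leftmost range, sharpens this to $r(n,n)-1$ on the sub-intervals where that leaf is not the leftmost mountain of its range, and then evaluates $\sum_n r(n,n)$ over dyadic intervals by the separate recursion of Lemma~\ref{lem:rnn}, finishing with a monotonicity argument showing the resulting bound is maximized at $k^*+1=2^d$ (resp.\ $\tfrac{7}{4}2^d$). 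Your approach buys an exact formula rather than an upper bound and explains where the constants come from: the ``no split between $q{+}1$ and $q$'' condition forbids the triples $(b_q,b_{q+1},b_{q+2})\in\{010,011,110\}$, and the associated transfer matrix on bit pairs has exactly $5$ admissible strings of every length $\geq 3$, so $\Pr[\text{no split over a run of length }\ell]=5/2^{\ell+2}$ for i.i.d.\ uniform bits; summing gives $1+\tfrac54(1-2^{-p_M})=\tfrac94-\tfrac{5}{4\cdot 2^{p_M}}$, which at $p_M=d-1$ is precisely the Case~1 bound. The paper's route avoids this correlation analysis entirely at the cost of two auxiliary lemmas and a looser, non-explicit bound (note its Case~2 constant carries a $+\tfrac1{16}$ of slack).

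One step in your plan is load-bearing and not yet discharged: the conditioning when $p_M=d$. Unconditionally, $p_M=d$ would give $\tfrac94-\tfrac{5}{2^{d+2}}$, which \emph{exceeds} the claimed Case~1 bound $\tfrac94-\tfrac{5}{2^{d+1}}$; since $\Pr[p_M=d]=\tfrac{k+1}{2^d}-1>0$ for all $k+1>2^d$, a computation that treats the low bits as unconditionally uniform fails for every $k$ in the interval except the left endpoint. The rescue is exactly the dependence you flag: $p_M=d$ forces $n'=(n+1\bmod 2^d)<k+1-2^d<2^{d-1}$, hence $b_{d-1}=0$, which makes the topmost split event (between positions $d$ and $d-1$) occur with probability $\tfrac12$ instead of $\tfrac38$ and shortens the expected run. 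You must carry this conditional computation through (the low bits are uniform on an interval, not a product set), and then still maximize the resulting $k$-dependent expression over each dyadic sub-interval to land at or below the stated constants. That is real work, but I see no obstruction to it.
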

\begin{proof}
Fix an integer $d\geq 1$, and consider first a value of $k$ in the interval $2^d\leq k+1< \frac{3}{2}2^d$. 
By Lemma~\ref{lem:MMB-period}, $r(k,n)$ has a period of $2^{d+1}$, so we can apply equation~\eqref{eq:period} to obtain 
\begin{align*}
    \bar{r}(k) &= \frac{1}{2^{d+1}} \sum_{n=k}^{k+2^{d+1}-1} r(k,n) \\
    &= \frac{1}{2^{d+1}} \left[ \sum_{n=k}^{k+2^{d}-1} r(k,n) + \sum_{n=k+2^d}^{3\cdot 2^d-2} r(k,n) + \sum_{n=3\cdot 2^d-1}^{k+2\cdot 2^{d}-1} r(k,n) \right] \\
    &\leq \frac{1}{2^{d+1}} \left[ \sum_{n=k}^{k+2^{d}-1} r(n,n) + \sum_{n=k+2^d}^{3\cdot 2^d-2} (r(n,n)-1) + \sum_{n=3\cdot 2^d-1}^{k+2\cdot 2^{d}-1} r(n,n) \right] \\
    &= \frac{1}{2^{d+1}} \left[\sum_{n=k}^{k+2^{d+1}-1} r(n,n) -2\cdot 2^d +k+1 \right]=:f(k),
\end{align*}
where the inequality holds because $r(k,n) \leq r(n,n)$ for each value of $n$ in the sum, by the second statement of Lemma~\ref{lem:leftmost}, and this inequality is tight for the first and third sub-intervals, and loose in the second sub-interval, by the first statement. 
Let the bound above be $f(k)$. 
Using Lemma~\ref{lem:rnn}, we can check that this expression, evaluated at $k^*=2^d-1$, gives   
\begin{align*}
f(k^*)=\frac{1}{2^{d+1}}\left[ \sum_{n=2^{d}-1}^{3\cdot 2^{d}-2}r(n,n) -2^d \right]
= \frac{1}{2^{d+1}}\left[\frac{11}{2}2^{d}-5-2^d\right]=\frac{9}{4}-\frac{5}{2^{d+1}}.  
\end{align*}
We claim that this evaluation is maximal; indeed, for a higher value of $k$, the difference is 
\begin{align*}
f(k)-f(k^*) &=\frac{1}{2^{d+1}}\left(k-k^*+\sum_{n=k^*}^{k-1} [r(n+2^{d+1}, n+2^{d+1})-r(n,n)]\right) \\
&\leq \frac{1}{2^{d+1}}\left(k-k^*+\sum_{n=k^*}^{k-1} [-1]\right)=0,
\end{align*}
because, for any $n$ in the interval $2^d\leq n+1<k+1<\frac{3}{2}2^d$, we have $r(n+2^{d+1}, n+2^{d+1})= 1$ and $r(n,n)\geq 2$, by Lemma~\ref{lem:rnn}; see Figure~\ref{fig:leftmost}.  
Putting things together, we have that 
$$\bar{r}(k)\leq f(k) \leq f(k^*) =\frac{9}{4}-\frac{5}{2^{d+1}}.$$

Consider now a value of $k$ in the interval $\frac{3}{2}2^d\leq k+1< 2^{d+1}$. 
By Lemma~\ref{lem:MMB-period}, $r(k,n)$ has a period of $2^{d+2}$, and $r(k,n)\leq r(n,n)$ for the first $2^{d+2}$ values of $n$ by Lemma~\ref{lem:leftmost}, where, again, we can locate the exact sub-intervals where this inequality is tight: 
\begin{align*}
    \bar{r}(k) &= \frac{1}{2^{d+2}} \sum_{n=k}^{k+2^{d+2}-1} r(k,n) \\
    &= \frac{1}{2^{d+2}} \left[ \sum_{n=k}^{k+2^{d}-1} r(k,n) + \sum_{n=k+2^d}^{3\cdot 2^d-2} r(k,n) + \sum_{n=3\cdot 2^d-1}^{k+2\cdot 2^{d}-1} r(k,n) + \sum_{n=k+2\cdot 2^d}^{k+4\cdot 2^d-1} r(k,n)  \right] \\
    &\leq \frac{1}{2^{d+2}} \left[ \sum_{n=k}^{k+2^{d}-1} r(n,n) + \sum_{n=k+2^d}^{3\cdot 2^d-2} (r(n,n)-1) + \sum_{n=3\cdot 2^d-1}^{k+2\cdot 2^{d}-1} r(n,n) + \sum_{n=k+2\cdot 2^d}^{k+4\cdot 2^d-1} (r(n,n) -1) \right] \\
    &= \frac{1}{2^{d+2}} \left[\sum_{n=k}^{k+2^{d+2}-1} r(n,n) -4\cdot 2^d +k+1 \right]=:f(k).
\end{align*}
Let the bound above be $f(k)$. 
By Lemma~\ref{lem:rnn}, its evaluation at $k^*=\frac{7}{4}2^d-1$ is 
\begin{align*}
f(k^*)=\frac{1}{2^{d+2}}\left[ \sum_{n=\frac{7}{4}2^{d}-1}^{\frac{23}{4} 2^{d}-2}r(n,n) -\frac{9}{4}2^d \right]
= \frac{1}{2^{d+2}}\left[\frac{45}{4}2^{d}-6-\frac{9}{4}2^d\right]=\frac{9}{4}-\frac{3}{2^{d+1}}. 
\end{align*}
Now, for lower values of $k$, i.e., $\frac{3}{2}2^d\leq k+1 < \frac{7}{4}2^d$, the difference of evaluations is 
\begin{align*}
f(k)-f(k^*) &=\frac{1}{2^{d+2}}\left(k-k^*+\sum_{n=k}^{k^*-1} [r(n,n)-r(n+2^{d+2}, n+2^{d+2})]\right) \\
&= \frac{1}{2^{d+2}}\left(k-k^*+\sum_{n=k}^{k^*-1} [-1]\right)=-\frac{2(k^*-k)}{2^{d+2}}<0,
\end{align*}
because, for any $n$ in the interval $\frac{3}{2}2^d \leq k+1\leq n+1<\frac{7}{4}2^d$, we have $r(n,n)=1$ and $r(n+2^{d+2}, n+2^{d+2})=2$ by Lemma~\ref{lem:rnn}; see~Figure~\ref{fig:leftmost}.  
And for higher values:  
\begin{align*}
f(k)-f(k^*) &=\frac{1}{2^{d+2}}\left(k-k^*+\sum_{n=k^*}^{k-1} [r(n+2^{d+2}, n+2^{d+2})-r(n,n)]\right) \\
&\leq \frac{1}{2^{d+2}}\left(k-k^*+\sum_{n=k}^{k^*-1} [0]\right)=\frac{k-k^*}{2^{d+2}}<\frac{\frac{1}{4}2^d}{2^{d+2}}=\frac{1}{16},
\end{align*}
because, for any $n$ in the interval $\frac{7}{4}2^d \leq n+1< k+1< 2^d$, we have $r(n+2^{d+2}, n+2^{d+2})=2$ and $r(n,n)\geq 2$, again by Lemma~\ref{lem:rnn}. 
Putting things together: 
$$\bar{r}(k)\leq f(k) \leq f(k^*)+\frac{1}{16} =\frac{9}{4}-\frac{3}{2^{d+1}} +\frac{1}{16} = \frac{37}{16} - \frac{3}{2^{d+1}},$$
as claimed. This completes the proof. 
\end{proof}

\begin{lemma}\label{lem:a-b}
In MMB, for any $k\geq 1$, if $d= \lfloor \log_2 (k+1) \rfloor$ then 
\begin{align*}
\bar{b}(k) &= \begin{cases}
    \frac{3}{8} d + \frac{k+1}{2^{d+1}} +\frac{1}{8} &\text{for} \quad 2^d\leq k+1\leq \frac{3}{2} 2^d \\
    \frac{3}{8} d +\frac{k+1}{2^{d+2}} + \frac{1}{2} &\text{for} \quad \frac{3}{2} 2^d<k+1< 2^{d+1}.
\end{cases}
\end{align*}
\end{lemma}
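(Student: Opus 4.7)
The plan is to average $b(k,n)$ over one period of length $2^{d+1}$ (first case) or $2^{d+2}$ (second case), as justified by Lemma~\ref{lem:MMB-period}.

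First, I would derive a structural identity for $b(k,n)$. Since the belt is a single chain obtained by forward-bagging the $J(n)$ range roots, and the leftmost belt node coincides with the leftmost range root, the belt-node ancestors of the $k$-th newest leaf are exactly those belt nodes lying from its range rightward along the chain, with one node ``absorbed'' into a range node when its range is the leftmost. Letting $p(k,n)$ denote the position (from the left) of the range containing $M$, this gives $b(k,n) = J(n) - p(k,n) + 1 - b'(k,n)$, so $\bar{b}(k) = \bar{q}(k) + 1 - \bar{b'}(k)$, where $q(k,n) := J(n) - p(k,n)$ counts the ranges strictly to the right of $M$'s range.

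Second, I would translate $q(k,n)$ into a count of splits in the binary expansion $(\cdots b_1 b_0)$ of $n+1$. By Lemma~\ref{lem:Sn} and the range split conditions from Section~\ref{s:MMB-append}, a split between mountain positions $(i+1, i)$ occurs precisely when $(b_{i+1}, b_i) = (1,0)$ or $(b_{i+2}, b_{i+1}) = (0,1)$. By Lemma~\ref{lem:MM'}, $M$ sits at mountain position $\mu(k,n) \in \{d-1, d\}$ with a distribution determined by $k$ and $n \bmod 2^d$, so $q(k,n)$ counts the splits strictly to the right of mountain position $\mu(k,n)$. Treating the bits within one period as independent fair coin tosses (as in the proof of Lemma~\ref{lem:aUMMB}), each unconstrained split location contributes probability $\frac{1}{4} + \frac{1}{4} - \frac{1}{8} = \frac{3}{8}$ by inclusion-exclusion on the two conditions, yielding a principal term $\frac{3}{8}\,\E[\mu] = \frac{3}{8} d$ plus a lower-order correction from the $(k+1)/2^d$ contribution to the distribution of $\mu$. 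The value $\bar{b'}(k)$ is then read off from the second claim of Lemma~\ref{lem:leftmost}, which specifies the exact length of the interval of $n$ on which $M$'s range is leftmost.

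The main obstacle is the boundary bookkeeping: because the two split conditions share a bit, splits adjacent to $M$ cannot be treated as independent, and the relevant bits of $n+1$ near position $\mu$ are partially constrained by $k$ (via Lemma~\ref{lem:MM'}). These constrained locations must be separated from the free ones, and the two regimes handled case-by-case. In particular, in the second regime $\frac{3}{2} 2^d \leq k+1 < 2^{d+1}$ the period doubles to $2^{d+2}$, which introduces an extra free split location and rescales the $k$-dependent correction, explaining the jump from $\frac{k+1}{2^{d+1}} + \frac{1}{8}$ in the first case to $\frac{k+1}{2^{d+2}} + \frac{1}{2}$ in the second.
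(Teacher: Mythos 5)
Your plan diverges from the paper's proof, which is an induction on $k$: the base case is $\bar{b}(1)=1$ (the newest leaf always lies in the rightmost range), and the increment $\bar{b}(k+1)-\bar{b}(k)$ is identified with the probability that the $k$-th and $(k+1)$-th newest leaves lie in different mountains \emph{with a range split between them} --- which the third bullet of Lemma~\ref{lem:MM'} pins down as a single modular condition, $n\equiv k \bmod 2^{d+1}$ or $n\equiv k \bmod 2^{d+2}$, of probability exactly $2^{-(d+1)}$ or $2^{-(d+2)}$ over a period. Telescoping then gives the closed form. You cite Lemma~\ref{lem:MM'} only for the position of $M$, but its third claim is precisely the tool that lets the paper avoid the computation you propose.

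The gap in your route is that the ``boundary bookkeeping'' you defer is in fact the entire difficulty, not a lower-order correction. All candidate split positions to the right of $M$ (there are $\mu$ of them, where $\mu\in\{d-1,d\}$ is $M$'s position counted from the right) involve only bits $b_0,\dots,b_{d+1}$ of $n+1$, and conditioning on the position of $M$ constrains $n'=(n+1\bmod 2^d)$ to an interval ($[k+1-2^d,\,2^d-1]$ or $[0,\,k-2^d]$) on which the low-order bits are neither uniform nor independent. So there are essentially no ``free'' split locations: the naive estimate $1+\tfrac{3}{8}\E[\mu]=\tfrac{3}{8}d+\tfrac{1}{4}+\tfrac{3(k+1)}{2^{d+3}}$ differs from the target $\tfrac{3}{8}d+\tfrac{k+1}{2^{d+1}}+\tfrac{1}{8}$ by $\tfrac{k+1}{2^{d+3}}-\tfrac{1}{8}$, i.e., the correction carries all of the $k$-dependence beyond the leading term and would have to be evaluated exactly by a digit-counting argument over those constrained intervals, which you do not carry out. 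Two smaller issues: (i) under the paper's convention $b(k,n)$ counts \emph{all} belt-node ancestors, so the structural identity is $b(k,n)=J(n)-p(k,n)+1$ without the $-b'(k,n)$ term (the $\bullet$-sibling saving is subtracted separately in~\eqref{eq:kn-MMB}); and (ii) $\bar{b'}(k)=0$, since $b'(k,n)$ vanishes for all $n\ge k+2^{d+2}$ by Lemma~\ref{lem:leftmost} --- it is a transient, not a periodic function, so ``reading it off'' as an interval length over a period would introduce a spurious constant into your final formula.
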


\begin{proof} 
We prove this formula by induction on $k$, where the base case $\bar{b}(1)=1$ follows because the latest leaf will always be in the rightmost range. 
It can be checked that the induction step corresponds to the following statement: for any $d\geq 1$ and any $k$ with $2^{d}\leq k+1 < 2^{d+1}$, 
\begin{align*}
\bar{b}(k+1) - \bar{b}(k)= \begin{cases}
    \frac{1}{2^{d+1}} & \text{for} \quad 2^d\leq k+1 < \frac{3}{2} 2^d \\
    \frac{1}{2^{d+2}} & \text{for} \quad \frac{3}{2} 2^d \leq k+1 <2^{d+1}.
\end{cases}    
\end{align*} 
If $M$ and $M'$ are the mountains that contain the $k$-th and $(k+1)$-th newest leaves, respectively, the difference $b(k,n) - b(k-1,n)$ is either zero or one, and its average $\bar{b}(k) - \bar{b}(k-1)$ is the probability that $M\neq M'$ and there is a range split between them. 
It follows from Lemma~\ref{lem:MM'} that this probability is either $1/2^{d+1}$ or $1/2^{d+2}$. 
\end{proof}

\begin{lemma}\label{lem:a-mmb}
In $\MMB$, for any $k\geq 1$, if $d=\lfloor \log_2(k+1) \rfloor$ then 
\begin{align*}
\bar{\sigma}_{\MMB}(k) &\leq \begin{cases}
    \frac{11}{8}d+\frac{4(k+1)-5}{2^{d+1}}+\frac{1}{16} &\text{for } \quad 2^d\leq k+1< \frac{3}{2}2^d \\
    \frac{11}{8}d+\frac{3(k+1)-6}{2^{d+2}}+2 &\text{for } \quad \frac{3}{2}2^d\leq k+1< 2^{d+1}
\end{cases} \\
&\leq \frac{11}{8}\log_2 \left(\frac{k+1}{3} \right) +\frac{9}{2}-\frac{9}{4(k+1)}.
\end{align*}
\end{lemma}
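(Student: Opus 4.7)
The plan is to prove the first piecewise inequality by a direct substitution into identity~\eqref{eq:am-MMB}, using the four ingredients already at hand: the formula for $\bar{\sigma}_{\UMMB}(k)$ from Lemma~\ref{lem:aUMMB}, the upper bound on $\bar{r}(k)$ from Lemma~\ref{lem:r}, the exact expression for $\bar{b}(k)$ from Lemma~\ref{lem:a-b}, and the lower bound $\bar{r'}(k) \geq \tfrac{5}{16}$ from Corollary~\ref{cor:rprime}. In each of the two regimes, the coefficient of $d$ adds to $1 + \tfrac{3}{8} = \tfrac{11}{8}$, the terms involving $(k+1)$ collect into a single fraction with denominator $2^{d+1}$ or $2^{d+2}$, and the remaining constants simplify via elementary arithmetic to $\tfrac{1}{16}$ or $2$, yielding the two piecewise bounds claimed.

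For the second inequality, I would introduce $x := (k+1)/2^d \in [1,2)$, so that $d = \log_2(k+1) - \log_2 x$ and the leading term $\tfrac{11}{8}\log_2(k+1)$ cancels on both sides of the target. The problem reduces to a one-variable inequality of the form $\Phi(x) \leq \Psi(x)/2^d$ in each subcase, where $\Psi(x)/2^d$ equals $(10x-9)/(4x\cdot 2^d)$ in the first subcase and $(6x-9)/(4x\cdot 2^d)$ in the second, both non-negative on their respective intervals. For $1 \leq x < \tfrac{3}{2}$, one gets $\Phi(x) = -\tfrac{11}{8}\log_2 x + 2x + \tfrac{11}{8}\log_2 3 - \tfrac{9}{2} + \tfrac{1}{16}$; its derivative $-\tfrac{11}{8x\ln 2} + 2$ is non-negative on $[1,\tfrac{3}{2}]$, and a direct evaluation at the right endpoint gives $\Phi(\tfrac{3}{2}) < 0$, so $\Phi(x) < 0 \leq \Psi(x)/2^d$ throughout. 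For $\tfrac{3}{2} \leq x < 2$, one gets $\Phi(x) = -\tfrac{11}{8}\log_2 x + \tfrac{3x}{4} + \tfrac{11}{8}\log_2 3 - \tfrac{5}{2}$ with $\Phi(\tfrac{3}{2}) = 0$; its derivative $-\tfrac{11}{8x\ln 2} + \tfrac{3}{4}$ is negative on $[\tfrac{3}{2},2]$, so $\Phi$ is decreasing and hence $\Phi(x) \leq 0$, while $\Psi(x)/2^d$ vanishes precisely at $x = \tfrac{3}{2}$ and is positive for $x > \tfrac{3}{2}$, yielding $\Phi(x) \leq 0 \leq \Psi(x)/2^d$.

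The main obstacle is the tightness at $x = \tfrac{3}{2}$ in the second subcase, where both $\Phi$ and $\Psi$ vanish simultaneously, reflecting the equality case $k = \tfrac{3}{2}2^d - 1$ (already witnessed at $k=2$ and $k=5$ by a quick sanity check of the piecewise bound). Because there is no slack to spare at that point, the argument cannot rely on crude estimates such as replacing $\log_2 x$ by $\log_2(3/2)$: it must track the matching sign behavior of $\Phi$ and $\Psi$ via their monotonicity, as above. Apart from this delicate boundary matching, everything else reduces to routine plug-in arithmetic and one-variable calculus on elementary functions.
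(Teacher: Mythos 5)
Your proposal is correct. The first half --- summing the four ingredients from Equation~\eqref{eq:am-MMB}, Lemmas~\ref{lem:aUMMB}, \ref{lem:r}, \ref{lem:a-b} and Corollary~\ref{cor:rprime}, and checking that the $d$-coefficients, the $(k+1)$-terms and the constants collapse to the two claimed expressions --- is exactly what the paper does. For the second inequality you diverge: the paper observes that the piecewise bound $f(k)$ is piecewise linear while the target $g(k)=\frac{11}{8}\log_2(\frac{k+1}{3})+\frac{9}{2}-\frac{9}{4(k+1)}$ is concave, so it suffices to verify $f\leq g$ at the three breakpoints $k^*+1\in\{2^d,\frac{3}{2}2^d,2^{d+1}\}$ (taking one-sided limits where $f$ jumps), which reduces everything to three arithmetic evaluations. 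You instead normalize by $x=(k+1)/2^d\in[1,2)$ so that the $\frac{11}{8}\log_2(k+1)$ terms cancel, and prove $\Phi(x)\leq 0\leq \Psi(x)/2^d$ on each subinterval by monotonicity of $\Phi$; I verified your derivative signs, the value $\Phi(\frac{3}{2})=-\frac{1}{16}$ in the first regime, and the exact vanishing $\Phi(\frac{3}{2})=\Psi(\frac{3}{2})=0$ in the second, all of which check out. Your route needs a bit of single-variable calculus but sidesteps the concavity observation and automatically handles the discontinuity of $f$ at the breakpoints; the paper's route avoids derivatives entirely at the cost of having to identify and evaluate the correct one-sided limits. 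Your remark about the tightness at $x=\frac{3}{2}$ (i.e.\ $k+1=\frac{3}{2}2^d$, witnessed at $k=2,5$) matches the paper's finding that $f(k^*)=g(k^*)$ exactly at that breakpoint, and correctly explains why no crude estimate of $\log_2 x$ can work there.
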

\begin{proof}
Fix values of $k$ and $d$. By Equation~\ref{eq:am-MMB}, Lemmas \ref{lem:aUMMB}, \ref{lem:a-b} and \ref{lem:r}, and Corollary~\ref{cor:rprime},  
\begin{align*}
\bar{\sigma}_{\MMB}(k) &\leq \left( d+\frac{3(k+1)}{2^{d+1}} -2 \right)+\left( \frac{3}{8}d +\frac{k+1}{2^{d+1}}+\frac{1}{8} \right) +\left(\frac{9}{4}-\frac{5}{2^{d+1}}\right)-\frac{5}{16} \\
&= \frac{11}{8}d+\frac{4(k+1)-5}{2^{d+1}}+\frac{1}{16} 
\end{align*}
for $2^d\leq k+1<\frac{3}{2}2^d$; and for $\frac{3}{2}2^d\leq k+1< 2^{d+1}$, we have 
\begin{align*}
\bar{\sigma}_{\MMB}(k) &\leq \left( d+\frac{k+1}{2^{d+1}} -\frac{1}{2} \right)+\left( \frac{3}{8}d +\frac{k+1}{2^{d+2}}+\frac{1}{2} \right) +\left(\frac{37}{16}-\frac{3}{2^{d+1}}\right)-\frac{5}{16} \\
&= \frac{11}{8}d+\frac{3(k+1)-6}{2^{d+2}}+2,  
\end{align*}
proving the first claimed bound, which we name $f(k)$.  
Next, by the fact that $f(k)$ is a piecewise linear function, and the second claimed bound, $g(k)$, is a concave function, it suffices to prove that $f(k)\leq g(k)$ at the values of $k$ where $f(k)$ changes its linear behavior, i.e., at $k^*+1\in\{2^d, \frac{3}{2}2^d, 2^{d+1}\}$. 
At $k^*+1=2^d$, we have that $d=\log_2(k^*+1)$, so 
\begin{align*}
f(k^*)&=\frac{11}{8}d +\frac{4(k^*+1)-5}{2^{d+1}}+\frac{1}{16} = \frac{11}{8}\log_2(k^*+1)+\frac{4(k^*+1)-5}{2(k^*+1)}+\frac{1}{16} \\
&= \frac{11}{8}\log_2(k^*+1) +\frac{33}{16}-\frac{5}{2(k^*+1)} < \frac{11}{8}\log_2 \left(\frac{k^*+1}{3} \right) +\frac{9}{2}-\frac{9}{4(k^*+1)},
\end{align*} 
which is $g(k^*)$, as $\frac{33}{16}< \frac{9}{2}-\frac{11}{8}\log_2 3$ and $\frac{5}{2}>\frac{9}{4}$. 
Next, at $k^*+1=\frac{3}{2}2^d$, $d=1+\log_2\left(\frac{k^*+1}{3}\right)$, and it can be checked that $\lim_{k\rightarrow (k^*)^- }f(k)\leq f(k^*)$, so it suffices to evaluate the second line of function $f(k)$ at $k^*$: 
\begin{align*}
f(k^*)&=\frac{11}{8}d +\frac{3(k^*+1)-6}{2^{d+2}}+2 = \frac{11}{8}\left[1+\log_2\left(\frac{k^*+1}{3}\right)\right]+\frac{3(k^*+1)-6}{\frac{8}{3}(k^*+1)}+2 \\
&= \frac{11}{8}\log_2\left(\frac{k^*+1}{3}\right) +\frac{9}{2}-\frac{9}{4(k^*+1)} = g(k^*). 
\end{align*} 
Finally, at $k^*+1=2^{d+1}$, we have that $d=\log_2(k^*+1)-1$, so 
\begin{align*}
\lim_{k\rightarrow (k^*)^-}f(k)&=\frac{11}{8}d +\frac{3(k^*+1)-6}{2^{d+2}}+2 = \frac{11}{8}\left[\log_2(k^*+1)-1\right]+\frac{3(k^*+1)-6}{2(k^*+1)}+2 \\
&= \frac{11}{8}\log_2(k^*+1) +\frac{17}{8}-\frac{3}{k^*+1} < \frac{11}{8}\log_2 \left(\frac{k^*+1}{3} \right) +\frac{9}{2}-\frac{9}{4(k^*+1)},
\end{align*} 
which is $g(k^*)$, as $\frac{17}{8}<\frac{9}{2}-\frac{11}{8}\log_2(3)$ and $3>\frac{9}{4}$. 
This completes the proof. 
\end{proof}

To conclude the section, we remark that $\MMB$ outperforms $\FMMR$ in terms of amortized proof bounds; see Figure~\ref{fig:amortized}, where we plot in blue function $\bar{\sigma}_{\FMMR}(k)$, given by Corollary~\ref{cor:aFMMR}, and in black the upper bound on function $\bar{\sigma}_{\MMB}(k)$, given by the previous lemma. 
It is clear from the figure that $\bar{\sigma}_{\MMB}(k) \leq \bar{\sigma}_{\FMMR}(k)$ for $k\geq 6$, and a closer inspection reveals that this inequality holds for $k\geq 2$.

\section{Delayed proofs}\label{s:proofs}

\begin{proof}[Proof of Lemma~\ref{lem:Sn}]
The last statement easily follows from the first three. 
We prove the first three simultaneously by induction on $n$, where the base case for $n=1$ can be easily verified. 
For the induction step, we consider whether or not $n+1$ is a power of two. 

If $n+1$ is a power of two, and $t:=\log_2(n+1)$, then by induction hypothesis $S_{n-1}= (t-2, t-3, \cdots, 1, 0)+(1,1, \cdots, 1,1) = (t-1, t-2, \cdots, 2, 1)$. 
When we append the $n$-th item we add a height-0 mountain to the right end of the list, and there are no mergeable mountains, so the merge step is skipped and the new list of mountain heights is $S_n=(t-1, t-2, \cdots, 1, 0)$, which is what we needed to show. 

If $n+1$ is not a power of two, then $t:=\lfloor \log_2(n+1)\rfloor=\lfloor \log_2 n \rfloor$. Let $(b_t\cdots b_1 b_0)_2$ and $(b'_t\cdots b'_1 b'_0)_2$ be the binary representations of $n+1$ and $n$ respectively. 
As $n$ is \emph{not} of the form $2^t-1$, not all bits $b'_i$ are equal to one. 
Let $j$ be the lowest index such that $b'_j=0$: this index is precisely $j=\nu_2(n+1)$. 
We have $S_{n-1}=(b'_{t-1} + t-1, \cdots, b'_1+1, b'_0+0)$ by induction hypothesis. 
Hence, the mountain at position $j$ in $S_{n-1}$ is either at the right end of the list and of height 0 (if $j=0$), or (if $j\geq 1$) it is of the same height $j$ as the mountain to its right, and these two mountains are the rightmost mergeable pair. 
In either case, after we add the hash of the $n$-th item as a height-0 mountain, the previously mentioned mountain merges with the mountain to its right, and they are both replaced by a mountain of height $j+1$. 
As a result, the new list of mountain heights is 
\begin{align*}
S_{n}&=(b'_{r-1} + r-1, \cdots, b'_{j+1} + j+1, b'_j+j+1, b'_{j-2}+j-2, \cdots,  b'_0+0, 0)\\
&=(b'_{r-1}+r-1, \cdots, b'_{j+1}+j+1, j+1, j-1, \cdots, 1, 0)\\
&=(b_{r-1}+r-1, \cdots, b_{j+1}+j+1, b_j+j, b_{j-1}+j-1, \cdots, b_1+1, b_0+0),
\end{align*}
which is what we needed to show. 
\end{proof}

\begin{proof}[Proof of Lemma~\ref{lem:kth}]
Recall from Lemma~\ref{lem:Sn} that if $S_n=(s_{t-1}, \cdots, s_1, s_0)$ is the mountain height sequence, then $s_i\in\{i, i+1\}$ for each $0\leq i<t$. 
Hence, for any $d\geq 1$, the $d$ rightmost mountains contain at least the most recent
$$\sum_{i=0}^{d-1} 2^{s_i} \geq \sum_{i=0}^{d-1} 2^i = 2^d - 1$$
leaves; or conversely, each of these leaves is in one of the $d$ rightmost mountains, and its mountain is of height at most $s_{d-1}\leq d$. 
By setting $d:= \lceil \log_2 (k+1) \rceil = \lfloor \log_2 k \rfloor+1$, we ensure that the $k$-th most recent leaf is among these leaves.
\end{proof}

\begin{proof}[Proof of Lemma~\ref{lem:recent}]
Consider a scheme that observes recent-proof compatibility with parameter $c>0$, 
and a membership proof $\pi_{x_i\in X_m}$, $0<i\leq m$, that will need to be updated $t$ times during an upcoming $k$-increment, at values $m_0:=m<m_1<m_2<\cdots<m_t\leq m+k$. 
By hypothesis, we have that $m_{j+1} > m_j + c(m_j-i+1)$, or equivalently,
    $$(m_{j+1} - i+1) > (1+c)(m_j-i+1), \quad \text{for each } 0\leq j<t.$$
We obtain therefore that $(m_t - i+1) > (1+c)^{t}(m_0-i+1)$. Solving for $t$: 
    $$t< \log_{1+c} \frac{m_t-i+1}{m_0-i+1} \leq \log_{1+c} \frac{m+k-i+1}{m-i+1} \leq \log_{1+c} (k+1) = O(\log k),$$ 
where we applied the bounds $m_t\leq m+k$ and $m-i\geq 0$. 
This proves low update frequency.
\end{proof}

\begin{proof}[Proof of Lemma~\ref{lem:hash-F}] 
We need at most one hash computation in the bottom layer, namely the merge peak (if there is a merge step). 
There are $t=\lfloor \log_2(n+1) \rfloor$ peaks, by Lemma~\ref{lem:Sn}, and an equal number of range nodes in F-MMB, so there are at most $t$ hash computations needed in the top layer. 
And again by Lemma~\ref{lem:Sn}, when appending the $n$-th item the merge peak is at position $j=\nu_2(n+1)$ in the peak list, when counting from zero from the right. 
Hence, exactly $\nu_2(n+1)+1$ range nodes need to be created or updated, and this expression has a long-term average value of $2$ by Lemma~\ref{lem:nu}. 
This completes the proof.
\end{proof}

\begin{proof}[Proof of Lemma~\ref{lem:hash-d}] 
We assume the worst (and most common) case where a merge takes place; hence we need one hash computation in the bottom layer, namely the merge peak.

If $n$ is odd, the new leaf is the only zero-height mountain, so it does not participate in the merge.
By Lemma~\ref{lem:close}, the merge peak is at the end of the second-to-last range, while all mountains to its right --~including the new leaf~-- form a single range. 
So we need to update one range root and one belt node above each of the merge peak and the new leaf, giving us $4$ hash computations in the bagging layers, and $5$ in total. 

If $n$ is even, the new leaf is not the only zero-height mountain, so it participates in the merge. 
This merge peak is at the very end of the list, so we only need to update the range and belt nodes above it, giving us $2$ hash computations in the bagging layers, and $3$ in total. 

We conclude that on average we need $4$ hash computations in total.
\end{proof}

\begin{proof} [Proof of Lemma~\ref{lem:nu}] 
Fix a bound $N$. We can convert the sum $\sum_{n=1}^N \nu_2(n)$ into a product: 
$$\sum_{n=1}^N \nu_2(n) = \nu_2\left(\prod_{n=1}^N n\right)=\nu_2(N!), $$ 
because $\nu_2(n)$ is a completely additive function. Using Legendre's formula, we obtain
$$\nu_2(N!)=\sum_{i=1}^\infty \lfloor N/2^i \rfloor = \sum_{i=1}^\infty N/2^i - O(\log N)=N-O(\log N).$$
Thus, the average value is $\frac{1}{N}\sum_{n=1}^N \nu_2(n)=1-O(\frac{\log N}{N})\rightarrow 1$, which proves the claim.
\end{proof}

\begin{proof}[Proof of Lemma~\ref{lem:diffU}] 
From Lemmas \ref{lem:aUMMR} and \ref{lem:aUMMB}, it can be checked that the difference of functions $\bar{\sigma}_{\UMMR}(k)-\bar{\sigma}_{\UMMB}(k)$ is $\frac{1}{2}$ for $k=1$ and $\frac{3}{4}$ for $k=2$, which observe the claimed inequality. 
For any integer $d\geq 2$, it can also be checked that for $k=2^d-1$ and $k=2^d$, this difference is $1+\frac{k-3}{2^{d+1}}$, which observes the claimed inequality loosely. 

For any other case (i.e., $k\geq 5$ and neither $k$ nor $k+1$ is a power of $2$), if $d=\lfloor \log_2 (k+1) \rfloor$ and $d'=\lceil \log_2 k \rceil$ then $d'= d+1$, and  
\begin{align*}
\bar{\sigma}_{\UMMR}(k) - \bar{\sigma}_{\UMMB}(k)&= \begin{cases}
    2- \frac{k+3}{2^{d+1}} & \text{for} \quad 2^d\leq k+1< \frac{3}{2} 2^d \\
    \frac{1}{2}+\frac{k-1}{2^{d+1}} & \text{for} \quad \frac{3}{2} 2^d\leq k+1< 2^{d+1} . 
\end{cases} 
\end{align*} 
In the first case, condition $ k+1< \frac{3}{2} 2^d$ yields inequality $-\frac{1}{2^{d+1}}< -\frac{3}{4(k+1)}$, so 
$$2- \frac{k+3}{2^{d+1}}< 2-\frac{3(k+3)}{4(k+1)}=\frac{5}{4}-\frac{3}{2(k+1)}.$$ 
Similarly, in the second case, condition $k+1\geq \frac{3}{4} 2^{d+1}$ yields inequality $\frac{1}{2^{d+1}}\geq \frac{3}{4(k+1)}$, so 
$$\frac{1}{2}+ \frac{k-1}{2^{d+1}}\geq \frac{1}{2}+\frac{3(k-1)}{4(k+1)}=\frac{5}{4}-\frac{3}{2(k+1)}.$$ 
This completes the proof.
\end{proof}

\begin{proof}[Proof of Lemma~\ref{lem:rnn}]
We prove the claims by induction on $t$, where the base case $t=0$ corresponds to values $r(3,3)=r(4,4)=r(6,6)=2$ and $r(5,5)=1$, which can be verified by hand. 
Now fix an integer $t\geq 1$, consider a value of $n$ with $4\cdot 2^t\leq n+1<8\cdot 2^t$, and let $(b_{t+2}b_{t+1}b_{t}\cdots b_0)$ be the binary representation of $n+1$. 

Case 1: $4\cdot 2^t \leq n+1 < 5\cdot 2^t$. 
The highest bits in $n+1$ are $(b_{t+2}b_{t+1}b_{t})=100$, so the two leftmost mountains $M$ and $M'$ are in the same range and have heights $t+1$ and $t$, respectively. 
Their range contains $r(n,n)=1+r(n',n')$ mountains, where $n'=n-2^{t+1}$, because the $\MMB$ structure for $|X|=n'$ is exactly the same as that for $|X|=n$, except that $M$ is removed, leaving $M'$ as the leftmost mountain. 
By induction hypothesis, 
\begin{align*}
    \sum_{n=4\cdot 2^t-1}^{5\cdot 2^t-2} r(n,n) &= \sum_{n=4\cdot 2^t-1}^{5\cdot 2^t-2} 1+r(n-2^{t+1},n-2^{t+1}) \\
    &= 2^t + \sum_{n=4\cdot 2^{t-1}-1}^{5\cdot 2^{t-1}-2} r(n,n) + \sum_{n=5\cdot 2^{t-1}-1}^{6\cdot 2^{t-1}-2} r(n,n) \\
    &= 2^t + (4\cdot 2^{t-1} -2) + (2\cdot 2^{t-1}) \\
    &= 4\cdot 2^t -2.
\end{align*}

Case 2: $5\cdot 2^t \leq n+1 < 6\cdot 2^t$. 
The highest bits in $n+1$ are $(b_{t+2}b_{t+1}b_{t})=101| $, so the leftmost range consists of two mergeable mountains of height $t+1$. Hence, $r(n,n)=2$. 

Case 3: $6\cdot 2^t \leq n+1 < 7\cdot 2^t$. 
The highest bits in $n+1$ are $(b_{t+2}b_{t+1}b_{t})=11|0 $, so the leftmost range consists of a single mountain of height $t+2$. Hence, $r(n,n)=1$. 

Case 4: $7\cdot 2^t \leq n+1 < 8\cdot 2^t$. 
The highest bits in $n+1$ are $(b_{t+2}b_{t+1}b_{t})=111 $, so the two leftmost mountains $M$ and $M'$ are in the same range and have heights $t+2$ and $t+1$, respectively. 
Their range contains $r(n,n)=1+r(n',n')$ mountains, where $n'=n-2^{t+2}$, because the $\MMB$ structure for $|X|=n'$ is exactly the same as that for $|X|=n$, except that $M$ is removed, leaving $M'$ as the leftmost mountain. By induction hypothesis, 
\begin{align*}
    \sum_{n=7\cdot 2^t-1}^{8\cdot 2^t-2} r(n,n) &= \sum_{n=7\cdot 2^t-1}^{8\cdot 2^t-2} 1+r(n-2^{t+2},n-2^{t+2}) \\
    &= 2^t + \sum_{n=6\cdot 2^{t-1}-1}^{7\cdot 2^{t-1}-2} r(n,n) + \sum_{n=7\cdot 2^{t-1}-1}^{8\cdot 2^{t-1}-2} r(n,n) \\
    &= 2^t + (1\cdot 2^{t-1}) + (3\cdot 2^{t-1} -1)  \\
    &= 3\cdot 2^t -1, 
\end{align*}
as claimed. This completes the proof.
\end{proof}

\begin{proof}[Proof of Lemma~\ref{lem:leftmost}]
See Figure~\ref{fig:leftmost}. 
We rephrase the statements by first fixing $d$, then $n=|X|$, and then letting $k$ range within the corresponding interval. 
We select longer intervals for $k$ than necessary, thus making stronger claims. 
Fix an integer $d\geq 1$: 
\begin{itemize}
    \item Claim 1.A: For $2^d\leq n+1<  3\cdot 2^d$, the $k$-th newest leaf is \textit{in the (globally) leftmost mountain} as long as $\max \{2^d-2, n-2^d\}< k\leq \min\{2\cdot 2^d-2, n\}\leq n$, i.e., the leftmost mountain contains at least $n-\max\{2^d-2, n-2^d\}=\min\{n+2-2^d, 2^d\}$ leaves. 
    \item Claim 1.B: For $3\cdot 2^d\leq n+1< \frac{7}{2} 2^d$, the $k$-th newest leaf is \textit{in the leftmost mountain of a range} as long as $n-\frac{5}{2}2^{d}\leq 2^d-2<k\leq n$, i.e.,  the $\frac{5}{2}2^d$ leftmost leaves are contained in mountains that are leftmost within their ranges.
    \item Claim 1.C: For $5\cdot 2^d\leq n+1< 7\cdot 2^d$, the $k$-th newest leaf is \textit{in the leftmost mountain of a range} as long as $\max\{2^d-2, n-5\cdot 2^d\}<k\leq \min\{2\cdot 2^d-2, n-4\cdot 2^d\}\leq n-4\cdot 2^d$, i.e., apart from the $4\cdot 2^d$ leftmost leaves, the next $\min\{n-5\cdot 2^d+2, 2^d\}$ leaves are contained in mountains that are leftmost within their ranges. 
    \item Claim 2.A: For $2\cdot 2^d\leq n+1< \frac{7}{2}2^d$, the $k$-th newest leaf is \textit{in the leftmost range} as long as $\max \{2^d-2, n-2\cdot 2^{d}\}< k\leq \min\{2\cdot 2^d-2,n\}\leq n$, i.e., the leftmost range contains at least $\min\{n+2-2^d, 2\cdot 2^{d}\}$ leaves. 
    \item Claim 2.B: For $\frac{7}{2}2^d\leq n+1 < 6\cdot 2^d$, the $k$-th newest leaf is \textit{in the leftmost range} as long as $\max\{\frac{3}{2}2^d-2, n-4\cdot 2^{d}\}< k\leq 2\cdot 2^d-2\leq n$, or in other words, the leftmost range contains at least $\min\{ n+2-\frac{3}{2}2^d, 4\cdot 2^{d}\}$ leaves. 
\end{itemize}

To prove these claims, we fix a value of $d\geq 1$, consider several sub-intervals for $n$, and follow the four cases of the proof of Lemma~\ref{lem:rnn}, for specific values of $t$.

For $2^d\leq n+1< \frac{3}{2}2^d$ ($t=d-2$, cases 1 and 2), the number of leaves in the leftmost mountain is $2^{t+1}=2^{d-1}\geq n+2-2^d$ (because $n+1<\frac{3}{2}2^d$), proving claim 1.A. 

For $\frac{3}{2}2^d \leq n+1< 2\cdot 2^d$ ($t=d-2$, cases 3 and 4), the number of leaves in the leftmost mountain is $2^{t+2}=2^d \geq n+2-2^d$ (because $n+1<2\cdot 2^d$), proving claim 1.A. 

For $2\cdot 2^d \leq n+1< \frac{5}{2} 2^d$ ($t=d-1$, case 1), the leftmost mountain has $2^{t+1}=2^d$ leaves, proving claim 1.A. 
And the leftmost range has at least two mountains with $2^{t+1}+2^t=\frac{3}{2}2^d$ leaves, where $\frac{3}{2}2^d\geq n+2-2^d$ (because $n+1<\frac{5}{2}2^d$), proving claim 2.A. 

For $\frac{5}{2}2^d \leq n+1 < 3\cdot 2^d$ ($t=d-1$, case 2), the leftmost mountain has $2^{t+1}=2^d$ leaves, proving claim 1.A. 
And the leftmost range has at least two mountains with $2^{t+1}+2^{t+1}=2\cdot 2^{d}$ leaves, proving claim 2.A. 

For $3\cdot 2^d\leq n+1 < \frac{7}{2}2^d$ ($t=d-1$, case 3), the leftmost range consists of a single mountain with  $2^{t+2}=2\cdot 2^{d}$ leaves, proving claim 2.A. 
And the mountain to its right, which is leftmost within its range, contains at least $2^t=\frac{1}{2}2^t$ more leaves, proving claim 1.B. 

For $\frac{7}{2}2^d\leq n+1 < 4\cdot 2^d$ ($t=d-1$, case 4), the leftmost range has at least two mountains with $2^{t+2}+2^{t+1}=3\cdot2^d$ leaves, where $3\cdot 2^d > n+2-\frac{3}{2}2^d$, proving claim 2.B.

For $4\cdot 2^d \leq n+1 < \frac{9}{2}2^d$ ($t=d$, case 1), the leftmost range has at least two mountains with $2^{t+1}+2^{t}=3\cdot2^d$ leaves, where $3\cdot 2^d \geq n+2-\frac{3}{2}2^d$ (as $n+1< \frac{9}{2}2^d$), proving 2.B.

For $\frac{9}{2}2^d \leq n+1 < 5\cdot 2^d$ ($t=d$, case 1), the  highest bits in $n+1$ are $(b_{t+2}b_{t+1}b_tb_{t-1})=1001$, so the leftmost range has three mountains with $2^{t+1}+2^t+2^t=4\cdot 2^{d}$ leaves, proving 2.B.  

For $5\cdot 2^d\leq n+1<\frac{11}{2} 2^d$ ($t=d$, case 2), the leftmost range consists of two mountains with $2^{t+1}+2^{t+1}=4\cdot 2^{d}$ leaves, proving claim 2.B. 
And the mountain to their right, which is leftmost within its range, contains at least $2^{t-1}=\frac{1}{2}2^d\geq n-5\cdot 2^d+2$ (because $n+1<\frac{11}{2}2^d$), proving claim 1.C.

For $\frac{11}{2} 2^d\leq n+1< 6\cdot 2^d$ ($t=d$, case 2),  the highest bits in $n+1$ are $b_{t+2}b_{t+1}b_tb_{t-1}=1011$, so the leftmost range consists of two mountains with $2^{t+1}+2^{t+1}=4\cdot 2^{d}$ leaves, proving claim 2.B. 
And the mountain to their right, which is leftmost within its range, contains $2^{t}=2^d$ leaves, proving claim 1.C. 

For $6\cdot 2^d\leq n+1<7\cdot 2^d$ ($t=d$, case 3), the leftmost range consists of a single mountain with $2^{t+2}=4\cdot 2^d$ leaves. And the mountain to its right, which is leftmost within its range, contains at least $2^t=2^d$ leaves, proving claim 1.C. 
This completes the proof.
\end{proof}

\end{document}